\newcommand{\OUT}[1]{}
\newcommand{\eGDN}{eGDN}
\newcommand{\powset}[1]{\mathcal{P}(#1)}
\newcommand{\citeOwn}[1]{\textbf{\cite{#1}}}
\newcommand{\requirement}[1]{\textbf{R\,#1}\xspace}
\newcommand{\rot}[1]{\multicolumn{1}{c}{\adjustbox{angle=60,lap=\width-1em}{#1}}}
\newtheorem{theorem}{Theorem}
\newtheorem{corollary}{Corollary}
\begin{document}

\title{Modular and Incremental Global Model Management with Extended Generalized Discrimination Networks}

\author{Matthias Barkowsky \\ matthias.barkowsky@hpi.de \and Holger Giese \\ holger.giese@hpi.de}

\maketitle

\begin{abstract}
Complex projects developed under the paradigm of model-driven engineering nowadays often involve several interrelated models, which are automatically processed via a multitude of model operations. Modular and incremental construction and execution of such networks of models and model operations are required to accommodate efficient development with potentially large-scale models. The underlying problem is also called Global Model Management.

In this report, we propose an approach to modular and incremental Global Model Management via an extension to the existing technique of Generalized Discrimination Networks (GDNs). In addition to further generalizing the notion of query operations employed in GDNs, we adapt the previously query-only mechanism to operations with side effects to integrate model transformation and model synchronization. We provide incremental algorithms for the execution of the resulting extended Generalized Discrimination Networks (\eGDN{}s), as well as a prototypical implementation for a number of example \eGDN{} operations.

Based on this prototypical implementation, we experiment with an application scenario from the software development domain to empirically evaluate our approach with respect to scalability and conceptually demonstrate its applicability in a typical scenario. Initial results confirm that the presented approach can indeed be employed to realize efficient Global Model Management in the considered scenario.
\end{abstract}


\section{Introduction}

Complex projects developed under the model-driven engineering paradigm nowadays often involve several interrelated models, which are inspected, analyzed, transformed, and synchronized via a multitude of model operations \citeOwn{seibel2010dynamic}\footnote{Note that references in bold refer to our own publications.}. An effective and efficient management of the resulting sophisticated networks of model operations is both a crucial prerequisite to successful development projects and a challenging research problem, known as Global Model Management \cite{bezivin2004modeling}.

On the one hand, modular and incremental construction of model operation networks is required in the context of project landscapes that evolve to accommodate dynamic development processes and changing requirements. On the other hand, in order to scale to today's potentially large models and allow development in teams, modular and incremental execution of these networks is required, as full re-execution of the entire network in reaction to changes may result in unacceptable execution times and loss of information \citeOwn{giese2009model}.

In this context, model queries, due to being explicitly and implicitly required by model properties and model consistency checks respectively model transformations and model synchronizations, play a central role. Solutions thus have to offer dedicated support for handling potentially complex model queries and facilitate their modular composition and reuse.

Furthermore, model operations with side-effects, such as model transformation and synchronization, and their interaction with other model operations pose a unique challenge regarding the overall goal of guaranteeing the consistency of a system description that may be distributed over multiple models. 

In this report, we propose an approach to Global Model Management that specifically aims to provide both the required modularity and incrementality. Our solution is based on an extended notion of Generalized Discrimination Networks \cite{hanson2002trigger}, a mechanism that has previously been implemented in the context of model driven engineering \citeOwn{beyhl2016operationalization} to allow a modular and incremental specification and execution of model queries in the form of nested graph conditions \cite{habel2009correctness}.

Therefore, we introduce a more general formalization called \emph{extended Generalized Discrimination Networks (\eGDN{}s)}, which (i) supports a more flexible notion of model queries, affording increased expressiveness and (ii) allows the integration of model operations with side effects into the unifying framework. In addition, we provide algorithms for the incremental execution of \eGDN{}s.

Furthermore, we integrate a number of typical model operations into a prototypical implementation of the approach and use this implementation to perform an initial evaluation of our technique's scalability using an application scenario from the software development domain. This empirical evaluation is complemented by a conceptual evaluation regarding the applicability of \eGDN{}s in a typical scenario.

The remainder of the report is structured as follows: We briefly reiterate the basic concepts of models in the form of typed graphs and discrimination networks in Chapter \ref{sec:preliminaries}. After introducing the required concepts, we discuss requirements of a solution for global model management and related work in Chapter \ref{sec:requirements}, providing further motivation for the design of a new solution. Our contribution in the form of extended Generalized Discrimination Networks is presented in Chapters \ref{sec:egdns_definition}, \ref{sec:egdns_incremental}, and \ref{sec:egdns_implementation}. Therefore, Chapter \ref{sec:egdns_definition} provides a definition of \eGDN{}s along with a graphical notation. Chapter \ref{sec:egdns_incremental} describes the incremental execution of \eGDN{}s. Chapter \ref{sec:egdns_implementation} then lists a number of examples for \eGDN{} operations that are part of our prototypical implementation. This prototypical implementation is used to perform an initial empirical evaluation of the presented concepts, which is presented in Chapter \ref{sec:evaluation} along with a conceptual evaluation of the applicability of \eGDN{}s to an example use case. Finally, Chapter \ref{sec:conclusion} concludes the report and gives an overview of possible directions for future work.


\section{Preliminaries}\label{sec:preliminaries}

In this chapter, we reiterate the basic notions of models in the form of typed graphs and discrimination networks.

\subsection{Graphs and Models}

A graph $G = (V^G, E^G, s^G, t^G)$ consists of a set of vertices $V^G$, a set of edges $E^G$, and two functions $s^G, t^G: E^G \rightarrow V^G$ assigning each edge its source respectively target vertex \cite{Ehrig+2006}. A graph morphism $m: G \rightarrow H$ between graphs $G$ and $H$ is a pair of functions $m^V : V^G \rightarrow V^H, m^E: E^G \rightarrow E^H$ such that $s^H \circ m^E = m^V \circ s^G$ and $t^H \circ m^E = m^V \circ t^G$.

A graph $G$ can be typed over a type graph $TG$ via a morphism $type^G: G \rightarrow TG$ that assigns elements from $G$ types defined in $TG$. This yields a typed graph $G^T = (G, type^G)$. A typed graph morphism $m^T : G^T \rightarrow H^T$ between two typed graphs $G^T = (G, type^G)$ and $H^T = (H, type^H)$ typed over the same type graph $TG$ is given by a graph morphism $m: G \rightarrow H$ with $type^G = type^H \circ m^T$.

In the context of this report, a model is then characterized by a typed graph, where the type graph effectively acts as a metamodel. Importantly, attributes for model elements can be realized in the framework of typed graphs by simply modeling attribute values as dedicated nodes, which leads to the notion of typed attributed graphs \cite{heckel2002confluence}. A modeling language $ML$ is defined by a graph $TG$ and denotes the set of all possible graphs typed over $TG$.

Figure \ref{fig:sample_graph} shows an example model from the software development domain in the form of a typed graph $G$, and the associated metamodel in the form of the type graph $TG$, with the typing morphism given by node labels in case of nodes and implicitly in case of edges. The example model represents the abstract syntax graph (ASG) of a program written in an object-oriented programming language. Nodes in the model represent packages, types, and methods. Edges represent containment relationships between the different concepts, with methods contained in types and types contained in packages, and return type relationships between methods and types.

\begin{figure}
\centering
\includegraphics[width=\textwidth]{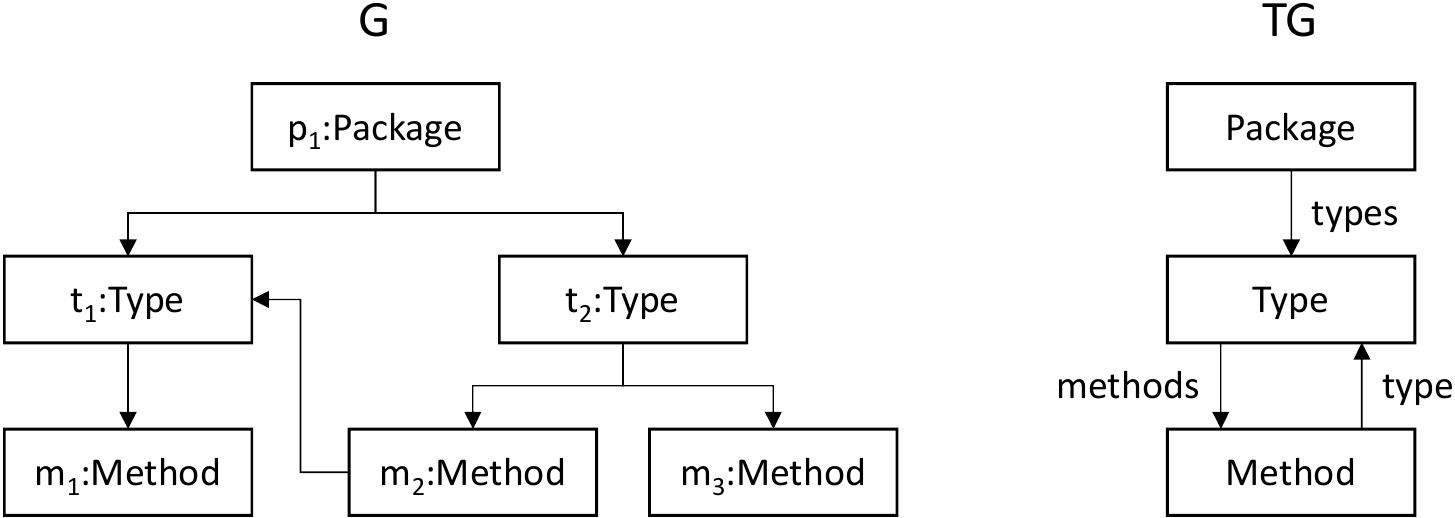}
\caption{Example model and metamodel in the form of typed graph and type graph from the software development domain} \label{fig:sample_graph}
\end{figure}

\subsection{Discrimination Networks} \label{sec:discrimination_networks}

A discrimination network is a graph of nodes representing computation units and edges representing dependencies between these units. Discrimination networks are a popular solution for the incremental execution of model queries such as the computation of model properties or the checking of model consistency conditions. Therefore, the model query is decomposed into subqueries, which form the discrimination network's nodes.

The execution of a subquery can make use of the results computed for another subquery, which is indicated by a dependency relation between the two subqueries. The execution of a final discrimination network node yields the overall query result. By storing the results of discrimination network nodes beyond the execution of a query, incremental execution that reuses previously computed results in subsequent executions is enabled.

Since discrimination networks so far are primarily employed for model querying, current approaches offer only limited or no support for the integration of model operations with side-effects and thus constitute at best a partial solution for global model management. However, due to their inherent support for modularity and incrementality, they offer a promising starting point.

There exist different realizations of the concept of discrimination networks in the context of model driven engineering, two of which will be briefly presented in the following subsections.

\subsubsection{RETE nets}

RETE nets were initially introduced by Forgy \cite{forgy1989rete} and are characterized by the fact that nodes are only allowed to have dependencies to at most two other nodes. Some examples of RETE nodes are:
\begin{itemize}
\item \emph{input nodes}, which correspond to primitive model queries that extract individual elements, that is, nodes or edges, from a model, and consequently have no dependencies
\item \emph{filter nodes}, which filter the results of some other subquery by a condition and consequently have one dependency
\item \emph{join nodes}, which combine the results of two other subqueries into results for a more complex subquery and consequently have two dependencies
\end{itemize}
While the listed node types form the core of incremental model querying solutions such as the well-established VIATRA \cite{varro2016}, RETE nets are a flexible mechanism that allows a multitude of other query-related node types. This is illustrated by VIATRA's support for various advanced constructs for specifying model queries, including negative patterns and certain aggregation operations.

In RETE implementations, results computed by a RETE net's nodes are usually stored in memory in so-called indexers, which act as implicit interfaces between computation nodes. These indexers can also be made explicit by modeling them as part of the RETE net via a different kind of RETE node that is not associated with any computational functionality, but only serves as a storage for other nodes' results.

\subsubsection{Generalized Discrimination Networks}

Generalized Discrimination Networks (GDNs) are a less restrictive form of discrimination networks than RETE nets and were developed by Hanson et al. \cite{hanson2002trigger}. Essentially, GDNs drop the limit on the number of a node's dependencies of RETE nets and thereby allow for more control over which intermediate query results are to be stored in memory.

A realization in the context of model querying was presented in \citeOwn{beyhl2016operationalization}. It implements GDN nodes as model transformation rules that create marking elements for subquery results directly as part of the queried model. Dependencies between nodes are realized by considering marking elements created by the required node in the transformation rule associated with the dependent node. However, while the approach in \citeOwn{beyhl2016operationalization} is based on a fairly expressive notion of queries in the form of nested graph conditions, certain query-related operations such as aggregation are not supported by the underlying formalism.


\section{Requirements for Global Model Management}\label{sec:requirements}

Nowadays the development of complex systems with models requires \emph{Global Model Management}
(\emph{GMM})~\cite{BJRV05,Favre04foundationsof} to ensure that the models of
different subsystems, of different views, and of different domains are properly
combined, even though the models might reside at different levels of
abstraction. Indeed, due to the heterogeneity and complexity of systems such as
Cyber-Physical Systems (CPS), it is no longer feasible to represent the system
as a Single Underying Model (SUM). This is because numerous languages and tools
are already employed independently by domain experts collaborating to build the
system. Redeveloping these tools and thus requiring industry to change its
practices is not conceivable given the required development efforts, but also
the strong resistance to change development processes. This is especially
relevant in the case of safety-critical systems that must undergo complex
certification processes. Therefore, many models must be used to represent the
system and adequate GMM is required to ensure that the development activities
that operate on the models are properly coordinated such that the models lead to
a proper system as a whole, where the different elements and aspects covered by
the different models are correctly integrated and are consistent with each
other.

A classification of model integration problems and fundamental integration
techniques has  been introduced in~\citeOwn{GieseNNS2011}. It highlights  the techniques of decomposition and enrichment, which characterize two orthogonal dimensions of development where the system is decomposed into subsystems and domains (\emph{horizontal} dimension) and into a set of models with increasing level of details (\emph{vertical} dimension). This requires coordinating all activities operating on the models across these dimensions to ensure their consistency. 
%

The development activities for nowadays complex systems are spread across multiple domains and teams, where each team is using its own set of modeling languages thus requiring proper \emph{integration} of these languages. Indeed, it has been shown that using a single language to cover all domains would lead to very large monolithic languages not easily customizable for the development environments and tools needed by development organizations. These considerations lead to \emph{Multi-Paradigm Modeling} (\emph{MPM})~\cite{VangheluweAIS2002}, which advocates the integration of reusable \emph{modular} modeling languages instead  of large monolithic languages.
Hence, GMM must support integrating with appropriate modularity not only models but also their \textit{modeling languages} (hereafter \emph{modeling language integration}), in addition to coordinating all activities
operating on the models and specified as \textit{model operations / transformations}. The execution of these model operations has to be \emph{scalable} for being able to handle large models.  This requires \emph{incrementality}, where only the  operations impacted by a model change are re-executed, thus avoiding the effort to recompute entire models as in the case of incremental code compilers.

GMM is also known as \emph{modeling-in-the-large}, which consists of establishing global relationships (e.g. model operations that generated one model from other models) between macroscopic entities (models and metamodels) while
ignoring the internal details of these entities~\cite{BJRV05}. \emph{Megamodeling}~\cite{Bezivin+2004,Favre04foundationsof} has been introduced for the purpose of describing these macroscopic entities and their relations.

Consequently, for modular and incremental global model management solutions for the modular and incremental construction and execution of I) models and modeling languages integration, II) model operations, and III) megamodels are required. 
We will outline in the following that nowadays only preliminary approaches exist that provide ad hoc
solutions for fragments of the sketched problem and that a solid understanding of the
underlying needs and challenges is currently lacking. 
In particular, the current approaches do at most offer some modularity and/or incrementality for a single aspect as modeling languages integration or model operations. However, support for handling complex modeling landscapes as a whole in a modular and incremental fashion as required for the large-scale problems that exist in practice is not offered so far. 

In the following, we will discuss the needs in more detail and review how far existing solutions that address the construction and execution of 1) models and modeling languages integration, 2) model operations, and 3) megamodels. The way the existing approaches perform along these dimensions is depicted in Table~\ref{tab:eval-approaches}, where an empty cell identifies a need that is not addressed, a \textasciitilde \ denotes partial fulfilment of the need and a + indicates that the need is  addressed sufficiently\footnote{For convenience, we use the name of the tool or project to identify an approach when it exists, otherwise the name of the first author of the publication describing the approach is used.}. This evaluation is  discussed in further details in the following sections.


\newcommand{\HC}{\cellcolor[gray]{0.8}}
\newcommand{\HCS}{\cellcolor[gray]{0.5}}
\begin{table}[!ht]
\tiny
\centering
\scalebox{0.80}{
\begin{tabular}{|p{2.5 cm}||c|c|c|c|c||c|c|c|c||c|c|c|c|}
        \hline
        \textbf{Approach} & \multicolumn{5} {c|} {\textbf{Modeling Languages Integration}} &
        \multicolumn{4} {c|} {\textbf{Model Operations}}  & \multicolumn{4} {c|}
        {\textbf{Megamodels}} \\
        \hline
        & \multicolumn{3} {c|} {\emph{Const.}} & \multicolumn{2} {c|} {\emph{Exec.}}
        & \multicolumn{2} {c|} {\emph{Const.}} & \multicolumn{2} {c|} {\emph{Exec.}} &
        \multicolumn{2} {c|} {\emph{Const.}} & \multicolumn{2} {c|} {\emph{Exec.}} \\
        \hline
        & \emph{Links} & \emph{Int.} & \emph{MMI} & \emph{Batch} & \emph{Inc.} &
        \emph{Flow} & \emph{Ctx.} & \emph{Batch} & \emph{Inc.} & \emph{Mon.} &
        \emph{Mod.} & \emph{Batch} & \emph{Inc.}
        \\
        \hline
        \hline
        \multicolumn{14} {|c|} {\emph{Modeling Languages Integration}} \\
        \hline
        \hline
        Blanc et al.~\cite{1573497} & & & & + & \HCS + & & & & & & & & \\
        \hline
        EMF IncQuery~\cite{EMF-IncQuery-website, UjhelyiSCP2015} & & & & + &
        \HCS + & & & & & & & &
         \\
        \hline
        Egyed et al.~\cite{GroherFASE2010, Egyed2006} & & & & + & \HCS + &
        & & & & & & &
         \\
        \hline
        Cabot et al.~\cite{CT06} & & & & + & \HCS+ &
        & & & & & & &
         \\
        \hline
        ACOL~\cite{LangsweirdtISORC2010} & & \HC $\sim$ & & + & & & & & &
        & & & \\
        \hline
        SmartEMF~\cite{LH09, HesselundMODELS2008, smartemf-website} & + & \HCS +
        & & & & & & & & & & & \\
        \hline
        Composite EMF Models~\cite{JurackICGT2010, compositeemfmodels-website}
        & + & \HCS + & & & & & & & & & & & \\
        \hline
        EMF Views~\cite{emfviews-website, BrunelierePWC15} & + & & \HC $\sim$ & + & & & & & &
        & & & \\
        \hline
        Kompren~\cite{ABlouinSoSyM2015, kompren-website} / Kompose~\cite{FleureyMSE2008, kompose-website} & & & \HC $\sim$ & & & & & & & & & & \\
        \hline
        Reuseware ModelSoc \cite{Johannes2010, ModelSoc-website} & & \HC $\sim$
        & \HC $\sim$ & & & & & & & & & & \\
        \hline
        Ratiu et al.~{\cite{ractiu2022reactive}} &
        + & & & \HC $\sim$ & \HC $\sim$ &
        & & & &
        & & & \\
        \hline
        K\"onig et al.~{\cite{konig2017efficient}} &
        + & & & + & &
        & & & &
        & & & \\
        \hline
        \hline
        \multicolumn{14} {|c|} {\emph{Model Operations}} \\
        \hline
        \hline
        Wires*~\cite{Wires-09, wires-website}
        & & & & & & + & & + & & & & & \\
        \hline
        ATL Flow~\cite{atlflow-website} & & & & & & + &  & + & & & & & \\
        \hline
        Epsilon~\cite{PaigeIECCS2009, epsilon-website}  & + & & & & & + & &
        + & & & & & \\
        \hline
        Gaspard2~\cite{EtienSAC2010, gaspard2-website} & & & \HC $\sim$ & &
        & + & \HC $\sim$ & + & \HC $\sim$ & & & &
        \\
        \hline
        Debreceni et al.~\cite{DebreceniFASE2014} & & & & &
        & + & \HC $\sim$ & + & \HCS + & & & &
        \\
        \hline
        \textbf{MoTCoF}~\cite{SHNG11} & & & & & & + & \HCS + & + &
        \HC $\sim$ & & & & \\
        \hline
        \textbf{MoTE}~\cite{GNH10}\cite{mote-website} & + & & & & & & & +
        & \HCS + & & & & \\
        \hline
        \hline
        \multicolumn{14} {|c|} {\emph{Integration Languages and Others}} \\
        \hline
        \hline
        CyPhy~\cite{SimkoASME2012, gme-website} & + & & & & & + & & + & & &
        & & \\
        \hline
        FUSED~\cite{BoddyAVICPS2011, fused-website} & + & \HC $\sim$ & & & & + & & + &
        & & & & \\
        \hline
        CONSYSTENT~\cite{HerzigProcCS2014, HerzigMoDeVVa2014} & + & & & & &
        + & & + & & & & & \\
        \hline
        \hline
        \multicolumn{14} {|c|} {\emph{Megamodels}} \\
        \hline
        \hline
        AM3~\cite{Vignaga_et_al:2013, am3-website} & + & & & & & + & & + & &
        + & & \HC $\sim$ & \\
        \hline
        FTG+PM~\cite{Lucio_et_al:2013, atompm-website} & & & & & & + & & + &
        & + & & + & \\
        \hline
        MegaL Exp.~\cite{Favre:2012:MLA:2404962.2404978} & + & & & & & + &
        & & & + & & & \\
        \hline
        \textbf{GMM*}~\cite{BlouinGemoc2014} & + & & \HC $\sim$ & & & + & &
        + & \HC $\sim$ & + & & + & \HC $\sim$ \\
        \hline
        \textbf{Seibel et al.}~\cite{SNG10, SHG12}\cite{Beyhl_et_al:2013} & + & & & &
        & & & & & + & & + & \HC $\sim$ \\
        \hline
        Stevens~{\cite{stevens2020maintaining,stevens2020connecting}} &
        & & & & &
        + & & + & + &
        + & & + & + \\
        \hline
        Gleitze et al.~{\cite{gleitze2021finding}} &
        & & & & &
        + & & + & + &
        + & & + & + \\
        \hline
        Vitruvius~{\cite{klare2021enabling}} &
        + & & & + & + &
        + & & + & + &
        + & & + & + \\
        \hline
        \hline
        \emph{\eGDN{}s} & \emph{+} & \HCS \emph{+} & \emph{+} & \emph{+} & \HCS \emph{+} & \emph{+} & \HCS \emph{+} & \emph{+} & \HCS \emph{+} &
        \emph{+} & \HCS \emph{+} & + & \HCS \emph{+} \\
        \hline
\end{tabular}
}
\caption{Comparison of existing and planned global model management approaches}
\label{tab:eval-approaches}
\vspace{-5mm}
\end{table}

\subsection{Models and Modeling Languages Integration: Construction and Execution}%
\subsubsection{Construction}
The construction of models and modeling languages integration is addressed in the current approaches in three
main ways  via (1) linking of models and model elements, (2) model interfaces
and (3) metamodel composition.

\textbf{(1) Links:~ }
 
All approaches make use of some kind of \textit{trace links\textit{ }}between
models and their model elements to integrate models.
In this report, we adopt the definitions of traceability
proposed by the Center of Excellence for Software Traceability (CoEST)~\cite{CoEST-website}. A \emph{trace link} is \emph{"...a specified association between a pair of artifacts, one comprising the source artifact and one comprising the target artifact..."}. Following the CoEST again, trace links are specialized into traces between the \emph{vertical}
and \emph{horizontal} dimensions. Hence, a \emph{vertical} trace
\emph{"...links} \emph{artifacts at different levels of abstraction so as to accommodate lifecycle-wide or end-to-end traceability, such as from requirements to code..."}. An horizontal trace links   \emph{"...artifacts at the same level of abstraction, such as: (i) traces between all the requirements created by ‘Mary’, (ii) traces between requirements that are concerned with the performance of the system, or (iii) traces between versions of a particular requirement at different moments in time"}.

There is a plethora of approaches (e.g.,
\cite{amw-website,epsilon-website,Favre:2012:MLA:2404962.2404978,LH09,HerzigProcCS2014,SimkoASME2012,BoddyAVICPS2011}~\citeOwn{mote-website}) making use of trace links to integrate models. The Atlas Model Weaving (AMW) language~\cite{amw-website} provided one of the first approaches for capturing hierarchical traceability links between models and model elements.
The purpose was to support activities such as automated navigation between
elements of the linked models. In this approach, a generic core traceability language is made available and optionally extended to provide semantics specific to the metamodels of the models to be linked. Similarly, the Epsilon framework~\cite{epsilon-website} provides a tool (ModeLink) to establish correspondences between models. MegaL Explorer~\cite{Favre:2012:MLA:2404962.2404978} supports  relating heterogeneous  software development artifacts which do not necessary have to be models or model elements using predefined relation types. 
SmartEMF~\cite{LH09} is another tool for linking models based on annotations of Ecore metamodels to specify simple relations between model elements through correspondence rules for attribute values. Complex relations are specified with ontologies relating the concepts of the linked languages. The whole set of combined models is converted into Prolog facts to support various activities such as navigation, consistency and user guidance when editing models. The CONSYSTENT tool and approach~\cite{HerzigProcCS2014} make use of a similar idea. However, graph structures and pattern matching are used to represent the  combined models in a common formalism and to identify and manage inconsistencies instead of Prolog facts as in the case of SmartEMF.

There are also a number of approaches such as~\cite{SimkoASME2012} and
\cite{BoddyAVICPS2011} that build on establishing links between models through the use of integration languages developed for a specific set of integrated modeling languages, where the integration language embeds constructs specific to the linked languages. This is also the case for model weaving languages extending the core AMW language. However, AMW has the advantage of capturing the linking domain with a core common language.
Other means for linking and integrating models are Triple Graph Grammars (TGG) such as the Model Transformation Engine (\textbf{MoTE}) tool~\citeOwn{mote-website}, which  similarly requires the specification of some sort of integration language (correspondence metamodel) specific to the integrated languages. However, an important asset of this approach is that it automatically establishes and manages the traceability links and maintains the consistency of the linked models (model synchronization) in a \emph{scalable}, \emph{incremental} manner.
Finally, in~\citeOwn{SNG10,SHG12}\citeOwn{Beyhl_et_al:2013}, an approach is presented to automatically create and maintain traceability links between models in a scalable manner. While the approach focuses on traceability management rather than model integration, compared to integration languages, it relies on link types defined at the model level (and not at the metamodel / language level), thus avoiding the need to update the integration language every time a new language must be integrated.                  
More recently, the concept of reactive links has been presented in \cite{ractiu2022reactive}, which essentially allows an incremental propagation of attribute value changes between models of different languages. However, incremental execution is only offered for a limited notion of consistency.

The comparison of these approaches shows that apart from our own earlier approach~\citeOwn{SNG10,SHG12}\citeOwn{Beyhl_et_al:2013}, all approaches suffer from being dependent on the set of integrated languages, thus requiring to better support modularity. Furthermore, only our own work~\citeOwn{mote-website}\citeOwn{SNG10,SHG12}\citeOwn{Beyhl_et_al:2013} supports automated management of traceability links.

\textbf{(2) Interfaces:~ }
In addition to links, a few more sophisticated approaches
(e.g.,~\cite{LangsweirdtISORC2010, HesselundMODELS2008, JurackICGT2010,
Johannes2010}) introduce a concept of \emph{model interface} (\emph{int.} column
in Table \ref{tab:eval-approaches}) for specifying how models can be linked.
In~\cite{LangsweirdtISORC2010}, the Analysis Constraints Optimization Language
(ACOL) is proposed, which has been designed to be pluggable to an Architecture
Description Language (ADL).
A concept of \emph{interface} specific to ACOL is included so that constraints
can refer to these interfaces to relate to the model elements expected from the
ADL. SmartEMF~\cite{HesselundMODELS2008, smartemf-website} proposes a more
generic concept of model interface to track dependencies between models and metamodels and provide automated compatibility checks.
Composite EMF Models \cite{JurackICGT2010, compositeemfmodels-website}
introduces \emph{export} and \emph{import} interfaces to specify which model
elements of a main model (\emph{body}) should be exposed to other models (i.e.
are part of the public API), and which elements of a body model are to be
required from an export interface. In~\cite{Johannes2010}, an approach
for the composition of grammars with explicit variation points (hooks)
constituting an implicit \emph{invasive} composition interface is presented.

However, while these approaches provide interesting preliminary ideas, they
need to be enriched to cover a larger number of non intrusive model integration
use cases such as for example, specifying modification policies of the linked model elements required to ensure the models
can be kept consistent. They also lack integration into GMM.

\textbf{(3) Metamodel Integration:~ } 
Some approaches (e.g.,~\cite{kompren-website, kompose-website,
EtienSAC2010,emfviews-website}~\cite{BlouinGemoc2014}) consider the
construction of view metamodels in terms of other metamodels or language
fragments (\emph{MMI} column in Table \ref{tab:eval-approaches}). 
In~\cite{EtienSAC2010}, an approach implemented in the Gaspard2 tool
~\cite{gaspard2-website} is presented where metamodels are artificially
extended for the purpose of combining independent  model transformations
resulting in an extended transformation for the extended metamodels.
In~\cite{ABlouinSoSyM2015}, a language  and tool
(Kompren)~\cite{kompren-website} are proposed to specify and generate slices of
metamodels via the selection of classes and properties of an input metamodel.
A reduced metamodel is then produced, which must be completely regenerated when the input metamodel is
changed.
Such is the case for the Kompose approach \cite{kompose-website}, which on the
contrary to Kompren proposes to create \emph{compound metamodels,} where a set
of visible model elements from each combined metamodels is selected and
optionally related. EMF Views~\cite{emfviews-website, BrunelierePWC15} provides
similar approach however without the need to duplicate the metamodel elements
as opposed to Kompose and Kompren. Indeed, EMF Views allows the specification of \emph{virtual} metamodels that only refer to existing metamodel elements instead of duplicating them. The same principle applies for the  given models of the virtual metamodels, which only refer to elements of the existing integrated models instead of duplicating them. The defined
virtual view metamodels are usable transparently by tools. Furthermore, the same models can  be simultaneously used by both legacy tools and new tools making use of the virtual metamodels, thanks to the non-intrusiveness of the approach.
Finally, the Global Model Management language (\textbf{GMM*})\footnote{We use *
to distinguish this existing language and tool from the generic Global Model
Management (GMM) acronym.} \cite{BlouinGemoc2014} provides means to specify
and interpret reusable language subsets as sets of constraints combined to form
subsetted metamodels. Like for EMF Views, these reduced metamodels can to some
extent be used transparently by tools.
Aspect-oriented metamodel composition
is another well-known technique for metamodel composition. However it requires metamodels to be expressed in a specific aspect-related format, which does not meet our non-intrusiveness requirement. 

While each of these approaches provides interesting support for modular modeling
languages integration, their unification into a common formalism, the use of an explicit
notion of a model interface and their integration into GMM is lacking, except
for subsetted metamodels already integrated within our \textbf{GMM*} language.
Among these approaches, we note that EMF Views provides an adequate
starting point for this work, due to its \emph{non-intrusiveness} property essential for reusing legacy models and tools. However, in its current
implementation, only changes of attributes of virtual compound models
are propagated to the underlying real models ~\cite{BrunelierePWC15}. Other
changes propagation as well as metamodel constraints composition remain to be
addressed. The integration of an explicit metamodel interface construct
for governing how metamodels can be composed, as well as the ability to solve
attribute and operation conflicts of merged classes inspired from the concept of
\emph{Traits} / \emph{Mixins} developed for object oriented programming are required future
works for this approach.
%

%
Execution of integrated models concerns the evaluation of the
well-formedness constraints of each combined model alone, but also of the combined models as a whole. To our knowledge, no approach addresses the incremental checking of well-formedness conditions across the different language fragments of compound models. However, some approaches on incremental
constraints evaluation exist. In~\cite{1573497}, changes on models are expressed as sequences of atomic model operations to determine which constraint is impacted by the changes, so that only these constraints need to be re-evaluated. In~\cite{EMF-IncQuery-website, UjhelyiSCP2015},  a graph-based query language (EMF-IncQuery) relying on incremental pattern
matching for improved performance is also proposed. In~\cite{Egyed2006}, an
approach is presented for incremental evaluation of constraints based on a scope of model elements referenced by the query and determined during the first query evaluation. This scope is stored into cache and used to determine which queries need to be re-evaluated according for some model changes.
In~\cite{GroherFASE2010}, this approach is extended for the case where the constraints themselves may change besides the 
constrained models. Finally in~\cite{CT06}, an incremental OCL checker is presented where a simpler OCL expression and reduced context elements set are computed from an OCL constraint and a given structural change event. Evaluating this simpler constraint for the reduced context is sufficient to assert the validity of the initial constraint and requires significantly less computation resources.

In \cite{konig2017efficient}, K\"onig et al. introduce a technique for the checking of consistency constraints over linked models, which avoids the merging of these models into a single underlying model to achieve better scalability. However, while formally defined and proven to be correct, the approach in \cite{konig2017efficient} does not consider incremental consistency checking.

We identified the following requirements as main needs concerning modularity and incrementality of modeling languages integration:

\begin{itemize}
\item[] \requirement{1.1} modeling languages integration via integration links and combination of well-formedness conditions with consistency
\item[] \requirement{1.2} interfaces for embedding of modeling languages
\end{itemize}

Note that concerning Table \ref{tab:eval-approaches} the requirements cover here Links and Interfaces which jointly emulate the less modular direct meta model integration and that  the employed well-formedness conditions and consistency conditions will be covered when we consider model operations in the next section.
Consequently, as visible in Table \ref{tab:eval-approaches}, there yet does not exists any approach that provides a combination of all these requirements we target.

\subsection{Model Operations: Construction and Execution}%
The construction of model operations is addressed in two ways in the literature. Most approaches combine model operations as \emph{model transformations chains} ((1) \emph{Flow Composition}), where each chained transformation operates at the granularity of complete models. In order to support reuse and scalability for complex modeling languages, which are defined by composing them from simpler modeling languages, a few approaches have considered specifying model transformations as white boxes. Composed of explicit fine grained  operations processing model elements for a given context, these operations are reusable across several model transformations ((2) \emph{Context Composition}).

\subsubsection*{(1) Flow Composition Approaches:~ }
FUSED (Formal United System Engineering Development)~\cite{BoddyAVICPS2011} is
an integration language to specify complex relationships between models of different languages. It supports model transformation chains, but only implicitly via execution of tools, without explicit representation of the involved transformations and processed data. On the contrary, there is a plethora of approaches allowing the explicit specification and construction of model transformation chains implementing a data flow paradigm. Such is the case of the AtlanMod Megamodel Management (AM3) tool~\cite{am3-website}, for which the Atlas Transformation Language (ATL)~\cite{atl-website} is used to specify the model transformations. Besides, a
type system has been developed~\cite{Vignaga_et_al:2013}, which enables type checking and inference on artifacts
related via model transformations.
Another similar but less advanced tool is the Epsilon Framework~\cite{epsilon-website},
which  provides model transformation chaining via ANT tasks. Wires~\cite{Wires-09} and ATL Flow~\cite{atlflow-website}
are tools providing graphical languages
for the orchestration of ATL model transformations.  The Formalism Transformation Graph + Process Model (FTG+PM) formalism~\cite{Lucio_et_al:2013} implemented in the AToMPM (A Tool for Multi-Paradigm Modeling) tool~\cite{atompm-website} provides similar functionality. However, it has the advantage of also specifying the complete modeling process in addition to the involved model transformations. This is achieved via activity diagrams coupled with model transformation specifications executed automatically to support the development process. Finally, \textbf{GMM*} {\cite{BlouinGemoc2014}}  also supports model transformation chaining, but through the specification of relations
between models of specific metamodels that can be chained. One advantage of this approach is that automated incremental (re-)execution of the specified relations between models is provided in response to received model change events.
Incrementality of the execution of the transformations is
also made possible by the integration of the  
\textbf{MoTE}~\citeOwn{mote-website} incremental model transformation tool into \textbf{GMM*}. 

However, while chaining model transformations offers some degree of modularity of model transformation specifications, apart from \textbf{GMM*}, most approaches suffer from scalability issues for large models, since the used transformation tools do not support incremental execution. In addition, the case where a generated model is modified by hand to add  information not expressible with the language of the original model(s) cannot easily be handled by these approaches, since regenerating the model modified by hand will destroy the user-specific information. This need is better supported by context composition approaches.
 
\subsubsection*{(2) Context Composition Approaches:~ }
A few approaches allow context composition of model operations (column \emph{Ctx.} in Table \ref{tab:eval-approaches}). In~\cite{EtienSAC2010} as mentioned above, an approach is described to combine independent model transformations resulting in extended transformations for corresponding extended metamodels. In~\cite{DebreceniFASE2014}, an approach is described for specifying the construction of view models using contextual composition of model operations (derivation rules) encoded as annotations of queries  of the EMF IncQuery~\cite{EMF-IncQuery-website} language.  Traceability links between view and source model elements are automatically established and maintained. The use of EMF IncQuery natively provides incremental execution of the derivation rules to synchronize the view model with the source model. Some views may be derived from other
views thus allowing flow composition as chains of view models. This approach achieves results similar to TGGs supporting incrementality, however with the drawback of being unidirectional. Similarly, but with bi-directionality the \textbf{MoTCoF} language~\citeOwn{SHNG11}
allows for both flow and fine grained context composition of model transformations. An advantage over~\cite{EtienSAC2010} however is that model transformations are used as black boxes without the need to adapt the transformations according to the context.

As can be seen, most approaches only support flow type modularity for model operations with batch execution except for our \textbf{GMM*} language thanks to its integration of \textbf{MoTE} providing incremental execution. This will not scale and lead to information losses in case of
partial model information overlap. Only a few approaches allow context modularity, which better supports incremental application where only the impacted operations can be re-applied following a change in order to avoid the cost of re-computing complete transformations.
Such is the case of \textbf{MoTCoF}, which theoretically permits incremental execution, but a concrete technical solution is still lacking for it.



	 

%
%


To address modularity and incrementality for model operations, we identified as main needs:

\begin{itemize}
\item[] \requirement{2.1} composition of model operations
\item[] \requirement{2.2} model operations over integrated models
\item[] \requirement{2.3} execution scheme for model operations
\end{itemize}

Note that concerning Table \ref{tab:eval-approaches} the requirements cover here Flow and Context based composition and Batch as well as Incremental Execution at first for all special cases of model operations and then also for the general case.
Consequently, as visible in Table \ref{tab:eval-approaches}, there yet does not exists any approach that fully cover the envisioned combination of all these requirements we target.

\subsection{Megamodels and other Global Model Management Approaches}%
Two strands can be identified for GMM. A first one makes use of
\emph{(1) model integration languages,} which are  defined for a specific set of
integrated modeling languages and tools meaning that the integration language must be updated every time a new language or tool is used. The second strand attempts to solve this problem by making use of \emph{(2) megamodels} providing configurable global model management. 

\subsubsection*{(1) Integration Languages and other Approaches:~ }
The CyPhy~\cite{SimkoASME2012} used in the GME modeling tool~\cite{gme-website} and FUSED~\cite{BoddyAVICPS2011,fused-website} are examples of model integration languages. But as mentioned above, these languages must be adapted as soon as a different set of integrated languages and tools must be used, thus requiring highly skilled developers. Integration languages are therefore not practical.

Open Services for Lifecycle Collaboration (OSLC)~\cite{OSLC-website} provides standards for tool integration through the Web. Many specifications are available for \emph{change management}, \emph{resource previews}, \emph{linked data}, etc. It builds on the W3C \textit{linked data} standard, which aims at providing best practices for publishing structured data on the Web based on the W3C Resource  Description Framework (RDF). RDF is a model for data interchange on the Web where data is represented as graphs. However, OSLC is  more services (and tools) oriented and inherits the problems of \emph{linked data}, which is specific to the Web and therefore does not separate the concerns of data representation and persistence as opposed to Model-Driven Engineering (MDE) where an abstract syntax is used independently of the way the data is stored.

Another approach making use of these standards is \cite{HerzigProcCS2014} and is implemented in a tool named CONSYSTENT, used to identify
and resolve inconsistencies across viewpoints due to information overlapping. The information of all
models involved during development is captured in a common RDF graph. The approach relies on a human\footnote{An automated method making use of Bayesian Belief Networks is also under study~\cite{HerzigMoDeVVa2014}.} to
specify patterns representing semantic equivalence links (semantic connections) across the graph models.
Inconsistency patterns based on these semantic connections are continuously checked
over the RDF model for potential matches identifying inconsistencies. Means to automatically resolve inconsistencies are under development.
However, this approach necessitating the conversion of all models as a RDF graph is not incremental and will not scale for
large models.

\subsubsection*{(2) Megamodels:~ }
In this second strand, megamodels serve to capture and manage
MDE resources such as modeling languages, model transformations, model correspondences, and tools used in modeling environments.
There are several megamodeling approaches as already mentioned. AM3 \cite{am3-website} is one of the first ones where a megamodel is basically a registry for MDE resources. Model transformations are specified with
ATL~\cite{atl-website} and model correspondences with the Atlas Model Weaving (AMW)
language [2]. Similarly, FTG+PM~\cite{Lucio_et_al:2013} as mentioned above is also a megamodeling language as well as MegaL Explorer ~\cite{Favre:2012:MLA:2404962.2404978} allowing to model the artifacts used in
software development environments and their relations from a linguistic point of view. The involved software
languages and related technologies and technological spaces can be captured with linguistic relationships
between them such as membership, subset, conformance, input, dependency, definition, etc. Operations
between entities can also be captured. The artifacts do not need to be represented as models, but each
entity of the megamodel can be linked to a Web resource that can be browsed and examined.
However, the language seems to be used mostly for visualization providing a better understanding of the
developments artifacts but cannot be executed to perform model management.
The aforementioned \textbf{GMM*} infrastructure~\cite{BlouinGemoc2014} consists of a megamodeling language inspired from~\citeOwn{HSG2010}. Metamodels can be declared, as well as relations between models of these metamodels. In
particular, synchronization relations can relate models of two different metamodels making use
of the \textbf{MoTE} TGG engine~\citeOwn{mote-website} to transform or synchronize the models. As mentioned earlier, chains of model transformations
can be specified
and executed incrementally in response to model change events and \emph{subsets} of modeling languages can be declared. \textbf{GMM*} is experimented
within the Kaolin tool {\cite{BlouinAHS2015}} making use of complex and rich industrial languages such as AADL and VHDL thus challenging GMM for realistic specifications.


A new approach to modeling in the large with bidirectional model transformation has been proposed by Stevens \cite{stevens2020maintaining}. The work in \cite{stevens2020maintaining} presents a formalized notion of a megamodel in the form of a hypergraph, where models are represented as nodes that can be connected via hyperedges representing bidirectional transformations. Incremental execution is generally supported by the formalism, however, a concrete algorithm is only presented for megamodels with a restricted structure for which a certain notion of correctness can be guaranteed. The author extends her work in \cite{stevens2020connecting} by connecting her previous work to research in the domain of build systems and introducing a so-called orientation model to steer megamodel execution, relaxing the restrictions on the megamodel's structure while maintaining a formal guarantee of correctness. However, the construction of an orientation model is a manual and potentially challenging process for complex networks of model operations. Furthermore, the work in \cite{stevens2020maintaining,stevens2020connecting} abstracts from the technical realization of model operations and hence does not explicitly consider how operations such as the computation of model properties may be composed in a modular manner.

In \cite{gleitze2021finding}, Gleitze et al. propose an incremental execution strategy for networks of model transformations, specifically aiming for a solution that provides explanations of cases where the strategy failed to produce a consistent result. While their strategy is applicable to networks with arbitrary structure, only bidirectional transformations between pairs of models are considered, limiting the notion of supported model operations.

Recently, significant progress has also been made in the field of model views \cite{bruneliere2019feature}, which studies how consistent view models can be derived from a system description consisting of multiple interrelated models and therefore also relates to Global Model Management. The most comprehensive and advanced model view technique is probably the Vitruvius approach \cite{klare2021enabling}, which relies on a so-called virtual single underlying model (V-SUM) for the description of the overall system under development. The V-SUM is used to integrate the individual models describing system parts and derive new view models via consistency relations. Therefore, Vitruvius employs a dedicated incremental algorithm for executing complex networks of consistency preservation operations. However, the notion of consistency in \cite{klare2021enabling} is limited to relations between pairs of tuples of model elements and hence does not support certain model operations such as computation of model properties using aggregations. Furthermore, intra-model well-formedness is deliberately not covered and reuse at the mega-model level is not considered.

However, most of these megamodeling approaches only cover to a certain degree the core ingredients of specifying MDE resources by means of metamodels and model
operations with appropriate modularity and incrementality. Only fragments of the problem are solved. Furthermore, all these megamodeling languages are monolithic (column \emph{Mon.} in Table \ref{tab:eval-approaches}) and as a result, predefined megamodel fragments cannot be composed and reused to avoid rebuilding complete megamodel specifications from scratch  for new projects. We note however that aspect-oriented metamodel composition may be used as an inspiring point and adapted to megamodeling for the specification of distributed megamodels fragments contributing cross-cutting information in an integrated  megamodel. As for megamodel execution, FTG+PM, \cite{stevens2020maintaining,gleitze2021finding,klare2021enabling}, \textbf{GMM*}, and ~\citeOwn{SNG10, SHG12} consider automated or semi-automated execution in response to model changes or modeling events from the tool's user interfaces.




%

The related work demonstrates that for global model management, we need a view that combines all its facets in a mega model. 
To address modularity and incrementiality for modamodels we can conclude that the main needs are:

\begin{itemize}
\item[] \requirement{3.1} a megamodeling language with
\item[] \requirement{3.1.1} support for metamodels, well-formedness, model operations, integration views, and traceability links
\item[] \requirement{3.1.2} a megamodel operation module concept
\item[] \requirement{3.2} a robust incremental megamodel execution scheme
\item[] \requirement{3.3} megamodel interfaces
\item[] \requirement{3.4} an asynchronous incremental megamodel execution scheme
\end{itemize}

Note that concerning Table \ref{tab:eval-approaches}, the requirements cover here the modular construction as well as incremental execution.
As visible in Table \ref{tab:eval-approaches} there do exist three approaches that do not support modularity but provide a combination of all the other requirements we target. 
However, neither of them provides the required robust incremental megamodel operation execution scheme. The technique in \cite{stevens2020maintaining,stevens2020connecting}, while providing formal guarantees regarding correctness and termination, is limited to networks of model operations in the form of trees of synchronizations between pairs of models or requires the manual construction of an orientation model. The Vitruvius approach \cite{klare2021enabling}, by virtue of employing a fixpoint iteration, does not introduce any restrictions regarding the network's structure, but consequently does not guarantee termination. The execution scheme presented in \cite{gleitze2021finding} is applicable to networks of model synchronizations between pairs of models with arbitrary structure and also guarantees termination. However, outside of performing the actual execution on the concrete instance, it provides no means of determining whether a network will eventually terminate with a correct result.

\subsection{Summary of the state of the art}\label{sec:state-of-the-art:summary}%

This survey of the state of the art demonstrates that several approaches address the needs for modularity and incrementality raised in this report. However, none of them fulfill these needs at the three levels of model operations, modeling languages integration  and megamodels that we identify as being required all at once. 
Moreover, for certain individual aspects of Global Model Management, solutions with adequate modularity and incrementality do not even exists yet on their own.
This work specifically targets these essential needs that have not been sufficiently addressed yet.


\section{Extended Generalized Discrimination\\Networks}\label{sec:egdns_definition}

In this chapter, we introduce a notion of extended Generalized Discrimination Networks (\eGDN{}s) and explain how the new formalism can be used as a language for megamodels.

\subsection{Definition of \eGDN{}s}


In order to address shortcomings of current solutions and enable the modular and incremental construction and execution of complex nets of model operations such as model properties, model consistency operations, model transformations, and model synchronization, we further generalize the idea of Generalized Discrimination Networks \citeOwn{beyhl2016operationalization} to extended Generalized Discrimination Networks. Therefore, we introduce a generalized notion of GDN nodes and their interfaces. This enables the integration of model operations with side-effects and allows a more flexible definition of queries in comparison to \citeOwn{beyhl2016operationalization}, which also affords increased expressiveness.

An \eGDN{} $G = (O, S, E, s, t)$ is essentially a bipartite graph with two kinds of nodes, slot nodes and operation nodes, where $O$ is the set of operation nodes and $S$ is the set of slot nodes. Operation nodes can be connected to slot nodes and vice-versa via edges from the set of edges $E$. The source and target functions of edges are given by $s : E \rightarrow O \cup S$ respectively $t: E \rightarrow O \cup S$.

Operation nodes represent model operations or building blocks thereof, that is, suboperations. Slot nodes store information used by model operations and their suboperations in the \eGDN{}. Edges represent dependency relationships between operation and slot nodes, with the source of an edge representing the required node and the target of the edge representing the dependent node. An operation node depending on a slot nodes indicates that the corresponding model operation uses information stored in that slot. A slot node having a dependency on an operation node means that the operation node's model operation modifies the slot's contents.

We denote the set of dependencies of a slot or operation node $n$ in $O \cup S$ by $in(n) = \{d \in O \cup S | \exists e \in E: s(e) = d \wedge t(e) = n\}$. Similarly, we denote the set of dependent nodes of $n$ by $out(n) = \{d \in O \cup S | \exists e \in E: s(e) = n \wedge t(e) = d\}$. $G$ is bipartite in the sense that $\forall o \in O: in(o) \subseteq S \wedge out(o) \subseteq S$ and $\forall s \in S : in(s) \subseteq O \wedge out(s) \subseteq O$. For an operation node $o \in O$, we also refer to the set of slot nodes $in(o)$ as the \emph{input slots} of $o$ and to the set of slot nodes $out(o)$ as the \emph{output slots} of $o$.

A slot node $s$ is always associated with a modeling language $ML$ or an ordered set of variables $var = \{v_1, v_2, ..., v_k\}$ and contains a model (typed graph) of $ML$ respectively a set of variable assignments for $var$. A variable assignment for an ordered set of variables $var = \{v_1, v_2, ..., v_k\}$ is given by a tuple in $dom_V(v_1) \times ... \times dom_V(v_k)$, where $dom(v_i)$ denotes the domain of variable $v_i$, which can either be a set of nodes or edges from one or more models or a set of primitives, e.g. $\mathbb{N}$. We refer to the set of possible contained assignment sets or models of $s$ as the slot's domain, which is given by $dom(s) = ML$ in case $s$ is associated with a modeling language $ML$ or by $dom(s) = \powset{dom_V(v_1) \times ... \times dom_V(v_k)}$ if $s$ is associated with an ordered set of variables $var = \{v_1, v_2, ..., v_k\}$. Contents are then assigned to an \eGDN{}'s slots via a valuation function $val : S \rightarrow \bigcup_{s \in S} dom(s)$, such that $\forall s \in S : val(s) \in dom(s)$.

In addition to regular models, we also allow model slots to contain \emph{linking models}. The only difference between a regular model and linking model is the fact that a linking model's set of vertices may reference vertices from other regular and linking models as edge targets, thus allowing the establishment of inter-model connections. Therefore, similarly to linking models, the metamodel of a linking model, that is, the type graph of a linking model, may refer to vertices from other type graphs as edge targets.

Regarding operation nodes, we further distinguish between \emph{query nodes}, \emph{transformation nodes}, and \emph{mixed nodes}.

Query nodes extract information from models and/or other queries' results. Therefore, a query node $q$ may have an arbitrary number of input slots and exactly one output slot. $q$'s input slots may contain both models or sets of variable assignments, whereas $q'$s output slot may only contain a set of variable assignments.

Transformation nodes create or modify models based on models and/or query results. Therefore, a transformation node $t$ may have an arbitrary number of input and output slots. $t$'s input slots may contain both models or sets of variable assignments, whereas $t$'s output slots may only contain models.

A mixed node $x$ constitutes a combination of query and transformation nodes and may have an arbitrary number of input and output slots, which may contain both models or sets of variable assignments.

Each operation node $o$ with input slots $in(o) = \{s_{i_1}, ..., s_{i_k}\}$ is associated with a semantics function $\gamma_S: dom(s_{i_1}) \times ... \times dom(s_{i_k}) \rightarrow \powset{\mathbb{F}}$, where $\mathbb{F}$ denotes the set of functions $f: out(o) \rightarrow \bigcup_{s_o \in out(o)}dom(s_{o})$ such that $\forall s_{o} \in out(o) : f(s_{o}) \in dom(s_{o})$. Essentially, the semantics function of an operation node describes a consistency relationship between the operation's input and output slots.

To indicate that the contents of the slots adjacent to $o$ are consistent with $o$'s semantics function for a valuation function $val$, we write $o.valid(val)$. Formally, $o.valid(val) \leftrightarrow \exists f \in \gamma_S(val(s_{i_1}), ..., val(s_{i_k})) : \forall s_o \in out(o) : f(s_o) = val(s_o)$.

A valuation function $val$ for an \eGDN{} $G = (O, S, E, s, t)$ is consistent with $G$ as a whole if it holds that $\forall o \in O : o.valid(val)$.

\OUT{

The operation node $o$ is also equipped with a $populate$ procedure, which computes a realization function $\gamma: dom(s_1) \times dom(s_2) \times ... \times dom(s_k) \rightarrow \mathbb{F}$ such that for any parametrization $v_1 \in dom(s_1), ..., v_k \in dom(s_k)$, it holds that $\gamma(v_1, ..., v_k) \in \gamma_S(v_1, ..., v_k)$. In the context of a valuation function $val$, the $populate$ procedure uses the contents of $o$'s corresponding input slots for the parameters of $\gamma$, that is, $v_i = val(s_i)$ for all $1 \leq i \leq k$.

In addition, $o$ also has a $valid$ procedure that in this context returns a boolean value indicating whether the contents of $o$'s output slots form a valid result of the associated operation given the current contents of $o$'s input slots. Specifically, the result of the $valid$ procedure is given by $o.valid(val) = \exists f \in \gamma_S(val(s_1), ..., val(s_k)) : \forall s_o \in out(o) : f(s_o) = val(s_o)$.

A $populate$ procedure of an operation node $o$ is \emph{non-recursive}, if, after one execution of $o$'s populate function, a second execution with updated output slot values results in the same slot contents.

Formally, a $populate$ procedure of an operation node $o$ with input slots $in(o) = \{s_1, s_2, ..., s_k\}$ and output slots $out(o) = \{s_{o_1}, ..., s_{o_l}\}$, is \emph{non-recursive}, if for all possible parametrizations $v_1 \in dom(s_1), ..., v_k \in dom(s_k)$, it holds that

\begin{equation}
\forall s_{o_i} \in out(o) : \gamma(v_1', ..., v_k')(s_o) = v_{o_i}',
\end{equation}

where 

\begin{equation}
	v_i' = 
	\begin{cases}
		\gamma(v_1, ..., v_k)(s_i) &\quad\text{if } s_i \in out(o)\\
		v_i &\quad\text{otherwise}
	\end{cases}
\end{equation}

and

\begin{equation}
	v_{o_i}' = \gamma(v_1, ..., v_k)(s_{o_1})
\end{equation}

In some cases, operation nodes that have output slots that are input slots at the same time, in particular operations realizing bidirectional transformations or synchronizations, may have certain preferences regarding which slot to apply changes to via their $populate$ procedure. For instance, a $populate$ procedure of a bidirectional model transformation with two output model slots that are also inputs may be implemented such that if one of these slots is empty, it will only produce a model for the empty slot and leave the model in the other slot unchanged.

We capture this behavior by a notion of \emph{potential population directions}. The potential population directions of a populate procedure for a set of input slots $S_i \subseteq in(o)$ are given by $dir(o, S_i) = S_\rightarrow$, where for a slot $s_o \in out(o)$,

\begin{equation}
\begin{split}
s_o \in S_\rightarrow \leftrightarrow & s_o \in out(o) \setminus in(o) \vee \\
	&\exists v_1 \in dom(s_1), ..., v_k \in dom(s_k) : \\
	&\quad \forall s_i \in in(i) \setminus S_i : v_i = \emptyset \wedge \\
	&\quad \gamma(v_1, ..., v_k)(s_o) \neq val(s_o)
\end{split}
\end{equation}

Intuitively, $dir(o, S_i)$ thus denotes the subset of output slots for which $o$'s populate procedure may create different contents than what is currently contained if the contents of all other input slots $s_i \in in(o) \setminus S_i$ are empty.

It is easy to see that a function for potential population directions $dir$ must be monotonic in the sense that $\forall S_{i_1}, S_{i_2} \subseteq in(o): S_{i_1} \subseteq S_{i_2} \rightarrow dir(o, S_{i_1}) \subseteq dir(o, S_{i_2})$. We say that $dir$ is \emph{union monotonic} if $\forall S_{i_1}, S_{i_2} \subseteq in(o): dir(S_{i_1}) \cup dir(S_{i_2}) = dir(S_{i_1} \cup S_{i_2})$.


\subsection{Execution of \eGDN{}s}

\subsubsection{Execution with Guaranteed Termination}

An \eGDN{} $G$ can be executed in the context of a valuation function $val$ using the $populate$ procedures of its operation nodes via Algorithm \ref{algo:execute_batch_order}. Therefore, an ordering of $G$'s operation nodes has to be found. Then, the corresponding populate functions can be executed and the results can be stored in the nodes' output slots. Importantly, the employed ordering has to guarantee correct results in the sense that the contents of $G$'s slots after the execution must be consistent with the semantics functions of all of its operation nodes. Formally, for any operation node $o$ with input slots $in(o) = \{s_{i_1}, ..., s_{i_k}\}$, output slots $out(o) = \{s_{o_l}, ..., s_{o_l}\}$, and semantics function $\gamma_S: dom(s_1) \times dom(s_2) \times ... \times dom(s_k) \rightarrow \powset{dom^O(o)}$, it must hold that $\exists f \in \gamma_S(val(s_{i_1}), ..., val(s_{i_k})) \forall s_{o_i} \in out(o) : f(s_{o_i}) = val(s_{o_i})$, that is, $o.valid(val)$.

If $G$ takes the form of a directed acyclic graph and operation nodes do not share output slots, such an ordering can be obtained by simply sorting $G$ topologically. However, requiring DAG structure represents a substantial restriction, as it effectively prohibits bidirectional transformations where some input slots are also output slots. Moreover, the assumption regarding the complete absence of shared output slots, while required to prevent overwriting of operation's results, is another obstacle to realizing several desirable use cases, for instance those involving chains of bidirectional transformations.

\SetKwFunction{ExecuteDAG}{ExecuteDAG}
\SetKwFunction{SortTopologically}{SortTopologically}
\SetKwFunction{FindValidExecutionOrder}{FindValidExecutionOrder}
\begin{algorithm}
\LinesNumbered
\myproc{\ExecuteDAG{$G = (O, S, E, s, t), val$}} {
	\Input{$G$: The \eGDN\\
	$val$: A valuation function for G's slots}
	\BlankLine
	
	$D \leftarrow \FindValidExecutionOrder{$O, \{s \in S | val(s) \neq \emptyset\}$}$\;
	\If{$D \neq \textbf{null}$} {
		\ForEach{$o \in D$} {
			$S_{o} \leftarrow o.populate(val)$\;
			\ForEach{$s_{o} \in out(o)$} {
				$val(s_{o}) \leftarrow S_{o}(s_o)$\;
			}
		}
	}
}
\BlankLine
\caption{Batch algorithm for executing an \eGDN{} based on an ordering of its operation nodes} \label{algo:execute_batch_order}
\end{algorithm}

\begin{algorithm}
\LinesNumbered
\myproc{\FindValidExecutionOrder{$G = (O, S, E, s, t), S_i$}} {
	\Input{$G$: The \eGDN\\
		$S_i$: The set of initially populated slots}
	\Output{$R$: A valid execution order for $G$}

	\BlankLine
	
	$R \leftarrow \textbf{new }Set$\;
	$Q \leftarrow \textbf{new }Queue$\;
	$G_T \leftarrow \textbf{new }Graph$\;
	$G_T.addVertices(O)$\;
	
	\BlankLine
	
	\ForEach{$o \in \{o \in O | S_i \cap in(o) \neq \emptyset\}$} {							\label{line:find_batch_order_init_loop_1}
		$Q.enqueue(o)$\;
	}
	\While{$\neg Q.isEmpty()$} {												\label{line:find_batch_order_main_loop_1}
		$o \leftarrow Q.dequeue()$\;
		$R \leftarrow R \cup \{o\}$\;
		$S_o \leftarrow dir(o, in(o) \cap S_i)$\;
		$S_i \leftarrow S_i \cup S_o$\;
		$O_o \leftarrow out(S_o) \cup in(S_o) \setminus \{o\}$\;
		\ForEach{$o' \in O_o$} {												\label{line:find_batch_order_queue_loop_1}
			\If{$\neg o' \in Q$} {
				$Q.enqueue(o')$\;
			}
			$G_T.createEdgeIfNotExists(o, o')$\;
			\If{$G_T.hasCycle()$} {
				\Return \textbf{null}\;
			}
		}
	}
	
	\BlankLine
	
	\ForEach{$o \in \{o \in O | o \notin R\}$} {										\label{line:find_batch_order_init_loop_2}
		$Q.enqueue(o)$\;
	}
	\While{$\neg Q.isEmpty()$} {												\label{line:find_batch_order_main_loop_2}
		$o \leftarrow Q.dequeue()$\;
		$S_o \leftarrow dir(o, in(o) \cap S_i)$\;
		$S_i \leftarrow S_i \cup S_o$\;
		$O_o \leftarrow out(S_o) \cup in(S_o) \setminus \{o\}$\;
		\ForEach{$o' \in O_o$} {												\label{line:find_batch_order_queue_loop_2}
			\If{$\neg o' \in Q$} {
				$Q.enqueue(o')$\;
			}
			$G_T.createEdgeIfNotExists(o, o')$\;
			\If{$G_T.hasCycle()$} {
				\Return \textbf{null}\;
			}
		}
	}
	
	\Return \SortTopologically{$G_T$}\;
}
\BlankLine
\caption{Static analysis algorithm for finding an \eGDN{} execution order} \label{algo:find_batch_order}
\end{algorithm}

Based on the properties of an \eGDN{}'s operation nodes regarding non-recursiveness and potential population directions, an appropriate order can also be found for certain cyclical \eGDN{}s, with a relaxed assumption regarding shared output slots. Algorithm \ref{algo:find_batch_order} represents an analysis for an \eGDN{} that contains only nodes with non-recursive $populate$ procedures and a set of initially populated slots. If successful, the algorithm returns an execution order that can be used instead of the topological ordering in Algorithm \ref{algo:execute_batch_order}. Importantly, the computed ordering still yields a valuation function that is consistent with all operations' semantics.

Algorithm \ref{algo:find_batch_order} first creates and empty set $R$ and initializes a queue $Q$ with all operation nodes that have a non-empty input slot. Then, a slightly modified breadth-first search is performed over the \eGDN{} structure using the initialized queue $Q$ to essentially simulate an execution of the \eGDN{} without concrete inputs.

Therefore, the procedure loops until $Q$ is empty. In each loop execution, the first operation node $o$ in $Q$ is dequeued and added to $R$. Then, all output slot nodes whose contents could change due to the execution of $o'$s populate procedure $S_o$ are obtained based on $o$'s potential populate directions and the set of currently populated slots $S_i$. The slots in $S_o$ are subsequently added to $S_i$. Finally, all operation nodes $o'$ connected to a slot in $S_o$ are added to $Q$ if they are not yet contained. An exception is made for the currently considered node $o$, which is never added to the queue again, exploiting the assumption that all $populate$ procedures in the \eGDN{} are non-recursive.

After leaving the first loop, the algorithm adds all operation nodes that are not contained in $R$ to the queue and afterwards repeats a similar loop to ensure all operation nodes have at least been considered once.

During execution, the algorithm keeps track of the dependencies between $G$'s operations in a trigger graph $G_T$. Execution aborts by returning \textbf{null} as soon as a cyclical dependency is detected, which may indicate a potential infinite loop in $G$'s execution for the initially populated slots $S_i$.

Finally, a topological sorting of the trigger graph $G_T$ is returned as a possible canonic execution order that, under the mentioned assumptions, produces a valuation function for the input \eGDN{}'s slots that is consistent with the semantics functions of all of the \eGDN{}'s operation nodes.

\subsubsubsection*{Termination}

Algorithm \ref{algo:find_batch_order} always terminates. Except for the loops in line \ref{line:find_batch_order_main_loop_1} and \ref{line:find_batch_order_main_loop_2}, all loops only iterate over finite sets, and all individual operations always terminate. The loops in line \ref{line:find_batch_order_main_loop_1} and \ref{line:find_batch_order_main_loop_2} also always terminate due to the termination criterion regarding cyclical dependencies between the \eGDN{}'s operation nodes: Since one operation node is removed from $Q$ in each loop iteration, termination is only threatened if operation nodes keep getting added to $Q$. Since there is only a finite number of operation nodes, infinite behavior can only occur as a result of cycles in the modified breadth-first search. However, such cycles are detected via $G_T$ and immediately lead to abortion of the procedure's execution.

Assuming that the execution of an operation node via its $populate$ function always terminates, the execution of an \eGDN{} via Algorithm \ref{algo:execute_batch_order} with an execution order computed via Algorithm \ref{algo:find_batch_order} also always terminates, since Algorithm \ref{algo:find_batch_order} either aborts or returns a finite sequence of operation nodes.

\subsubsubsection*{Correctness}

\begin{theorem}
For inputs $G = (O, S, E, s, t)$ and $val$, Algorithm \ref{algo:execute_batch_order} aborts or produces a final valuation function $val$ such that $\forall o \in O : o.valid(val)$.
\end{theorem}

\begin{proof}
Correctness of Algorithm \ref{algo:execute_batch_order} can be shown by proving that for inputs $G = (O, S, E, s, t)$ and $val$, Algorithm \ref{algo:find_batch_order} produces an ordering that guarantees a final valuation function $val$ such that $\forall o \in O : o.valid(val)$. This in turn can be proven via loop invariants for the two main loops of the algorithm.

For the loop in line \ref{line:find_batch_order_main_loop_1}, the following invariant holds: For the valuation function $val'$ that would result from executing the sequence of model operations $R$, it holds that (i) $\forall s \in S : val'(s) \neq \emptyset \rightarrow s \in S_i$ and (ii) $\forall o \in \{o \in O | \exists s \in S_i \cap (in(o) \cup out(o))\} : \neg o.valid(val') \rightarrow o \in Q$. The satisfaction of the invariant can be shown via induction over the number of loop iterations.

The base case holds due to the definition of what is passed as parameter $S_i$ in Algorithm \ref{algo:execute_batch_order} (i) and how $Q$ is initialized (ii).

The induction step holds for (i) due to the fact that in each loop iteration, exactly one operation node $o$ is removed from $Q$ and appended to $R$, and, based on the definition of $dir(o)$ and its monotonicity property, all slots $s$ for which $o$'s execution may change $val(s)$ are added to $S_i$.

Based on the induction step holding for (i), the induction step also holds for (ii), since for the operation node $o$ that is removed from the queue, it holds that $o.valid(val')$ after the execution of $o$ due to the non-recursiveness property, and all other operation nodes $o'$ for which $o'.valid(val')$ may change due to the execution of $o$ are added to $Q$.

Thus, the invariant holds for the loop in line \ref{line:find_batch_order_main_loop_1}.

For the loop in line \ref{line:find_batch_order_main_loop_2}, the following invariant holds: For the valuation function $val'$ that would result from executing the sequence of model operations $R$, it holds that (i) $\forall s \in S : val'(s) \neq \emptyset \rightarrow s \in S_i$ and (ii) $\forall o \in O : \neg o.valid(val') \rightarrow o \in Q$. The satisfaction of the invariant can be shown via induction over the number of loop iterations, similar to the invariant for the loop in line \ref{line:find_batch_order_main_loop_1}.

The base case holds for (i) and (ii) due to the invariant for the loop in line \ref{line:find_batch_order_main_loop_1} and how $Q$ is populated again.

The induction step for (i) and (ii) holds due to the same arguments as for the loop in line \ref{line:find_batch_order_main_loop_1}.

Thus, the invariant holds for the loop in line \ref{line:find_batch_order_main_loop_2}.

Therefore, due to the termination criterion of the second loop, if Algorithm \ref{algo:find_batch_order} does not abort, it returns an ordering of operation nodes which, if executed, guarantees a final valuation function $val$ such that $\forall o \in O : o.valid(val)$.

Hence, the theorem holds.
\end{proof}

The order in which operation nodes are added to the queue $Q$ in lines \ref{line:find_batch_order_init_loop_1}, \ref{line:find_batch_order_queue_loop_1}, \ref{line:find_batch_order_init_loop_2}, and \ref{line:find_batch_order_queue_loop_2} of Algorithm \ref{algo:find_batch_order} is undefined. Since the order of operation nodes in $Q$ affects the behavior of the algorithm, this means that Algorithm \ref{algo:find_batch_order} is ultimately not deterministic.

We can however show that, if Algorithm \ref{algo:find_batch_order} does not abort due to cycles in $G_T$, the final dependency graph $G_T$ is uniquely defined, independently of the order in which operation nodes are added to $Q$, reducing non-determinism to the topological sorting of $G_T$.

\todo[inline]{DOES NOT HOLD!!!}

\begin{theorem}
For inputs $G = (O, S, E, s, t)$ and $S_i$, the dependency graph $G_T$ after a terminating execution of the loop in line \ref{line:find_batch_order_main_loop_2} is uniquely defined up to isomorphism.
\end{theorem}

\begin{proof}
The set of vertices of $G_T$ is uniquely determined by $O$.

To show the unique determination of edges, we first show that under assumption of union monotonic $dir$ functions, $G_T$ is uniquely defined up to isomorphism after a terminating execution of the loop in line \ref{line:find_batch_order_main_loop_1}.

Therefore, we show that in a terminating execution of the loop, the initial set $S_i$ uniquely determines a set of pairs of operation nodes $(o_1, o_2)$, between which directed edges are created in $G_T$.

$S_i$ uniquely determines the set of operation nodes $O_Q = out(S_i)$ that is initially added to $Q$. For each of these opration nodes $o_Q \in O_Q$, due to the monotonicity of $dir$ and because slots are never removed from $S_i$, at least the edges for pairs $edges_S(o_Q, S_i) = \{(o_Q, o_T) | o_T \in out(S_o) \cup in(S_o) \setminus \{o_Q\}\}$ are added to $G_T$, where $S_o = dir(o_Q, S_i \cap in(o_Q))$. Due to the assumption regarding union monotonicity, we can also write $edges_S(o_Q, S_i) = \bigcup_{s_i \in S_i \cap in(o_Q)} edges_N(o_Q, s_i)$, with $edges_N(o_Q, s_i) = \{(o_Q, o_T) | o_T \in out(dir(o_Q, in(o_Q) \cap \{s_i\})) \cup in(dir(o_Q, in(o_Q) \cap \{s_i\})) \setminus \{o_Q\}\}$.

In addition, the modification of $S_i$ and $Q$ that takes place for each $o_Q \in O_Q$ may cause the addition of further edges down the line. Specifically, for each $s_o \in dir(o_Q, \{s_i\})$ and each $o_T \in out(s_o) \cup in(s_o) \setminus \{o_Q\}$, since $o_T$, if not already contained, is added to $Q$ and subsequently handled in the same way as $o_Q$, with $s_i$ guaranteed to be in $S_i$ at that moment. This will cause the addition of all edges corresponding to the pairs $edges_N(o_T, s_o) = \{(o_T, o_T') | o_T' \in out(dir(o_T, in(o_T) \cap \{s_o\})) \cup in(dir(o_T, in(o_T) \cap \{s_o\})) \setminus \{o_T\}\}$ and again trigger the addition of further edges. Due to the monotonicity of $dir$ and because slots are never removed from $S_i$, the addition of these edges happens independently from any other modifications to $S_i$ that might be made in the meantime. Furthermore, due to the assumption regarding union monotonicity of $dir$, the combination of modifications of $S_i$ cannot yield any other edges than what is yielded for the individual members of $S_i$.

Because neither can $S_i$ be modified in any other way, nor can edges be added to $G_T$ in any other way, the set of pairs of operation nodes $(o_1, o_2)$ between which directed edges are created in $G_T$ in the loop is given by the function $edges(S_i) = \bigcup_{o_Q \in \{o \in O | S_i \cap in(o) \neq \emptyset\}}\bigcup_{s_i \in S_i \cap in(o_Q)} edges_R(o_Q, s_i)$, where $edges_R(o_Q, s_i) = edges_N(o_Q, s_i) \cup \bigcup_{o_T \in O_T)) \setminus \{o_Q\}}\bigcup_{s_o \in dir(o_Q, in(o_Q) \cap \{s_i\}} edges_R(o_T, s_o)$, with $O_T = out(dir(o_Q, in(o_Q) \cap \{s_i\})) \cup in(dir(o_Q, in(o_Q) \cap \{s_i\}$.

The loop terminating due to $Q$ becoming empty implies that all nodes ever added to $Q$ have been processed and hence all corresponding edges have been added to $G_T$. Since it is ensured that for each pair of operation nodes $(o_1, o_2)$, only one corresponding edge is added, we know that regardless of the concrete processing order, $G_T$ always contains exactly the nodes $O$ and one directed edge for each pair $(o_1, o_2) \in edges(S_i)$ and is thus uniquely defined up to isomorphism.

Based on this and following the same argumentation as for the loop in line \ref{line:find_batch_order_main_loop_1}, we can similarly show that a terminating execution of the loop in line \ref{line:find_batch_order_main_loop_2} always adds edges for the same set of operation node pairs.

The graph $G_T$ at the end of a terminating execution of Algorithm \ref{algo:find_batch_order} is hence uniquely defined for inputs $G$ and $S_i$.
\end{proof}

From the fact that inputs $G$ and $S_i$ functionally determine the resulting trigger graph created via Algorithm \ref{algo:find_batch_order} regardless of the order in which sets of operation nodes are enqueued, it also follows that termination does not depend on these orders. Consequently, for inputs $G$ and $S_i$, Algorithm \ref{algo:find_batch_order} either terminates always or never. \todo{Theorem or Corollary?}

We now show the correctness of a resulting canonic execution order for the case that all $dir$ functions are union monotonic:

\begin{theorem}
For inputs $G = (O, S, E, s, t)$ and $val$, if all $populate$ procedures in $G$ are non-recursive and all $dir$ functions in $G$ are union monotonic, Algorithm \ref{algo:execute_batch_order} aborts or produces a final valuation function $val$ such that $\forall o \in O : o.valid(val)$.
\end{theorem}

\begin{proof}
If Algorithm \ref{algo:execute_batch_order} does not abort, a canonic execution order $R$ for $G$'s operation nodes has been generated by topologically sorting the resulting dependency graph $G_T$ of a terminating execution of Algorithm \ref{algo:find_batch_order}.

Since $G_T$ contains all of $G$'s operation nodes, the same is also true for $R$. Due to the non-recursiveness of $G$'s operation nodes, we know that after the execution of an operation node $o$ via Algorithm \ref{algo:execute_batch_order}, it holds that $o.valid(val)$. Thus, for an operation node $o$, $\neg o.valid(val)$ can only hold after executing the entire sequence $R$ if there exists some operation node $o$' that comes after $o$ in $R$ and that changes the contents of a slot adjacent to $o$.

However, because $o'$ comes after $o$ in the topological ordering, we know that there cannot exist an edge from $o'$ to $o$ in $G_T$. This means that, due to the definition of $dir$, there must be a slot $s'$ with $val(s) \neq \emptyset$ before executing $o'$ that was never in the set of slots $S_i$ when $o'$ was dequeued in the loops in line \ref{line:find_batch_order_main_loop_1} and \ref{line:find_batch_order_main_loop_2} of Algorithm \ref{algo:find_batch_order}. Since slots are never removed from $S_i$, we know that $s' \notin S_i$ at the start of Algorithm \ref{algo:execute_batch_order}. We also know that $o'$ must have been dequeued at least once due to the refilling of $Q$ before the loop in line \ref{line:find_batch_order_main_loop_2} and because we know that Algorithm \ref{algo:execute_batch_order} terminated.

There hence must be a node $o''$ that comes before $o'$ in $R$ that populated $s'$. Also for $o''$, there must be a slot $s''$ that was never in $S_i$ whenever $o''$ was dequeued (because otherwise, the edge between $o'$ and $o$ would have been created eventually) and that is populated before the execution of $o''$. Therefore, again, there must be an operation node before $o''$ in $R$ that populated slot $s''$ and for which the same constraints apply as for $o''$. Ultimately, this implies that for the first operation node in the sequence, there must be a predecessor that populates some slot node, which is obviously a contradiction.

Hence, there cannot be an operation node in $R$ whose execution changes the contents of a slot adjacent to a previous operation node in $R$. Consequently, we know that after executing $R$, $\forall o \in O : o.valid(val)$.
\end{proof}

\subsubsection{Execution of Arbitrary \eGDN{}s}

If $G$ is not a DAG and no execution order can be found via Algorithm \ref{algo:find_batch_order}, it can instead be executed via an initial execution of all nodes followed by repeated execution of all nodes with changed inputs/outputs until a fixpoint is reached as described in Algorithm \ref{algo:execute_batch}.

\SetKwFunction{Execute}{Execute}
\begin{algorithm}
\LinesNumbered
\myproc{\Execute{$G = (O, S, E, s, t), val$}} {
	\Input{$G$: The \eGDN \\
	$val$: A valuation function for G's slots}
	\BlankLine
	
	$D \leftarrow O$\;
	\While{$D \neq \emptyset$} {
		$D_n \leftarrow \emptyset$\;
		\ForEach{$o \in D$} {
			$D \leftarrow D \setminus \{o\}$\;
			\If{$\neg o.valid()$} {
				$S_{o} \leftarrow o.populate(val)$\;
				\ForEach{$s_{o} \in out(o)$} {
					$val(s_{o}) \leftarrow S_{o}(s_o)$\;
					\ForEach{$o' \in out(s_{o})$} {
						\If{$o' \notin D$}{
							$D_n \leftarrow D_n \cup \{o'\}$\;
						}
					}
					\ForEach{$o' \in in(s_{o})$} {
						\If{$o' \neq o \wedge o' \notin D$}{
							$D_n \leftarrow D_n \cup \{o'\}$\;
						}
					}
				}
			}
		}
		$D \leftarrow D_n$\;
	}
}
\BlankLine
\caption{Na\"ive batch algorithm for \eGDN{} execution} \label{algo:execute_batch}
\end{algorithm}

Algorithm \ref{algo:execute_batch} first initializes the set of operation nodes that require execution $D$ with the set of all operation nodes in the input \eGDN{}. Then, the algorithm iterates until a fixpoint is reached.

Therefore, a set of operation nodes that will require execution in the next iteration $D_n$ is initialized with the empty set. Afterwards, for each operation node $o$ that is due for execution in the current iteration, that node is removed from the set $D$. It is then checked for each of $o$'s output slots $s_o$ whether its contents are consistent with $o$'s semantics via a call to $o$'s $valid$ procedure. If this is not the case, a consistent model or set of variable assignments is generated via $o$'s $populate$ procedure, replacing the old contents of $s_o$.

Then, all operation nodes $o'$ that have $s_o$ as an input slot and are not still due for execution in the current iteration (indicated by them not being included in $D$) are marked for execution in the next iteration by adding them to the set $D_n$. Similarly, other nodes than $o$ that are not due for execution in the current iteration that share $s_o$ as an output slot are also marked for execution in the next iteration. After all operation nodes in $D$ have been considered, $D_n$ replaces $D$ and a new iteration starts if $D_n$ is not empty.

\subsubsubsection*{Termination} \label{sec:execute_batch_termination}

In contrast to Algorithm \ref{algo:execute_batch_order}, Algorithm \ref{algo:execute_batch} is not guaranteed to terminate, since cyclical transitive dependencies of operation nodes may cause infinite cycles of changes to the contents of some slot node. Without restricting developers in what kinds of \eGDN{}s they are allowed to specify, this problem is inevitable.

In practice however, termination of networks of model operations like \eGDN{}s can be achieved despite the presence of cyclical structures. In some cases for instance, cycles at the network level do not necessarily correspond to actual cyclical dependencies of model operations if the involved model operations only affect distinct parts of slot contents, such as elements of certain, distinct types. In some cases, a restructuring of the \eGDN{} may remove cycles at the structural level while preserving semantics, for instance by converting in-place model transformations without an effective reflexive dependency into a model transformation with distinct input and output models.

Moreover, cycles of model operations may exhibit monotonic behavior, for instance by deleting certain elements in each iteration that are never recreated, thus guaranteeing convergence. Ultimately however, it remains the responsibility of the developers to create networks of model operations that do not lead to infinite loops in execution.

\subsubsubsection*{Correctness}

\begin{theorem}
For inputs $G = (O, S, E, s, t)$ and $val$, if Algorithm \ref{algo:execute_batch} terminates, it produces a final valuation function $val$ such that $\forall o \in O : o.valid(val)$.
\end{theorem}

\begin{proof}
If the algorithm terminates, its results are guaranteed to be correct since the executes each operation at least once and then repeatedly executes any operation nodes whose semantics function is not satisfied.
\end{proof}


\subsubsection{Discussion} \label{sec:execute_batch_discussion}

To guarantee termination and polynomial runtime complexity, Algorithm \ref{algo:find_batch_order} takes a conservative approach to the search for a valid execution sequence of model operations. In particular, despite the existence of a valid execution sequence, it may not be able to find such a sequence for certain \eGDN{}s. There are multiple reasons for this. First, it foregoes backtracking during the computation of a potential execution order for performance reasons, meaning that it only explores one possible order and aborts as soon as a cycle is detected. Allowing backtracking could be an option to explore additional execution orders. Second, the propagation of information through the \eGDN{} is only emulated via the conservative $dir$ function, which does not take any information about the concrete contents of a slot node into account.

To attempt to execute arbitrary \eGDN{}s Algorithm \ref{algo:execute_batch}, can hence be used as a fallback. While it provides no guarantee\footnote{note that, as described in \cite{gleitze2021finding}, the general problem of termination for networks of model operations is undecidable}, in several cases such as those described in Section \ref{sec:execute_batch_termination} the algorithm will still terminate.

However, the fact that Algorithm \ref{algo:find_batch_order} has polynomial runtime complexity and works completely independently of concrete slot contents means that it can be used as a static analysis technique for \eGDN{}s that can provide a guarantee that execution terminates for an envisioned set of slots initially populated by users. For some \eGDN{}s and configurations, such as those considered in Section \ref{sec:evaluation_applicability}, the algorithm will also always find a valid execution order.

With minor modification of its output, the algorithm can also be used to compute which slots will be written during the execution of an \eGDN{}. This can enable a check whether for a set of slots with user-created contents, \eGDN{} execution may overwrite such manually created information, which may be deemed undesirable. That is, for a set $S_i$ of manually populated slots in an \eGDN{} $G$, the algorithm could be used to find the set of slots $closure_{populate}(G, S_i)$ that would be repopulated by an execution of $G$ with initially populated slots $S_i$ and a warning could be issued if $S_i \cap closure_{populate}(G, S_i) \neq \emptyset$.

}

\subsection{\eGDN{}s as Megamodels}

Since an \eGDN{} encodes a network of model operations connecting a set of potentially integrated models, it represents a \textbf{megamodel}. The definition of \eGDN{}s thus constitutes a language for megamodels.

Importantly, \eGDN{}s allow the composition of model operations from nodes that realize suboperations. In addition, they also allow hierarchical composition: An \eGDN{} (and therefore also a basic GDN or RETE net) can be interpreted as an \eGDN{} operation node. The input and output slots are given by the input respectively output slots of its nodes that are connected to another operation node of the parent \eGDN{}. Any slots of the child \eGDN{} without a connection to another node of the parent \eGDN{} can act as internal slots of the child and do not have to be exposed to the parent. However, some such potential internal slots may also be considered input or output slots if their contents are relevant to human users. The semantics function of an operation node representing a sub-\eGDN{} is then implicitly defined by the semantics functions of that \eGDN{}'s own operation nodes.

In addition to (hierarchical) composability, \eGDN{}s support modularity in the sense that the semantics of an operation node regarding its output slots directly depend only on the contents of its immediate input slots. Thereby, integrating additional operation nodes (along with additional slot nodes) into an \eGDN{} only requires appropriate wiring with the node's input and output slots, but is completely independent of any other operation nodes. Effectively, slots thus act as interfaces between model operations.

\eGDN{}s can also enable modularity at the model level by using the results of query nodes, potentially along with transformation nodes for propagating changes from the query results back to the base model, as model interfaces or views. For instance, simple projection queries in combination with access restrictions can be employed to implement different visibilities for different roles in a development process. Alternatively, dedicated view models in conjunction with bidirectional model synchronization operations can similarly serve to implement editable model views in the context of \eGDN{}s.


Figure \ref{fig:egdn_notation} shows our graphical notation for the visualization of \eGDN{}s. Slot nodes are depicted as rectangles and labelled ``A'' in the top right corner in the case of assignment slots and ``M'' in the case of model slots. Model slots that contain linking models are connected to the model slots containing the linked models via dashed arrows for visual clarity. Operation nodes are visualized as rectangles with rounded corners, with query nodes such as model properties or model consistency checks labelled ``Q'', transformation nodes such as model transformations and model synchronizations labelled ``T'', and mixed nodes such as sub-\eGDN{}s labelled ``X'' in the top right corner. In addition, all nodes are labeled according to the schema \emph{<name>:<type>}.

\begin{figure}
\includegraphics[width=\textwidth]{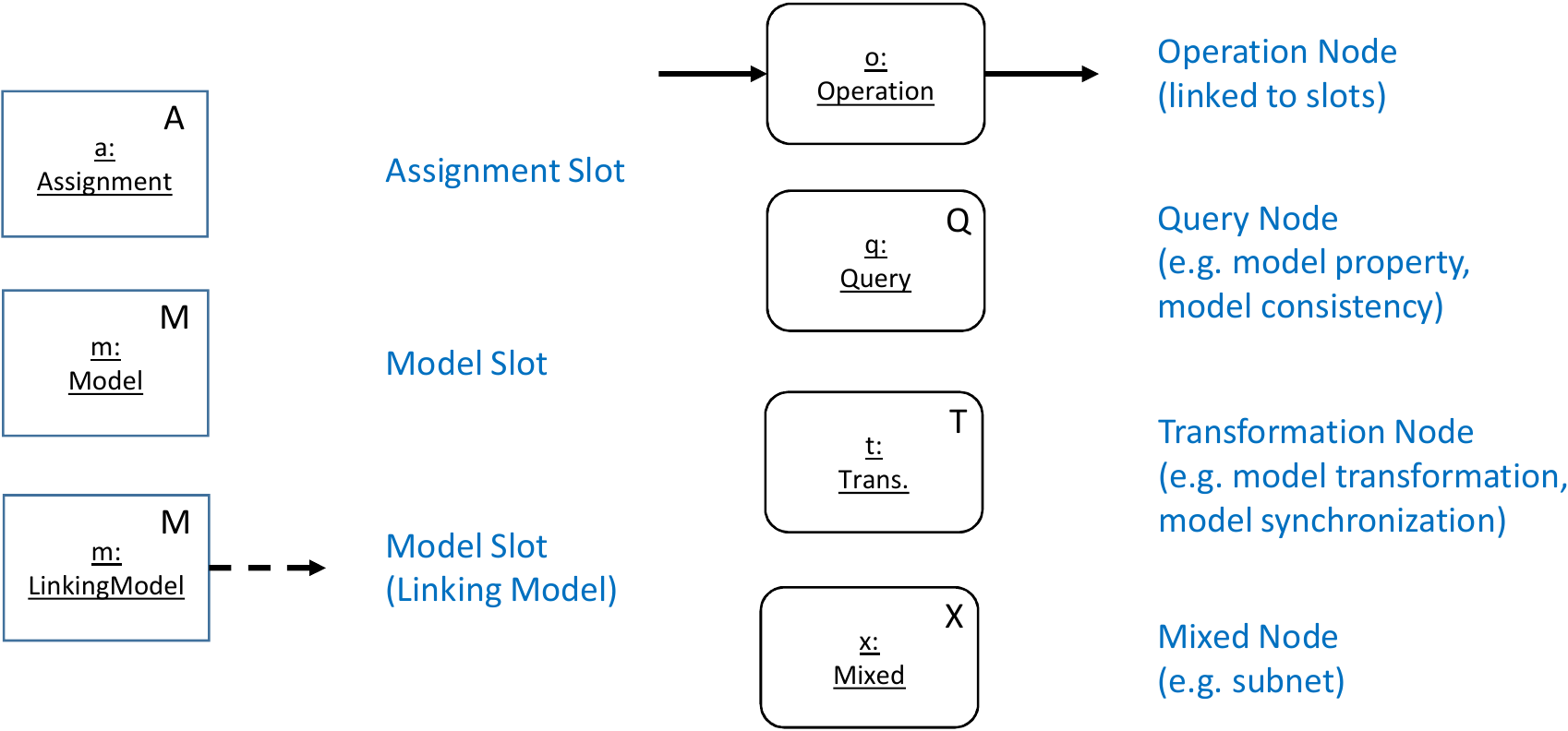}
\caption{Graphical notation for \eGDN{}s} \label{fig:egdn_notation}
\end{figure}


Figure \ref{fig:sample_egdn_simple} shows an example \eGDN{} that consists of three slot nodes and two operation nodes and realizes a simple chain of model operations. A class diagram stored in the leftmost slot is transformed into an abstract syntax graph via a transformation node. Then, a query node extracts some information from the created abstract syntax graph and makes the query result accessible via an assignment slot.

\begin{figure}
\includegraphics[width=\textwidth]{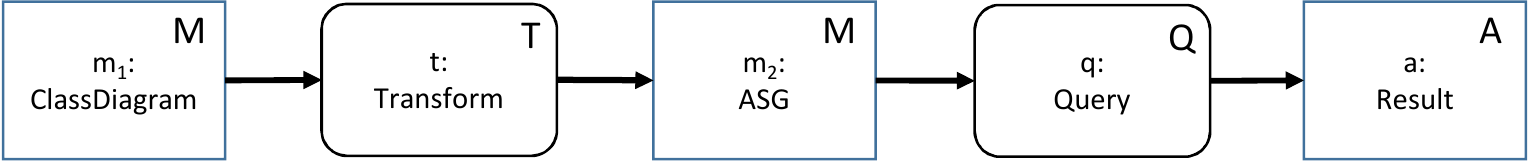}
\caption{Simple example \eGDN{}} \label{fig:sample_egdn_simple}
\end{figure}


\section{Incremental Execution of Extended Generalized Discrimination Networks}\label{sec:egdns_incremental}

In this chapter, we describe how \eGDN{}s can be executed to restore consistency in a network of models and model operations in reaction to external changes.

\subsection{Definitions regarding Incremental Execution}


As a result of edit operations by a user, a model $M$ in the model slot of an \eGDN{} can undergo changes. In this context, a change corresponds to the creation or deletion of a vertex or an edge and is characterized by an atomic model delta of one of four types:

\begin{itemize}
\item$\delta^V_+$ is a single-element tuple $(v)$, with $v$ a vertex; applying $\delta^V_+$ to $M$ modifies $M$ into $M' = (V^M \cup \{v\}, E^M, s^M, t^M)$
\item$\delta^V_-$ is a single-element tuple $(v)$, with $v \in V^M$, applying $\delta^V_-$ to $M$ modifies $M$ into $M' = (V^M \setminus \{v\}, E^M, s^M, t^M)$
\item$\delta^E_+$ is a tuple $(e, s, t)$, with $e$ an edge and $s, t \in V^M$; applying $\delta^E_+$ to $M$ modifies $M$ into $M' = (V^M, E^M \cup \{e\}, s^M \cup \{(e, s)\}, t^M \cup \{(e, t)\})$
\item$\delta^E_-$ is a single-element tuple $(e)$, with $e \in E^M$; applying $\delta^E_-$ to $M$ modifies $M$ into $M' = (V^M, E^M \setminus \{e\}, s^M \setminus \{(e, s^M(e))\}, t^M \setminus \{(e, t^M(e))\})$
\end{itemize}

Importantly, this notion of atomic deltas can also cover the case of changes to attribute values in models in the form of typed attributed graphs \cite{heckel2002confluence}. In this context, attributes can be modeled via dedicated vertices representing attribute values and edges representing the assignment of these values to attributes of regular vertices.

Note that we do not allow implicit deletion of edges. If a vertex is deleted, it must not have any adjacent edges, that is, all adjacent edges have to be deleted previously. Similarly, if an edge is created, adjacent vertices have to be present in the model already.

Changes to the assignment set $A$ in a slot $s$ can similarly be described by atomic slot deltas:

\begin{itemize}
\item$\delta^A_+$ is a single-element tuple $(a)$, where $a \in dom(s)$; applying $\delta^A_+$ to the assignment set $A$ modifies $A$ into $A' = A \cup \{a\}$ 
\item$\delta^A_-$ is a single-element tuple $(a)$, where $a \in dom(s)$; applying $\delta^A_-$ to the assignment set $A$ modifies $A$ into $A' = A \setminus \{a\}$ 
\end{itemize}

For a slot node $s$, we denote the set of all possible atomic deltas over $dom(s)$ by $dom_\Delta(s)$ and the set of all possible sequences of elements in $dom_\Delta(s)$ by $\mathbb{S}(dom_\Delta(s))$.

Atomic deltas can be applied to a model or assignment set via an $apply$ procedure. We overload this procedure to also work with a sequence of atomic deltas, in which case the procedure applies the individual deltas in the order specified by the sequence.

We say that a sequence of atomic deltas $\Delta$ is \emph{minimal} for the contents of a slot node $s$, iff for all possible contents $v \in dom(s)$, it holds that $\nexists \Delta' \in \mathbb{S}(dom_\Delta(s)): apply(v, \Delta') = apply(v, \Delta)$, where we only consider equality of graphs up to isomorphism.


\OUT{\eGDN{} allows a naive reaction to model changes by simply performing a full recomputation of all operation node's results, that is, the contents of the \eGDN{}s slot nodes, via $populate$. This approach can however be unnecessarily expensive. Hence, an incremental reaction that only updates the contents of slot nodes instead is desirable.}

To enable reacting to model changes with an \eGDN{} $G = (O, S, E, s, t)$, an operation node $o \in O$ with input slots $in(o) = \{s_{i_1}, ..., s_{i_k}\}$ and output slots $out(o) = \{s_{o_1}, ..., s_{o_l}\}$ can be equipped with an $update$ procedure. This procedure is parametrized with a valuation function for $G$ and realizes a function
$\gamma^\delta : dom(s_{i_1}) \times ... \times dom(s_{i_k}) \times \mathbb{S}(dom_\Delta(s_{i_1})) \times ... \times \mathbb{S}(dom_\Delta(s_{i_k})) \times dom(s_{o_1}) \times ... \times dom(s_{o_l}) \rightarrow \mathbb{F}_\Delta$, with $\mathbb{F}_\Delta$ the set of functions $f_\Delta: out(o) \rightarrow \bigcup_{s_{o_i} \in out(o)} \mathbb{S}(dom_\Delta(s_{o_i}))$ such that $\forall s_{o_i} \in out(o) : f_\Delta(s_{o_i}) \in \mathbb{S}(dom_\Delta({s_{o_i}}))$.

To store deltas to react to later, $o$ is also extended by an array $o.\Delta$ that caches sequences of atomic deltas for its input and output slots, which can in practice be collected via a notification mechanism and the observer design pattern \cite{gamma1995design}. 

Calling $o.update(val)$ with $val$ a valuation function for $G$'s slots then yields the value of $\gamma^\delta$ parametrized according to $val$ and the cached sequences of deltas: $o.update(val) = \gamma^\delta(val(s_{i_1}), ..., val(s_{i_k}), o.\Delta[s_{i_1}], ..., o.\Delta[s_{i_k}], val(s_{o_1}), ..., val(s_{o_l}))$.

Intuitively, the $update$ procedure of an operation node should produce a sequence of deltas for the node's output slots that update the contents of these output slots to be consistent with the current contents of the operation node's input slots. Therefore, in addition to the contents of slots adajcent to $o$, an $update$ procedure may also consider additional information in the form of deltas to input slots to enable a more efficient realization.

Formally, an update procedure $update$ of operation node $o$ with input slots $in(o) = \{s_{i_1}, ..., s_{i_k}\}$, output slots $out(o) = \{s_{o_1}, ..., s_{o_l}\}$, and associated function $\gamma^\delta$ is correct iff for parameters $\Delta_1 \in \mathbb{S}(dom_\Delta(s_{i_1})), ..., \Delta_k \in \mathbb{S}(dom_\Delta(s_{i_k})$), $v_{i_1} \in dom(s_{i_1}), ..., v_{i_k} \in dom(s_{i_k})$, and $v_{o_1} \in dom(s_{o_1}), ..., v_l \in dom(s_{o_l})$,

\begin{equation*}
\begin{split}
&\exists f \in \gamma_S(v_{i_1}, ..., v_{i_k}): \forall s_{o_i} \in out(o): apply(v_{o_i}, f_\Delta(s_o)) = f(s_{o_i}),
\end{split}
\end{equation*}

with $f_\Delta = \gamma^\delta(\Delta_1, ..., \Delta_k, v_{i_1}, ..., v_{i_k}, v_{o_1}, ..., v_{o_l})$ and $\forall i \in [1, k], j \in [1, l] : s_{i_i} = s_{o_j} \rightarrow v_{i_i} = v_{o_j}$.

In many cases, the efficient realization of an $update$ procedure requires a relaxed notion of correctness, which requires the contents of the output slot to be consistent with the contents of the input slots before the application of the deltas according to $o$'s semantics function. In the following, we will refer to this relaxed notion of correctness as \emph{conditional correctness}, which is formally given by

\begin{flalign*}
&\exists v'_{i_1} \in dom(s_{i_1}), ..., v'_{i_k} \in dom(s_{i_k}):
\\
&\quad(apply(v'_{i_1}, \Delta_1) = v_{i_1} \wedge ... \wedge apply(v'_{i_k}, \Delta_k) = v_{i_k} \wedge
\\
&\quad\exists f' \in \gamma_S(v'_{i_1}, ..., v'_{i_k}):
\\
&\quad\quad(\forall s_{i_i} \in in(o) \cap out(o): v'_{i_i} = f'(s_{i_i}) \wedge
\\
&\quad\quad\forall s_{o_i} \in out(o) \setminus in(o): v_{o_i} = f'(s_{o_i})))
\\
&\rightarrow
\\
&\exists f \in \gamma_S(v_{i_1}, ..., v_{i_k}): \forall s_{o_i} \in out(o): apply(v_{o_i}, f_\Delta(s_o)) = f(s_{o_i}),
\end{flalign*}

with $f_\Delta = \gamma^\delta(\Delta_1, ..., \Delta_k, v_{i_1}, ..., v_{i_k}, v_{o_1}, ..., v_{o_l})$ and $\forall i \in [1, k], j \in [1, l] : s_{i_i} = s_{o_j} \rightarrow v_{i_i} = v_{o_j}$.

\OUT{For any operation node without a dedicated $update$ procedure, a generic na\"ive $update$ procedure can be constructed from the node's $valid$ and $populate$ procedures. Therefore, it is simply checked whether $valid$ returns $\texttt{true}$. If this is not the case, a full recomputation of the node's results is performed via $populate$ and deltas are constructed for each output slot via a comparison of its current contents to the result of the call to $populate$.}

We say that the realization of an $update$ procedure of an operation node $o$ is \emph{fully incremental} iff for a valuation function $val$ and cached deltas $\Delta_1, ..., \Delta_k$ with $\sum_{i \in \{1, ..., k\}} |\Delta_i| = 1$, that is, a single atomic delta as an input, (i) the runtime complexity is in $O(|\Delta_o|)$, with $\Delta_o = \bigcup_{s_{o} \in out(o)} o.update(val)(s_o)$, and (ii) the produced sets of deltas for each output slot are minimal.

\OUT{Analogously to $populate$ procedures, we can define non-recursiveness and potential update directions for $update$ procedures.}

In some cases, as with bidirectional or in-place model transformations, operation nodes may be connected to a slot via both an incoming and an outgoing edge, making such a slot simultaneously an input and output slot to the same operation node. Such an operation node may as a result exhibit recursive behavior, since an application of its $update$ procedure can also change the contents of the operation node's input slots and thus necessitate further calls to $update$ to restore consistency. In this context, we call an $update$ procedure of an operation node $o$ is \emph{non-recursive}, if, after one execution of $o$'s update function and subsequent application of the resulting deltas to $o$'s output slot values, a second execution with updated slot values never yields any new deltas.

Formally, an $update$ procedure of an operation node $o$ with input slots $in(o) = \{s_{i_1}, ..., s_{i_k}\}$ and output slots $out(o) = \{s_{o_1}, ..., s_{o_l}\}$, is \emph{non-recursive}, if for any possible parametrization $v_{i_1} \in dom(s_{i_1}), ..., v_{i_k} \in dom(s_{i_k})$, $\Delta_1 \in \mathbb{S}(dom_\Delta(s_{i_1})), ..., \Delta_k \in \mathbb{S}(dom_\Delta(s_{i_k}))$, and $v_{o_1} \in dom(s_{o_1}), ..., v_ {o_l} \in dom(s_{o_l})$, it holds that

\begin{equation*}
\forall s_o \in out(o) : \gamma^\delta(\Delta_1', ..., \Delta_k', v_{i_1}', ..., v_{i_k}', v_{o_1}', ..., v_{o_l}')(s_o) = \emptyset,
\end{equation*}

where

\begin{equation*}
	\Delta_i' = 
	\begin{cases}
		f_\Delta(s_i) &\quad\text{if } s_{i_i} \in out(o)\\
		\Delta_i &\quad\text{otherwise}
	\end{cases}
\end{equation*}

and

\begin{equation*}
	v_{i_i}' = 
	\begin{cases}
		apply(v_{i_i}, f_\Delta(s_i)) &\quad\text{if } s_{i_i} \in out(o)\\
		v_{i_i} &\quad\text{otherwise}
	\end{cases}
\end{equation*}

and

\begin{equation*}
	v_{o_i}' = apply(v_{o_i}, f_\Delta(s_{o_1})),
\end{equation*}

with $f_\Delta = \gamma^\delta(\Delta_1, ..., \Delta_k, v_{i_1}, ..., v_{i_k}, v_{o_1}, ..., v_{o_l})$.

The \emph{potential update directions} of an update procedure of operation node $o$ for a set of input slots $S_i \subseteq in(o)$ are given by $o.dir_\Delta(o, S_i)$, where for a slot $s_o \in out(o)$,

\begin{equation*}
\begin{split}
s_o \in o.dir_\Delta(o, S_i) \leftrightarrow & \exists \Delta_1 \in \mathbb{S}(dom_\Delta(s_{i_1})), ..., \Delta_k \in \mathbb{S}(dom_\Delta(s_{i_k})), \\
	&\quad v_{i_1} \in dom(s_{i_1}), ..., v_{i_k} \in dom(s_{i_k}), \\
	&\quad v_{o_1} \in dom(s_{o_1}), ..., v_{o_l} \in dom(s_{o_l}) : \\
		&\quad\quad \forall s_{i_i} \in in(o) \setminus S_i : \Delta_i = \emptyset \wedge \\
		&\quad\quad \gamma^\delta(\Delta_1, ..., \Delta_k, v_{i_1}, ..., v_{i_k}, v_{o_1}, ..., v_{o_l})(s_o) \neq \emptyset
\end{split}
\end{equation*}

Intuitively, $o.dir_\Delta(o, S_i)$ thus denotes the subset of output slots for which $o$'s update procedure may generate deltas if the contents of at most the input slots in $S_i$ have changed.

A function $dir_\Delta$ for potential update directions is monotonic by definition in the sense that $\forall S_{i_1}, S_{i_2} \subseteq in(o): S_{i_1} \subseteq S_{i_2} \rightarrow o.dir_\Delta(o, S_{i_1}) \subseteq o.dir_\Delta(o, S_{i_2})$. We say that $dir_\Delta$ is \emph{union monotonic} if it furthermore holds that $\forall S_{i_1}, S_{i_2} \subseteq in(o): o.dir_\Delta(S_{i_1}) \cup o.dir_\Delta(S_{i_2}) = o.dir_\Delta(S_{i_1} \cup S_{i_2})$.

In the following, we present algorithms for the incremental execution of an \eGDN{} based on the $update$ procedures of its operation nodes. For these algorithms, we assume that deltas cached in the input \eGDN{} are consistent in the sense that they correspond to a modification from slot contents that were consistent with the semantics functions of all operations in the \eGDN{} to the current contents. Intuitively, this assumption simply implies that the presented algorithms can only produce consistent slot contents if the slot contents were previously consistent at some point and all changes since then have been tracked and cached in the \eGDN{}.


\subsection{Incremental Execution with Guaranteed Termination}
 
Given a correct $update$ function for each operation node, an input \eGDN{} $G = (O, S, E, s, t)$ can be executed incrementally in the context of a valuation function $val$ via Algorithm \ref{algo:execute_incremental_order}. Therefore, Algorithm \ref{algo:execute_incremental_order} first derives an ordering of $G$'s operation nodes and then updates the $val$ function by executing the nodes' $update$ functions, applying the resulting deltas to the appropriate slots, and updating the cached deltas.

Importantly, the employed ordering has to guarantee correct results in the sense that the contents of $G$'s slots after the execution must be consistent with the semantics functions of all of its operation nodes, that is, it must hold that $\forall o \in O : o.valid(val)$.

If $G$ takes the form of a directed acyclic graph and operation nodes do not share output slots, such an ordering can be obtained by simply sorting $G$'s operation nodes topologically. However, requiring DAG structure represents a substantial restriction, as it effectively prohibits bidirectional transformations where some input slots are also output slots. Moreover, the assumption regarding the complete absence of shared output slots, while required to prevent overwriting of operation's results, is another obstacle to realizing several desirable use cases, for instance those involving chains of bidirectional transformations.

Based on the properties of an \eGDN{}'s operation nodes with respect to non-recursiveness and potential update directions, an appropriate order can also be found for certain cyclical \eGDN{}s, with a relaxed assumption regarding shared output slots. Algorithm \ref{algo:find_incremental_order} represents an analysis for an \eGDN{} $G$ that contains only nodes with non-recursive $update$ procedures and a set of slots $S_i$ with initially modified contents. If successful, the algorithm returns an execution order that can be used instead of the topological ordering in Algorithm \ref{algo:execute_incremental_order}. Importantly, the computed ordering still yields a valuation function that is consistent with all operations' semantics.

\SetKwFunction{ExecuteIncrementalDAG}{ExecuteIncrementalDAG}
\SetKwFunction{FindValidUpdateOrder}{FindValidUpdateOrder}
\SetKwFunction{Apply}{apply}
\begin{algorithm}
\LinesNumbered
\myproc{\ExecuteIncrementalDAG{$G = (O, S, E, s, t), val$}} {
	\Input{$G$: The \eGDN\\
	$val$: A valuation function for $G$'s slots}
	\BlankLine
	
	$D \leftarrow \FindValidUpdateOrder{$O, \{s \in S | \exists o \in O : o.\Delta[s] \neq \emptyset\}$}$\;
	\If{$D \neq \textbf{null}$} {
		\ForEach{$o \in D$} {													\label{line:execute_incremental_order_main_loop}
			$\Delta_o \leftarrow o.update(val)$\;
			\ForEach{$s_o \in out(o)$} {
				$val(s_o) \leftarrow \Apply(val(s_o), \Delta_o(s_o))$\;
				\ForEach{$o' \in out(s_o)$} {
					$o'.\Delta[s_o] \cup \Delta_o$\;
				}
			}
			\ForEach{$s \in in(s) \cup out(s)$} {
				$o.\Delta[s] \leftarrow \emptyset$\;
			}
		}
	}
}
\BlankLine
\caption{Incremental algorithm for executing an \eGDN{} based on an ordering of its operation nodes} \label{algo:execute_incremental_order}
\end{algorithm}

\SetKwFunction{SortTopologically}{SortTopologically}
\begin{algorithm}
\LinesNumbered
\myproc{\FindValidUpdateOrder{$G = (O, S, E, s, t), S_i$}} {
	\Input{$G$: The \eGDN\\
		$S_i$: The set of initially changed slots}
	
	$C \leftarrow \textbf{new }Array(|O|)$\;
	$C.init(\emptyset)$\;
	$Q \leftarrow \textbf{new }Queue$\;
	$G_T = \textbf{new }Graph$\;
	
	\BlankLine
	
	\ForEach{$o \in in(S_i) \cup out(S_i)$} {											\label{line:find_incremental_order_init_loop}
		$Q.enqueue(o)$\;
		$C[o] \leftarrow S_i \cap in(o)$\;
	}
	$G_T.addVertices(Q)$\;
	\While{$\neg Q.isEmpty()$} {													\label{line:find_incremental_order_main_loop}
		$o \leftarrow Q.dequeue()$\;												\label{line:find_incremental_order_dequeue}
		$S_o \leftarrow o.dir_\Delta(o, C[o])$\;
		$O_o \leftarrow out(S_o) \cup in(S_o) \setminus \{o\}$\;
		\ForEach{$o' \in O_o$} {													\label{line:find_incremental_order_queue_loop}
			\If{$\neg o' \in Q$} {
				$Q.enqueue(o')$\;
			}
			$C[o'] \leftarrow C[o'] \cup (S_o \cap in(o'))$\;
			$G_T.addVertexIfNotExists(o')$\;
			$G_T.createEdgeIfNotExists(o, o')$\;
			\If{$G_T.hasCycle()$} {
				\Return \textbf{null}\;
			}
		}
		$C[o] \leftarrow \emptyset$\;
	}
	
	\Return \SortTopologically{$G_T$}\;
}
\BlankLine
\caption{Static analysis algorithm for finding an \eGDN{} update order} \label{algo:find_incremental_order}
\end{algorithm}

The algorithm first creates an array $C$ with one cell per operation node in $O$ and initializes it with empty sets. It also initializes a queue $Q$ with all operation nodes that are connected to a slot in $S_i$ and, for each such operation node, stores the set of its input slots that are also in $S_i$ in the corresponding cell in $C$. Then, a slightly modified breadth-first search is performed over the \eGDN{} structure using the initialized queue $Q$ to essentially simulate an execution of $G$ without concrete inputs.

Therefore, the procedure loops until $Q$ is empty. In each loop execution, the first operation node $o$ in $Q$ is dequeued. Then, all output slot nodes for which deltas could be produced due to the execution of $o'$s update procedure $S_o$ are obtained based on $o$'s potential update directions and the set of slots that might currently contain unhandled deltas, which is retrieved from $C$. Afterwards, all operation nodes $o'$ connected to a slot in $S_o$ are added to $Q$ if they are not yet contained. Also, the set of $o$'s input slots with potentially unhandled deltas stored in $C$ is updated based on $S_o$. An exception is made for the currently considered node $o$, which is never added to the queue again and whose set of input slots with potentially unhandled deltas is reset to the empty set, exploiting the assumption that all $update$ procedures in the \eGDN{} are non-recursive.

During execution, the algorithm keeps track of the dependencies between $G$'s operations in a trigger graph $G_T$. Execution aborts by returning \textbf{null} as soon as a cyclical dependency is detected, which may indicate a potential infinite loop in $G$'s execution for the initially populated slots $S_i$. This also guarantees that after a full execution of the loop in line \ref{line:find_incremental_order_main_loop}, $G_T$ is a DAG.

Finally, a topological ordering of $G_T$, is returned as a possible canonic execution order that, under the mentioned assumptions, produces a valuation function for the input \eGDN{}'s slots that is consistent with the semantics functions of all of the \eGDN{}'s operation nodes.

While the presented algorithm is formulated to handle incremental changes to a network of models and model operations, the batch case that requires an initial execution of model operations to derive corresponding query results and transformed models for an initial set of existing models can be handled in a straightforward manner. Therefore, an incremental construction of the initially existing models can be emulated by deriving trivial sequences of corresponding creation operations, which can act as the starting point for the algorithm. This only requires the assumption that the case where all slots of an \eGDN{} are empty constitutes a consistent valuation regarding the semantics of all of the \eGDN{}'s operations, which seems reasonable. The additional assumption is essentially required to satisfy the rerquirement regarding consistency of initially cached deltas with the current state.

\subsubsection*{Termination}

By including the additional termination criterion in the loop in line \ref{line:find_incremental_order_main_loop} of Algorithm \ref{algo:find_incremental_order} that requires the constructed dependency graph to be acyclic, Algorithm \ref{algo:find_incremental_order} is guaranteed to terminate.

\begin{theorem} \label{the:find_incremental_order_termination}
Algorithm \ref{algo:find_incremental_order} always terminates.
\end{theorem}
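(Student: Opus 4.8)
The plan is to prove Theorem~\ref{the:find_incremental_order_termination} by an argument that closely mirrors the termination reasoning already given for the batch analysis in Algorithm~\ref{algo:find_batch_order}, adapted to the incremental bookkeeping via the array $C$. First I would decompose Algorithm~\ref{algo:find_incremental_order} into its initialization, the initialization loop in line~\ref{line:find_incremental_order_init_loop}, the main loop in line~\ref{line:find_incremental_order_main_loop}, and the concluding topological sort. Every loop other than the main loop iterates over a finite subset of $O$ (either $in(S_i) \cup out(S_i)$ or $O_o$), and every primitive operation invoked---computing $dir_\Delta$, the set operations on $C$, the queue and graph manipulations, the cycle test \texttt{hasCycle}, and the final topological sort---terminates on finite inputs. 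Hence the whole problem reduces to showing that the main loop in line~\ref{line:find_incremental_order_main_loop} cannot run forever.

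For the main loop I would argue by contradiction and assume it does not terminate. Since exactly one operation node is dequeued per iteration and $O$ is finite, the loop can fail to empty $Q$ only if operation nodes are enqueued infinitely often; by the pigeonhole principle there is then a nonempty set $I \subseteq O$ of nodes that are dequeued, and hence processed, infinitely often. The key step is to tie these repeated enqueues to edges of the trigger graph $G_T$: whenever a node $o'$ is enqueued inside the loop it is while processing some predecessor $o$ with $o' \in O_o = out(S_o) \cup in(S_o) \setminus \{o\}$, and in that same iteration the call \texttt{createEdgeIfNotExists}$(o,o')$ guarantees the edge $(o,o')$ is present in $G_T$ from then on. Because only finitely many enqueues of $o'$ can come from the one-time initialization loop, infinitely many of them arise in the main loop, each via a predecessor; as $o'$ has only finitely many possible predecessors, some fixed $o$ triggers the enqueue infinitely often, so $o$ is itself processed infinitely often and thus $o \in I$.

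I would then conclude as follows. The previous paragraph shows that every node of $I$ has an in-neighbour in $I$ with respect to the edge set of $G_T$. A finite directed graph in which no vertex has in-degree zero must contain a directed cycle, so the edges of $G_T$ restricted to $I$ contain a cycle $o_1 \to o_2 \to \dots \to o_k \to o_1$. Since edges are only ever added to $G_T$ and never removed, all of these edges coexist from the finite moment at which the last of them is created, and the \texttt{hasCycle} check run immediately afterwards returns true, causing the procedure to return \textbf{null} and terminate. This contradicts the assumed non-termination, so the main loop always terminates---either by emptying $Q$ or by aborting---and with it Algorithm~\ref{algo:find_incremental_order}.

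The main obstacle I anticipate is making the ``triggers enqueue'' relation precise and tying it to \emph{persistent} edges of $G_T$ rather than to edge \emph{creations}. Because the queue-membership guard forbids only duplicates currently in $Q$, a node may be enqueued arbitrarily often after leaving $Q$, and such a re-enqueue can travel along an edge that already exists; consequently one cannot bound the number of enqueues by the at most $|O|^2$ possible edges, and the naive ``each enqueue adds an edge'' counting argument fails. The argument must instead rest on the invariant that $G_T$ stays acyclic throughout execution, with any completed cycle triggering an immediate abort, combined with the monotone accumulation of edges---exactly the mechanism used in the batch case for Algorithm~\ref{algo:find_batch_order}. Once this connection is set up carefully, the ``no source implies a cycle'' observation on the induced subgraph $G_T[I]$ supplies the contradiction cleanly.
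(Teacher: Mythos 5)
Your proof is correct and follows essentially the same route as the paper's own argument: all loops except the main one iterate over finite sets, and non-termination of the main loop would require nodes to be re-enqueued infinitely often, which forces a cyclic triggering structure that necessarily materializes as a cycle in $G_T$ and is caught by the \texttt{hasCycle} check, causing an abort. Your version is simply a more rigorous elaboration (pigeonhole to obtain the infinitely-processed set $I$, the no-source-implies-cycle lemma, and persistence of edges) of what the paper states informally.
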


\begin{proof}
Except for the loop in line \ref{line:find_incremental_order_main_loop}, all loops only iterate over finite sets, and all individual operations always terminate. The loop in line \ref{line:find_incremental_order_main_loop} also always terminates due to the termination criterion regarding cyclical dependencies between the \eGDN{}'s operation nodes: Since one operation node is removed from $Q$ in each loop iteration, termination is only threatened if operation nodes keep getting added to $Q$. Since there is only a finite number of operation nodes, infinite behavior can only occur as a result of cycles in the modified breadth-first search. However, such cycles are detected via $G_T$ and immediately lead to abortion of the execution.
\end{proof}

Consequently, Algorithm \ref{algo:execute_incremental_order} is also guaranteed to terminate if the execution of the input \eGDN{}'s $update$ procedures always terminates.

\begin{theorem}
For an input \eGDN{} $G$, Algorithm \ref{algo:execute_incremental_order} always terminates if the $update$ procedures of $G$'s operation nodes always terminate.
\end{theorem}

\begin{proof}
According to Theorem \ref{the:find_incremental_order_termination}, Algorithm \ref{algo:find_incremental_order} always terminates by either aborting or returning a sequence of operation nodes. Such a sequence being returned implies that the sequence is finite. The loop in line \ref{line:execute_incremental_order_main_loop} is thus only executed for finitely many iterations. Since all other loops only iterate over finite sets and all individual operations always terminate due to the assumption regarding $G$'s $update$ procedures, Algorithm \ref{algo:execute_incremental_order} always terminates.
\end{proof}

\subsubsection*{Correctness}

\OUT{
\begin{theorem} \label{the:correctness_incremental_order}
For inputs $G = (O, S, E, s, t)$ and $val$, if all employed $update$ procedures are correct and non-recursive and if the valuation function before the deltas cached in $G$ was consistent with the semantics of $G$'s operation nodes, Algorithm \ref{algo:execute_incremental_order} aborts or produces a final valuation function $val$ such that $\forall o \in O : o.valid(val)$.
\end{theorem}

\begin{proof}
Correctness of Algorithm \ref{algo:execute_incremental_order} can be shown by proving that for inputs $G = (O, S, E, s, t)$ and $val$, Algorithm \ref{algo:find_incremental_order} produces an ordering that guarantees a final valuation function $val$ such that $\forall o \in O : o.valid(val)$. This in turn can be proven via a loop invariant for the main loop of the algorithm.

For the loop in line \ref{line:find_incremental_order_main_loop}, the following invariant holds: For the valuation function $val'$ that would result from executing the sequence of model operations $R$, it holds that (i) $\forall o \in O : \forall s_i \in in(o): o.\Delta[s_i] \neq \emptyset \rightarrow s_i \in C[o]$ and (ii) $\forall o \in O : \neg o.valid(val') \rightarrow o \in Q$. The satisfaction of the invariant can be shown via induction over the number of loop iterations.

The base case for (i) holds due to the definition of what is passed as parameter $S_i$ in Algorithm \ref{algo:execute_incremental_order} and how $C$ is initialized. Since $val(o)$ can only be false for an operation node $o$ adjacent to a changed slot according to the assumption regarding consistency before the cached deltas, the base case also holds for (ii) due to how $Q$ is initialized.

In each loop iteration, exactly one operation node $o$ is removed from $Q$ and appended to $R$ twice. Based on the definition of $o.dir_\Delta()$ and its monotonicity property and the assumption regarding non-recursiveness of $update$ procedures, all slots $s$ for which $o$'s $update$ procedure may produce deltas are added to the corresponding cell in $C$ for all their adjacent operations $o'$ except $o$. Thus, the induction step holds for (i). Furthermore, all such operation nodes $o'$ are also added to $Q$ if they are not yet contained and no other operation node's semantics function can be violated as a result of the execution of $o$. For $o$ itself, after its execution it holds that $o.valid(val')$ due to the assumption regarding correctness of $update$ functions. The induction step hence also holds for (ii). 

Thus, the invariant holds for the loop in line \ref{line:find_incremental_order_main_loop}.

Therefore, due to the termination criterion of the loop, if Algorithm \ref{algo:find_incremental_order} does not abort, it returns an ordering of operation nodes which, if executed via their $update$ procedures, guarantees a final valuation function $val$ such that $\forall o \in O : o.valid(val)$ under the given assumptions.

Hence, the theorem holds.
\end{proof}
}

The following theorem states the correctness of a canonic execution order resulting from an execution of Algorithm \ref{algo:find_incremental_order} for the case that all $dir_\Delta$ functions are union monotonic.

\begin{theorem} \label{the:correctness_incremental_order}
For inputs $G = (O, S, E, s, t)$ and $val$, if all $update$ procedures in $G$ are correct and non-recursive, all $dir_\Delta$ functions in $G$ are union-monotonic, and if the valuation function before the application of the deltas cached in $G$ was consistent with the semantics of $G$'s operation nodes, Algorithm \ref{algo:execute_incremental_order} aborts or produces a final valuation function $val$ such that $\forall o \in O : o.valid(val)$.
\end{theorem}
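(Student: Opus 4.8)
The plan is to prove the non-abort case directly, since when Algorithm \ref{algo:find_incremental_order} returns \textbf{null} the body of Algorithm \ref{algo:execute_incremental_order} is skipped and there is nothing to show. So assume the algorithm did not abort; then Algorithm \ref{algo:find_incremental_order} returned a non-null order $R$, and by the cycle check in the loop of line \ref{line:find_incremental_order_main_loop} the trigger graph $G_T$ is acyclic with $R$ one of its topological sortings. The goal is $\forall o \in O : o.valid(val)$ for the final valuation. A key structural difference from the plain directed-acyclic case is that $G_T$ here contains only those operation nodes reachable from the initially changed slots $S_i$. I would therefore partition $O$ into the nodes occurring in $R$ (the vertices of $G_T$) and those that do not, argue the former by contradiction against the topological order using non-recursiveness, and argue the latter from the consistency of the pre-edit valuation.

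Both parts hinge on a completeness property of $G_T$, and this is exactly where union-monotonicity of the $dir_\Delta$ functions becomes indispensable: whenever the real execution in Algorithm \ref{algo:execute_incremental_order} causes a node $o$ to emit a nonempty delta for an output slot $s_o$, $G_T$ must contain an edge from $o$ to every operation node adjacent to $s_o$. The difficulty is that $C[o]$ is reset to the empty set once $o$ has been processed, and $o$ may be re-enqueued and processed several times, each dequeue seeing only a fragment of the input slots that eventually carry deltas. Plain monotonicity of $dir_\Delta$ only records the triggers of each fragment separately; the identity $o.dir_\Delta(S_{i_1}) \cup o.dir_\Delta(S_{i_2}) = o.dir_\Delta(S_{i_1} \cup S_{i_2})$ is what guarantees that the union of the triggers emitted across the separate processings coincides with the trigger set for the combined input-delta set that $o$ actually confronts at runtime. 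I would also invoke the (conditional) correctness of the $update$ procedures to ensure that every runtime delta lands on a slot that $dir_\Delta$ predicts, and use the consistency assumption to conclude that $S_i$ are the only slots initially bearing deltas, so that their adjacent nodes are all enqueued in line \ref{line:find_incremental_order_init_loop}. The fixpoint accumulation of edges then yields completeness of $G_T$.

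Given completeness, the two cases are short. For $o \notin R$, none of its adjacent slots can change: it cannot be adjacent to any slot of $S_i$ (else it is enqueued initially), and if any executed node emitted a delta onto a slot adjacent to $o$, completeness would force an incoming edge and hence $o$ would be a vertex of $G_T$, a contradiction. Since $o$ was valid under the consistent pre-edit valuation and its adjacent contents are unchanged, $o.valid(val)$ persists. For $o \in R$, non-recursiveness of $update$ gives $o.valid(val)$ immediately after $o$ is executed, so $\neg o.valid(val)$ at the end would require some $o'$ occurring later than $o$ in $R$ to write a slot adjacent to $o$; but $o'$ following $o$ in a topological sort of $G_T$ rules out the edge $o' \to o$ that completeness would then demand. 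Hence $o.valid(val)$ for all $o \in O$.

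The hard part will be the completeness lemma, and specifically reconciling the reset-and-reprocess behaviour of the array $C$ with the claim that all runtime trigger relationships are captured in $G_T$. This is precisely the step that fails under ordinary monotonicity: a node may never be processed with the full combined set of its changed input slots in a single dequeue, so the argument must decompose this set across the node's several processings and reassemble the induced triggers via the union-monotonicity identity. This accumulation argument is the delicate part and is closely analogous to the one that pins down the trigger graph uniquely in the corresponding non-incremental analysis.
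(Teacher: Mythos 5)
Your proposal is correct and reuses the paper's overall skeleton: the non-abort case, a topological order of the acyclic trigger graph $G_T$, validity of a node immediately after its execution (from non-recursiveness together with correctness), the case split between nodes in $R$ and nodes outside $R$, and pre-edit consistency for the latter. Where you genuinely differ is in how the crux is organized. The paper never isolates your completeness lemma; it argues contrapositively, by infinite descent: if an executed node $o'$ changes a slot adjacent to $o$ although $G_T$ has no edge from $o'$ to $o$, then union monotonicity yields a single input slot $s'$ with $o'.\Delta[s'] \neq \emptyset$ that was never in $C[o']$ at any dequeue, hence $s' \notin S_i$, hence some node earlier in $R$ wrote $s'$ and is subject to the same constraint, and so on back to the first node of $R$, which would need a predecessor --- a contradiction. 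Your forward version --- an induction along the execution order showing that every runtime delta emission is mirrored by edges of $G_T$, with union monotonicity reassembling the fragments of $C[o]$ seen across separate dequeues into the trigger set of the combined runtime input-delta set --- is the exact dual of that descent, and, as you observe, it is the same accumulation technique the paper uses to prove uniqueness of $G_T$ in Theorem \ref{the:find_incremental_order_determinism}; extracting it as a standalone lemma gives a cleaner and more reusable decomposition, at the cost of having to set up the induction along $R$ explicitly. Two attributions to tidy up when you write it out: the fact that a runtime delta can only land on a slot predicted by $dir_\Delta$ of the changed-input set is immediate from the definition of potential update directions, not a consequence of correctness of $update$ (correctness, with non-recursiveness, is what you need for $o.valid(val)$ right after executing $o$); and the fact that $S_i$ is exactly the set of slots carrying cached deltas is by definition of the argument Algorithm \ref{algo:execute_incremental_order} passes to Algorithm \ref{algo:find_incremental_order}, not a consequence of the pre-edit consistency assumption, which you need only to conclude that nodes not adjacent to $S_i$ are initially valid.
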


\begin{proof}
If Algorithm \ref{algo:execute_incremental_order} does not abort, a canonic execution order $R$ for $G$'s operation nodes has been generated by topologically sorting the resulting directed acyclic dependency graph $G_T$ of a terminating execution of Algorithm \ref{algo:find_incremental_order}.

Due to the non-recursiveness of $G$'s operation nodes, we know that after executing an operation node $o$ via Algorithm \ref{algo:execute_incremental_order}, it holds that $o.valid(val)$. Thus, for an operation node $o$, $\neg o.valid(val)$ can only hold after executing the entire sequence $R$ if there exists some operation node $o$' that comes after $o$ in $R$ and that changes the contents of a slot adjacent to $o$ or if $o \notin R$. Considering that all operation nodes that have an adjacent slot with initially modified contents are initially added to $G_T$, the algorithm has terminated, and that prior to the cached modifications of $G$'s slots, slot contents were consistent with all operation nodes' semantics, for a node $o \notin R$, $\neg o.valid(val)$ can also only hold if there is some operation node $o' \in R$ that changes the contents of a slot $s$ adjacent to $o$.

In either case, we know that there cannot exist an edge from $o'$ to $o$ in $G_T$, because $o'$ either comes after $o$ in the topological ordering or because the addition of such an edge would have caused $o$ to be added to $G_T$ and consequently $R$. This means that, due to the definition of $o'.dir_\Delta$ and because of the assumed union monotonicity, there must be a slot $s'$ with $o'.\Delta[s] \neq \emptyset$ and $s \in o'.dir_\Delta(\{s'\})$ before executing $o'$ that was never in the set of slots $C[o']$ when $o'$ was dequeued in line \ref{line:find_incremental_order_dequeue} of Algorithm \ref{algo:find_incremental_order}. Since slots are only removed from $C[o']$ when $o'$ is dequeued and corresponding edges are added, we know that $s' \notin S_i$ and thus $o'.\Delta[s'] = \emptyset$ at the start of Algorithm \ref{algo:execute_incremental_order}, as otherwise, $o'$ would have been added to $Q$ and the edge between $o'$ and $o$ would eventually have been created.

There hence must be a node $o''$ that comes before $o'$ in $R$ that modified the contents of $s'$. Also for $o''$, there must be a slot $s''$ with $o''.\Delta[s''] \neq \emptyset$ and $s' \in o''.dir_\Delta(\{s''\})$ before executing $o''$ that was never in $C[o'']$ whenever $o''$ was dequeued (because otherwise, the edge between $o'$ and $o$ would have been created eventually). Therefore, again, there must be an operation node before $o''$ in $R$ that modified the contents of $s''$ and for which the same constraints apply as for $o''$. Ultimately, this implies that for the first operation node in the sequence, there must be a predecessor that changes the contents of some slot node, which is obviously a contradiction.

Hence, there cannot be an operation node in $R$ whose execution changes the contents of a slot adjacent to a previous operation node in $R$ or an operation node not contained in $R$. Consequently, we know that after executing $R$, $\forall o \in O : o.valid(val)$.
\end{proof}

If the \eGDN{}'s $update$ functions are only conditionally correct, an additional constraint has to be introduced regarding \eGDN{} structure to guarantee correctness. Namely, operation nodes may not share output slots if the output slots are not also input slots of all sharing operation nodes, and output slots of a node that are not simultaneously input slots of the same node may not have their contents modified by users.

Intuitively, these conditions impose the restriction on the \eGDN{} structure that the contents of an operation node's output slot may not be modified by another operation node or a user without that operation node being able to pick up on and handle the changes.

\begin{corollary}
Assuming that $\forall o_1, o_2 \in O : o_1 \neq o_2 \rightarrow \forall s \in out(o_1) \cap out(o_2) : s \in in(o_1) \wedge s \in in(o_2)$ and $\forall o \in O : \forall s \in out(o) : o.\Delta[s] = \emptyset$, for inputs $G = (O, S, E, s, t)$ and $val$, if all $update$ procedures in $G$ are conditionally correct and non-recursive, all $dir_\Delta$ functions in $G$ are union-monotonic, and if the valuation function before the deltas cached in $G$ was consistent with the semantics of $G$'s operation nodes, Algorithm \ref{algo:execute_incremental_order} aborts or produces a final valuation function $val$ such that $\forall o \in O : o.valid(val)$.
\end{corollary}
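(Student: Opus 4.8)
The plan is to obtain this corollary as a refinement of Theorem~\ref{the:correctness_incremental_order} by isolating the single point in that theorem's proof where \emph{full} correctness of the $update$ procedures is actually used, and re-establishing that point under the weaker hypothesis of \emph{conditional} correctness together with the two structural assumptions. Inspecting the proof of Theorem~\ref{the:correctness_incremental_order}, correctness enters only through the claim that ``after executing an operation node $o$ via Algorithm~\ref{algo:execute_incremental_order}, it holds that $o.valid(val)$''; every subsequent step --- the no-backward-modification argument driven by the topological ordering of $G_T$ and the union-monotonicity of the $dir_\Delta$ functions --- is purely structural and therefore carries over unchanged. Hence it suffices to show that, whenever Algorithm~\ref{algo:execute_incremental_order} dequeues and executes $o$ in the computed order $R$, the precondition of conditional correctness is already satisfied, so that $o$'s $update$ yields exactly the deltas a fully correct procedure would and $o.valid(val)$ holds afterwards.

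To set this up I would fix the valuation $val_0$ describing the slot contents immediately before the deltas cached in $G$ were applied. By hypothesis $val_0$ is consistent, so $o.valid(val_0)$ holds for every $o$, yielding a witness $f' \in \gamma_S(val_0(s_{i_1}), \dots, val_0(s_{i_k}))$ with $f'(s_o) = val_0(s_o)$ for all $s_o \in out(o)$. The crux is then the following lemma: at the moment $o$ is executed, (i) every pure output slot $s_o \in out(o) \setminus in(o)$ still holds the value $val_0(s_o)$, and (ii) for every input slot the values $v'_{i_j}$ with $apply(v'_{i_j}, o.\Delta[s_{i_j}]) = val(s_{i_j})$ satisfy $v'_{i_j} = val_0(s_{i_j})$. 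Granting the lemma, $f'$ certifies both clauses of the conditional-correctness precondition simultaneously: for input-output slots $s \in in(o) \cap out(o)$ the prior value is $val_0(s) = f'(s)$, and for pure output slots $s_o$ the current value is $val_0(s_o) = f'(s_o)$. Conditional correctness therefore fires with the common witness $f'$ and delivers $o.valid(val)$ after the update, completing the reduction.

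The lemma is where the two structural assumptions do their work, and I expect establishing it to be the main obstacle. For part~(i), the assumption $\forall o_1 \neq o_2 : s \in out(o_1) \cap out(o_2) \rightarrow s \in in(o_1) \wedge s \in in(o_2)$ guarantees that a pure output slot of $o$ is written by no other operation, while $\forall o : \forall s \in out(o) : o.\Delta[s] = \emptyset$ rules out user edits to $o$'s output slots; since $o$ runs exactly once in $R$ and has not yet run, such a slot is untouched and still equals $val_0(s_o)$. For part~(ii), the point is that, because $o$ is executed only once, its cache $o.\Delta[s_{i_j}]$ is reset solely at that execution and hence accumulates \emph{all} modifications of $s_{i_j}$ since the start of the algorithm, whose undoing returns the globally consistent value $val_0(s_{i_j})$. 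The delicate case is an input-output slot $s \in in(o) \cap out(o)$ that a predecessor $o'$ in $R$ legitimately modifies --- permitted precisely because assumption~1 forces $s \in in(o')$ as well --- where the modification must be tracked as an input delta appended to $o.\Delta[s]$, so that unwinding the cache still recovers $val_0(s)$ rather than some intermediate, unseen value. Verifying that \emph{every} change reaching an output slot of $o$ is recorded in $o$'s cache, which is exactly what the conjunction of the two assumptions purchases, is the heart of the argument; once it is in place, invoking the structural remainder of Theorem~\ref{the:correctness_incremental_order} closes the proof.
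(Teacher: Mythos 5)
Your proposal is correct and takes essentially the same route as the paper: the paper's own proof is just the two-sentence observation that the additional assumptions on output slots guarantee the precondition of conditional correctness is never violated, so Theorem~\ref{the:correctness_incremental_order} carries over verbatim. Your lemma (parts~(i) and~(ii), showing pure output slots retain their pre-delta values and that each node's cache accumulates every change to its input slots until its single execution) merely spells out in detail why that precondition holds at each execution point --- rigor the paper leaves implicit.
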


\begin{proof}
From the additional assumptions regarding output slots of $G$'s operation nodes it follows directly that the condition in the definition of conditional correctness is never violated. Thus, the statement from Theorem \ref{the:correctness_incremental_order} also applies for the case of conditionally correct $update$ functions.
\end{proof}

Notably, the order in which operation nodes are added to the queue $Q$ in lines \ref{line:find_incremental_order_init_loop} and \ref{line:find_incremental_order_queue_loop} of Algorithm \ref{algo:find_incremental_order} is undefined. Since the order of operation nodes in $Q$ affects the behavior of the algorithm, this might mean that Algorithm \ref{algo:find_incremental_order} is ultimately not deterministic.

We can however show that, if Algorithm \ref{algo:find_incremental_order} does not abort due to cycles in $G_T$, the final dependency graph $G_T$ is uniquely defined, independently of the order in which operation nodes are added to $Q$. Thus, the only remaining nondeterminism in Algorithm \ref{algo:find_incremental_order} affecting the result stems from the topological sorting at the end of the algorithm, which is an inherently nondeterministic operation.

\begin{theorem} \label{the:find_incremental_order_determinism}
For inputs $G = (O, S, E, s, t)$ and $S_i$, the dependency graph $G_T$ after a full execution of the loop in line \ref{line:find_incremental_order_main_loop} of Algorithm \ref{algo:find_incremental_order} is uniquely defined up to isomorphism if all $dir_\Delta$ functions in $G$ are union monotonic.
\end{theorem}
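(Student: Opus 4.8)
The plan is to prove the statement via a fixpoint characterisation of the edge set of $G_T$ that is manifestly independent of the enqueue/dequeue order, and then to show that the run of Algorithm \ref{algo:find_incremental_order} realises exactly this characterisation. The structure is analogous to the batch analysis for Algorithm \ref{algo:find_batch_order}, but the two features that actually make the argument go through here are that the bookkeeping is kept \emph{per operation node} in the array $C$ and that $C[o]$ is reset after $o$ is dequeued. The crucial conceptual point is that, by union monotonicity, it is irrelevant whether the changed input slots of a node $o$ are handed to $o.dir_\Delta$ all at once or split over several dequeues of $o$: the union of the produced output directions is always the same. This is precisely what decouples the constructed edges from the processing order. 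Throughout I assume the hypothesis of the theorem, namely that the loop in line \ref{line:find_incremental_order_main_loop} runs to completion (no abort via \textbf{null}); termination itself is available from Theorem \ref{the:find_incremental_order_termination}.

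First I would define, independently of the algorithm, the intended set of changed input slots of each node. Let $R : O \to \powset{S}$ with $R(o) \subseteq in(o)$ be the least solution, on the finite lattice $\prod_{o \in O} \powset{in(o)}$ ordered by componentwise inclusion, of
$R(o) = (S_i \cap in(o)) \cup \{\, s \in in(o) \mid \exists o'' \in O \setminus \{o\} : s \in o''.dir_\Delta(o'', R(o'')) \,\}$.
The restriction $o'' \neq o$ mirrors the exclusion $\setminus \{o\}$ in $O_o$ and the fact that $C[o]$ never receives $o$'s own self-loop directions; this is a detail that must not be overlooked, since otherwise purely self-produced slots would be spuriously included. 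Monotonicity of $dir_\Delta$ makes the underlying operator monotone, so by Knaster--Tarski the least fixpoint exists and is uniquely determined by $G$ and $S_i$. I would access it through its Kleene iteration $R^0(o) = \emptyset$, $R^{t+1}(o) = (S_i \cap in(o)) \cup \{\, s \in in(o) \mid \exists o'' \neq o : s \in o''.dir_\Delta(o'', R^t(o'')) \,\}$, which stabilises at $R$ because $O$ and $S$ are finite.

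The heart of the argument is to show that the values $C_1[o], C_2[o], \dots$ taken by $C[o]$ at the successive dequeues of $o$ during the whole run satisfy $\bigcup_k C_k[o] = R(o)$. The inclusion $\subseteq$ follows by a routine induction over the algorithm's steps: every slot placed into $C[o]$ is either an initial slot from $S_i \cap in(o)$ or stems from $S_o \cap in(o)$ where $S_o = o''.dir_\Delta(o'', C[o''])$ for the currently dequeued $o'' \neq o$, and the inductive hypothesis $C[o''] \subseteq R(o'')$ together with monotonicity and the fixpoint equation places it in $R(o)$. For the reverse inclusion I would induct on the Kleene rank $t$: assuming $R^t(o'') \subseteq \bigcup_k C_k[o'']$ for all $o''$, any $s \in R^{t+1}(o)$ produced from some $o'' \neq o$ satisfies $s \in o''.dir_\Delta(o'', R^t(o'')) \subseteq o''.dir_\Delta(o'', \bigcup_k C_k[o'']) = \bigcup_k o''.dir_\Delta(o'', C_k[o''])$, where the last equality is exactly union monotonicity. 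Hence $s$ is emitted as a direction of $o''$ at some dequeue; since $s \in in(o)$ (so $o \in out(s)$) and $o \neq o''$, the node $o$ lies in $O_{o''}$, so $s$ is added to $C[o]$ and $o$ is re-enqueued, and $s$ therefore appears in $C[o]$ at its next dequeue. Termination under the no-abort hypothesis guarantees that every re-enqueued node is indeed dequeued again, so that the repeated resetting of $C[o]$ never discards a slot that still needs to fire.

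Once $\bigcup_k C_k[o] = R(o)$ is established, applying union monotonicity once more yields $\bigcup_k o.dir_\Delta(o, C_k[o]) = o.dir_\Delta(o, R(o)) =: D_o$, a function of $G$ and $S_i$ alone. Since an edge $(o,o')$ is created in $G_T$ exactly when some dequeue of $o$ produces a slot $s_o \in o.dir_\Delta(o, C_k[o])$ with $o' \in (out(s_o) \cup in(s_o)) \setminus \{o\}$, and since \texttt{createEdgeIfNotExists} records each such pair once, the outgoing edges of $o$ are precisely $\{\, (o,o') \mid o' \in (out(D_o) \cup in(D_o)) \setminus \{o\} \,\}$; likewise the vertex set is $in(S_i) \cup out(S_i)$ together with all such targets. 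Both are determined purely by $G$, $S_i$ and the fixpoint $R$, so $G_T$ is uniquely determined and in particular unique up to isomorphism. The main obstacle I anticipate is the reverse inclusion $R(o) \subseteq \bigcup_k C_k[o]$: one must argue carefully that resetting $C[o]$ loses nothing and that the self-exclusion is faithfully captured by the $o'' \neq o$ clause, and it is exactly in this step that union monotonicity is indispensable, since with mere monotonicity the piecemeal computation of directions across several dequeues could strictly undershoot $o.dir_\Delta(o, R(o))$ and make the result order-dependent.
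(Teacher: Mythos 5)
Your overall strategy---replace the run of the algorithm by an order-independent fixpoint characterisation of the edge set, using union monotonicity to collapse the per-dequeue contributions---is in spirit the same as the paper's proof (its recursively defined set $edges(S_i)$ is exactly such a closure). However, your concrete fixpoint is wrong, and the failure is located exactly at the step you identify as the heart of the argument. Your defining equation for $R$ lets \emph{every} node $o'' \neq o$ contribute $o''.dir_\Delta(o'', R(o''))$, starting from $R^0(o'') = \emptyset$, whereas in the algorithm a node only applies $dir_\Delta$ when it is actually dequeued, and a node that is never enqueued contributes nothing. The two views coincide only if $o''.dir_\Delta(o'', \emptyset) = \emptyset$ for all nodes, and this is \emph{not} implied by union monotonicity: union monotonicity only forces $dir_\Delta(\emptyset) \subseteq dir_\Delta(X)$ for every $X$, and the paper's framework explicitly allows non-empty directions for the empty input set (its own proof carries dedicated $edges_\emptyset(o_Q)$ terms, and Table \ref{tab:egdn_nodes_properties} describes nodes that may modify all output slots for ``any set of considered input slots''). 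Formally, your reverse-inclusion step invokes the equality $o''.dir_\Delta(o'', \bigcup_k C_k[o'']) = \bigcup_k o''.dir_\Delta(o'', C_k[o''])$; when $o''$ is never dequeued the right-hand side is an empty union, i.e. $\emptyset$, while the left-hand side is $o''.dir_\Delta(o'', \emptyset)$, which may be non-empty. So the claim ``hence $s$ is emitted as a direction of $o''$ at some dequeue'' does not follow, and the equality $\bigcup_k C_k[o] = R(o)$ on which your whole edge characterisation rests is false in general.

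A concrete counterexample: take slots $s_1,\dots,s_5$ and operations $o_1$ with $in(o_1)=\{s_1\}$, $out(o_1)=\{s_2\}$ and $o_1.dir_\Delta(o_1,\emptyset) = o_1.dir_\Delta(o_1,\{s_1\}) = \{s_2\}$ (a ``repairing'' update procedure that may emit deltas even without input deltas; this $dir_\Delta$ is union monotonic), $o_2$ with $in(o_2)=\{s_2\}$, $out(o_2)=\{s_3\}$, $o_2.dir_\Delta(o_2,\emptyset)=\emptyset$, $o_2.dir_\Delta(o_2,\{s_2\})=\{s_3\}$, and $o_3$ with $in(o_3)=\{s_4\}$, $out(o_3)=\{s_5\}$, with $S_i=\{s_4\}$. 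The algorithm enqueues only $o_3$, creates no edges, and terminates without aborting, so $G_T$ has vertex set $\{o_3\}$ and no edges. Your fixpoint, however, gives $R(o_2)=\{s_2\}$ via $o_1$'s empty-set directions, and your final characterisation (with $D_{o_1}=o_1.dir_\Delta(o_1,R(o_1))=\{s_2\}$) predicts the edge $(o_1,o_2)$ and the vertices $o_1,o_2$ in $G_T$, none of which the algorithm ever creates; only the inclusion $\bigcup_k C_k[o] \subseteq R(o)$ survives, and an over-approximation cannot yield the uniqueness claim. The missing ingredient is the reachability restriction that the paper builds into its closure: its characterisation starts only from $in(S_i) \cup out(S_i)$ and propagates exclusively through nodes actually triggered along the way. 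Your proof can be repaired by computing a simultaneous least fixpoint of an activation predicate $A \subseteq O$ together with $R$ (a node enters $A$ iff it is in $in(S_i) \cup out(S_i)$ or is adjacent to a slot in the directions contributed by an activated node, and only nodes in $A$ contribute to $R$); with that change the rest of your argument goes through and becomes a cleaner formalisation of the paper's reasoning. Alternatively you could add the hypothesis $dir_\Delta(\emptyset)=\emptyset$ for all nodes, but that is an assumption the theorem does not grant.
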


\begin{proof}
The set of vertices initially added to $G_T$ is uniquely determined by $in(S_i) \cup out(S_i)$. Since additional vertices are only ever added in conjunction with the creation of an edge, the set of vertices added during the execution of the loop in line \ref{line:find_incremental_order_main_loop} is determined by the set of added edges.

To show the unique determination of added edges by the algorithm's inputs, we show that in a terminating execution of the loop, the initial set $S_i$ in conjunction with the \eGDN{} $G$ uniquely determines a set of pairs of operation nodes $(o_1, o_2)$, between which directed edges are created in $G_T$.

$S_i$ uniquely determines the set of operation nodes $O_Q = in(S_i) \cup out(S_i)$ that is initially added to $Q$. For each of these operation nodes $o_Q \in O_Q$, due to the monotonicity of $o_Q.dir_\Delta$ and because slots are only removed from $C[o_Q]$ after $o_Q$ has been dequeued and processed, at least the edges for pairs $edges_S(o_Q, S_i) = \{(o_Q, o_T) | o_T \in out(S_o) \cup in(S_o) \setminus \{o_Q\}\}$ are added to $G_T$, where $S_o = o_Q.dir_\Delta(S_i \cap in(o_Q))$ when $o_Q$ is dequeued. According to the assumption regarding union monotonicity, we can also write $edges_S(o_Q, S_i) = edges_\emptyset(o_Q) \cup \bigcup_{s_i \in S_i \cap in(o_Q)} edges_N(o_Q, s_i)$, with $edges_\emptyset(o_Q) = \{(o_Q, o_T) | o_T \in out(o_Q.dir_\Delta(\emptyset)) \cup in(o_Q.dir_\Delta(\emptyset)) \setminus \{o_Q\}\}$ and $edges_N(o_Q, s_i) = \{(o_Q, o_T) | o_T \in out(o_Q.dir_\Delta(in(o_Q) \cap \{s_i\})) \cup in(o_Q.dir_\Delta(in(o_Q) \cap \{s_i\})) \setminus \{o_Q\}\}$.

In addition, the modification of $C$ and $Q$ that takes place for each dequeued $o_Q \in O_Q$ may cause the addition of further edges down the line. Specifically, for each $s_o \in o_Q.dir_\Delta(\{s_i\})$ and each $o_T \in out(s_o) \cup in(s_o) \setminus \{o_Q\}$, $o_T$, if not already contained, is added to $Q$ and subsequently handled in the same way as $o_Q$, with $s_i$ guaranteed to be in $C[o_T]$ at that moment. This will cause the addition of all edges corresponding to the pairs $edges_N(o_T, s_o)$ and again trigger the addition of further edges. Due to the monotonicity of $o_T.dir_\Delta$ and because slots are only removed from $C[o_T]$ when $o_T$ is dequeued, the addition of these edges happens independently from any other modifications to $C[o_T]$ that might be made in the meantime. Furthermore, due to the assumption regarding union monotonicity of $o_T.dir_\Delta$, a combination of modifications of $C[o_T]$ cannot yield any additional edges compared to what is yielded for the individual members of $C[o_T]$.

Because neither can $C[o_T]$ be modified in any other way, nor can edges be added to $G_T$ in any other way, the set of pairs of operation nodes $(o_1, o_2)$ between which directed edges are created in $G_T$ is given by the function $edges(S_i) = \bigcup_{o_Q \in in(S_i) \cup out(S_i)}(edges_\emptyset(o_Q) \cup \bigcup_{s_i \in S_i \cap in(o_Q)} edges_R(o_Q, s_i))$, with $edges_R(o_Q, s_i) = edges_N(o_Q, s_i) \cup \bigcup_{o_T \in O_T)) \setminus \{o_Q\}}\bigcup_{s_o \in o_Q.dir_\Delta(in(o_Q) \cap \{s_i\}} edges_R(o_T, s_o)$, where $O_T$ is given by $O_T = out(o_Q.dir_\Delta(in(o_Q) \cap \{s_i\})) \cup in(o_Q.dir_\Delta(in(o_Q) \cap \{s_i\}$.

The loop terminating due to $Q$ becoming empty implies that all nodes ever added to $Q$ have been processed and hence all corresponding edges have been added to $G_T$. Since it is ensured that for each pair of operation nodes $(o_1, o_2)$, only one corresponding edge is added, we know that regardless of the concrete processing order, $G_T$ always contains exactly one directed edge for each pair $(o_1, o_2) \in edges(S_i)$.

Since the set of added vertices is uniquely determined by the set of added edges and each vertex can only be added once, the set of $G_T$'s vertices is uniquely defined for inputs $G$ and $S_i$.

The graph $G_T$ at the end of a full execution of the loop in line \ref{line:find_incremental_order_main_loop} of Algorithm \ref{algo:find_incremental_order} is hence uniquely defined for inputs $G$ and $S_i$, regardless of the order in which operation nodes are added to $Q$ in lines \ref{line:find_incremental_order_init_loop} and \ref{line:find_incremental_order_queue_loop}.
\end{proof}

The fact that $G_T$ is uniquely defined by the inputs $G$ and $S_i$ also implies that if an execution of Algorithm \ref{algo:execute_incremental_order} terminates without aborting, so does any possible execution for the same inputs.

\begin{theorem}
An execution of the loop in line \ref{line:find_incremental_order_main_loop} of Algorithm \ref{algo:execute_incremental_order} terminates without aborting for inputs $G = (O, S, E, s, t)$ and $S_i$ if and only if any other execution for the same inputs also terminates without aborting.
\end{theorem}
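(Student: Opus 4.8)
The plan is to reduce the stated biconditional to an order-independence property of the abort outcome, using Theorem~\ref{the:find_incremental_order_determinism} as a lemma. First I would observe that, by Theorem~\ref{the:find_incremental_order_termination}, the main loop in line~\ref{line:find_incremental_order_main_loop} always terminates; hence for fixed inputs $G$ and $S_i$, ``terminates without aborting'' is equivalent to ``never returns \textbf{null} via the cycle check''. The theorem then amounts to showing that this abort/non-abort outcome does not depend on the nondeterministic order in which operation nodes are enqueued in $Q$. It therefore suffices to establish the dichotomy that, for fixed $G$ and $S_i$, either all executions terminate without aborting or all of them abort; the stated biconditional between any two executions follows immediately from such a dichotomy.

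The key step is the implication: if some execution $E_1$ terminates without aborting, then every execution does. I would first argue that in any execution, aborting or not, every edge ever inserted into $G_T$ belongs to the fixed, order-independent set $edges(S_i)$ identified in the proof of Theorem~\ref{the:find_incremental_order_determinism}. This holds because each cell $C[o]$ only grows monotonically between successive dequeues of $o$ and is always contained in its saturated value, so that by monotonicity of $dir_\Delta$ any $S_o = o.dir_\Delta(C[o])$ is a subset of the saturated direction set, and by union monotonicity every resulting edge $(o,o')$ is already accounted for in $edges(S_i)$. Crucially, this containment argument never invokes completion of the run, so it applies verbatim to the prefix of a partial, aborting execution.

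Next I would use $E_1$ to certify acyclicity of the full edge set. Since $E_1$ terminates without aborting, its final $G_T$ contains exactly the edges $edges(S_i)$ by Theorem~\ref{the:find_incremental_order_determinism}, and it is acyclic, because the cycle check never fired during $E_1$ while $G_T$ was built by monotone edge insertion. Now take an arbitrary execution $E_2$: by the containment step, every edge $E_2$ inserts lies in $edges(S_i)$, so at each intermediate stage the graph maintained by $E_2$ is a subgraph of the directed graph $(O, edges(S_i))$. Since any subgraph of an acyclic directed graph is itself acyclic, the cycle check in $E_2$ can never fire, and hence $E_2$ also terminates without aborting. This yields the implication, hence the dichotomy, and hence the biconditional.

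I expect the main obstacle to be the containment step for aborting executions. Theorem~\ref{the:find_incremental_order_determinism} is phrased for full, non-aborting executions, so I must re-examine its edge-generation analysis, namely the use of monotonicity and union monotonicity of $dir_\Delta$ to pin the edge set down to $edges(S_i)$ independently of the enqueue order, for the prefix of an aborting run where some $C[o]$ may never reach its saturated value. The argument goes through because edge insertion is driven solely by the current $C[o]$ through $dir_\Delta$, whose monotonicity guarantees that under-saturated inputs can only produce a subset of the edges in $edges(S_i)$ and never an extraneous one; the delicate point is precisely justifying that no edge outside $edges(S_i)$ can appear in the moments before an abort.
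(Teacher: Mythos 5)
Your proof is correct and follows essentially the same route as the paper: reduce the biconditional, via Theorem~\ref{the:find_incremental_order_termination}, to order-independence of the abort outcome, and settle that using the order-independent edge set $edges(S_i)$ from Theorem~\ref{the:find_incremental_order_determinism} together with the fact that edges are only ever inserted into $G_T$. Your explicit containment step for prefixes of aborting runs (every inserted edge already lies in $edges(S_i)$, and any subgraph of an acyclic digraph is acyclic) spells out rigorously what the paper compresses into the bare assertion that the added edge set is ``functionally determined'' by $G$ and $S_i$ --- an assertion that, read literally, holds only for complete executions --- so your version is, if anything, tighter than the paper's own argument.
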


\begin{proof}
According to Theorem \ref{the:find_incremental_order_termination}, the loop in line \ref{line:find_incremental_order_main_loop} of Algorithm \ref{algo:execute_incremental_order} always terminates, either because of a violation of the looping condition or because the loop aborts. Since the loop aborts if and only if a cycle is detected in $G_T$ at any point and edges are never removed from $G_T$, it follows that the loop terminates without aborting if and only if the set of edges added to $G_T$ during the loop execution does not form cycles. Since the set of edges added to $G_T$ during the loop execution is functionally determined by only the inputs $G$ and $S_i$, it hence follows that, if an execution of the loop terminates without aborting for $G$ and $S_i$, any execution with the same inputs will also terminate without aborting.
\end{proof}

Furthermore, we can show that if there exists an execution sequence for $G$ that guarantees correct results in the worst case and that executes every operation node at most once, Algorithm \ref{algo:find_incremental_order} finds such a sequence.

\begin{theorem} \label{the:find_incremental_order_completeness}
For an input \eGDN{} $G = (O, S, E, s, t)$ with correct and non-recursive $update$ procedures with union monotonic $dir_\Delta$ functions and a set of slots $S_i \subseteq S$ with initially modified contents for a valuation function $val$, assuming that

\begin{enumerate}
\item for any operation node $o_1 \in O$, for any execution of $o_1.update(val')$ with a valuation function $val'$ and deltas for input slots $S_\Delta$, it holds that $\forall s_o \in o_1.dir_\Delta(S_\Delta) : o_1.update(val')(s_o) \neq \emptyset$,
\item for a second node $o_2 \in O$ with $o_1 \neq o_2$, it holds that $\exists s_o \in out(o_1) \cap in(o_2) : o_1.update(val')(s_o) \neq \emptyset \rightarrow \neg o_2.valid(val'')$, where for $s \in S$
\begin{equation}
val''(s) =
	\begin{cases}
	apply(val'(s), o_1.update(val')(s)) &\quad \text{if } s \in out(o_1) \cap in(o_2)\\
	val'(s) &\quad \text{otherwise}
	\end{cases}
\end{equation}
and
\item it holds that $\forall o \in in(S_i) \cup out(S_i) : \neg o.valid(val)$,
\end{enumerate}

\noindent
if there exists a sequence that guarantees a correct resulting valuation function if executed via Algorithm \ref{algo:execute_incremental_order} and that only contains each node $o \in O$ once, Algorithm \ref{algo:find_incremental_order} returns such a sequence.
\end{theorem}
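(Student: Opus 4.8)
The plan is to reduce the claim to a statement about acyclicity of the trigger graph $G_T$. Algorithm \ref{algo:find_incremental_order} returns a non-null result exactly when it never detects a cycle in $G_T$, in which case it outputs a topological sort of $G_T$; and by Theorem \ref{the:correctness_incremental_order}, any such returned sequence already guarantees a correct final valuation. Hence it suffices to show that, whenever a valid one-pass sequence $R^*$ exists (one that contains every node at most once and yields $\forall o \in O : o.valid(val)$ when executed via Algorithm \ref{algo:execute_incremental_order}), the graph $G_T$ built by Algorithm \ref{algo:find_incremental_order} is acyclic. I would then obtain acyclicity by exhibiting $G_T$ as a subgraph of the (necessarily acyclic) precedence order induced by the injective sequence $R^*$.

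The core is therefore to prove two facts about $G_T$ relative to $R^*$, by induction along the breadth-first propagation of the loop in line \ref{line:find_incremental_order_main_loop}: \emph{coverage}, that every node the algorithm adds to $G_T$ genuinely becomes invalid during the execution of $R^*$ and hence must occur in $R^*$; and \emph{precedence}, that for every edge $(o_1, o_2)$ added to $G_T$, node $o_1$ occurs strictly before $o_2$ in $R^*$. The three hypotheses are exactly what makes the static propagation faithful to the concrete cascade: assumption (3) anchors the base case, since all nodes adjacent to the initially changed slots $S_i$ are genuinely invalid; assumption (1) guarantees that the statically predicted output slots $o_1.dir_\Delta(C[o_1])$ are precisely the slots on which $o_1$ actually emits nonempty deltas, so no edge of $G_T$ is spurious; and assumption (2) guarantees that whenever $o_1$ emits such a delta on a slot $s \in out(o_1) \cap in(o_2)$, the reader $o_2$ is genuinely invalidated and must be (re-)executed. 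Union monotonicity of the $dir_\Delta$ functions, already exploited in Theorem \ref{the:find_incremental_order_determinism}, lets me decompose $C[o]$ slot by slot so that the inductive bookkeeping of which input deltas have been realized at the moment $o$ is processed matches the per-slot edge contributions.

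Assembling these, since by coverage all nodes on any cycle of $G_T$ lie in $R^*$ and by precedence every $G_T$-edge is forward in the linear order $R^*$, a cycle $o_1 \to \dots \to o_m \to o_1$ would force $o_1$ to precede itself in $R^*$, a contradiction. Thus $G_T$ is acyclic, the algorithm does not abort, and its topological sort is a valid sequence by Theorem \ref{the:correctness_incremental_order}.

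I expect the main obstacle to be establishing the faithful correspondence in the induction step, in two respects. First, the claim that $o_1$ really emits the predicted delta presupposes that, at the point $o_1$ is executed within the $R^*$ run, it has actually accumulated on its inputs the deltas that the algorithm records in $C[o_1]$; the induction must therefore carry, as an additional invariant, that the realized input deltas in the concrete run dominate the tracked set $C[o_1]$, which is where assumption (1) and union monotonicity interact most delicately. Second, edges of $G_T$ arising from shared output slots $s \in out(o_1) \cap out(o_2)$ that are not read by either node are not covered by assumption (2), and such a configuration produces a two-cycle between $o_1$ and $o_2$ in $G_T$; here I would argue separately that assumption (1) forces both nodes to write $s$, so no one-pass execution can leave both valid, making the hypothesis of the theorem vacuously satisfied and the implication trivially true. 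Handling this coverage-versus-revalidation interplay cleanly is the crux of the argument.
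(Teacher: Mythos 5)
Your proposal is correct and follows essentially the same route as the paper's proof: both arguments show that, under assumptions (1)--(3) and union monotonicity, every edge of $G_T$ encodes a necessary ``must be executed afterwards'' constraint, and then conclude by contraposition that a cycle in $G_T$ rules out any correct sequence containing each node only once, so the algorithm cannot abort and its output is correct by Theorem \ref{the:correctness_incremental_order}. Your explicit coverage/precedence invariants and your separate treatment of edges between co-writers of a shared output slot (a case assumption (2) does not literally cover) are in fact somewhat more careful than the paper's own argument, which subsumes that case by speaking only of slots ``adjacent'' to $o_2$.
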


\begin{proof}
Under the given assumptions, the set of edges in $G_T$ created by Algorithm before termination or abortion represents a subset of all relations between pairs of operation nodes $(o_1, o_2)$, where $o_1$'s $update$ procedure has to be executed at least once to produce a correct final valuation function and that execution modifies the contents of a slot adjacent to $o_2$, necessitating the subsequent execution of $o_2$ according to assumption (2).

This is due to the fact that, to restore consistency, all operation nodes in $o \in in(S_i) \cup out(S_i)$ have to be executed at least once according to assumptions (2) and (3). All these operation nodes $o_1$ are initially added to the queue $Q$ in Algorithm \ref{algo:find_incremental_order}. Each execution of an operation node $o_1$, according to assumption (1), modifies all slots in $o_1.dir_\Delta(S_i \cap in(o_1))$, which necessitates a subsequent execution of all operation nodes $o_2 \in in(o_1.dir_\Delta(S_i \cap in(o_1))) \cup out(o_1.dir_\Delta(S_i \cap in(o_1)))$ according to assumption (2). Algorithm \ref{algo:find_incremental_order} creates edges for all these pairs $(o_1, o_2)$ when $o_1$ is dequeued.

The subsequent execution of any operation node $o_2$ similarly necessitates the execution of all nodes $o_3 \in in(o_2.dir_\Delta(S_i \cap in(o_2))) \cup out(o_2.dir_\Delta(S_i \cap in(o_2)))$, which is also reflected by the edges created in Algorithm \ref{algo:find_incremental_order} when $o_2$ is dequeued, and so on. Since the algorithm creates no additional edges due to the assumption regarding union monotonicity of the $dir_\Delta$ functions, all edges in $G_T$ represent such necessary relationships on the ordering of operation nodes\footnote{As a side note, since operation nodes can be dequeued/executed with different sets of potentially modified input slots, an edge between nodes $(o_1, o_2)$ in $G_T$ does not necessarily mean that $o_2$ has to be executed after \emph{any} execution of $o_1$, but only that such a subsequent execution is necessary \emph{at least once}.}.

Since Algorithm \ref{algo:find_incremental_order} always produces a correct sequence of operation nodes if $G_T$ is acyclic, we can assume that in the case where the algorithm does not produce an ordering, there is at least one cycle in $G_T$. There hence cannot exist a sequence of the operation nodes involved in this cycle where each node is only contained once and each node is executed at least once after its predecessor in the cycle. Thus, by contraposition it follows that, if there exists a sequence of operation nodes that guarantees correct results and where each operation node is only contained once, Algorithm \ref{algo:find_incremental_order} finds such a sequence.
\end{proof}

Note that there may be finite orders of operation node executions that guarantee correct results based on the assumptions in Theorem \ref{the:find_incremental_order_completeness} that are not found by Algorithm \ref{algo:find_incremental_order}. However, these orders require that at least one operation node is executed at least twice.


\subsection{Incremental Execution of Arbitrary \eGDN{}s}

If the \eGDN{} is not a DAG and no suitable ordering of its operation nodes can be found via Algorithm \ref{algo:find_incremental_order}, incremental execution can instead be achieved via a simple fixpoint iteration as in Algorithm \ref{algo:execute_incremental}.

\SetKwFunction{ExecuteIncremental}{ExecuteIncremental}
\begin{algorithm}
\LinesNumbered
\myproc{\ExecuteIncremental{$G = (O, S, E, s, t), val$}} {
	\Input{$G$: The \eGDN\\
	$val$: A valuation function for G's slots}
	\BlankLine
	
	$D \leftarrow \{o \in O|\exists s \in in(o) \cup out(o) : o.\Delta[s] \neq \emptyset\}$\;
	\While{$D \neq \emptyset$} {										\label{line:execute_incremental_main_loop}
		$D_n \leftarrow \emptyset$\;
		\ForEach{$o \in D$} {
			$D \leftarrow D \setminus \{o\}$\;
			$\Delta_o \leftarrow o.update(val)$\;
			\ForEach{$s \in in(s) \cup out(s)$} {
				$o.\Delta[s] \leftarrow \emptyset$\;
			}
			\ForEach{$s_o \in out(o)$} {								\label{line:execute_incremental_operations_loop}
				\If{$\Delta_o(s_o) \neq \emptyset$} {
					$val(s_o) \leftarrow \Apply(val(s_o), \Delta_o(s_o))$\;
	
					\ForEach{$o' \in out(s_o)$} {						\label{line:execute_incremental_out_loop}
						$o'.\Delta[s_o] \cup \Delta_o$\;
						\If{$o' \notin D$}{
							$D_n \leftarrow D_n \cup \{o'\}$\;
						}
					}
					\ForEach{$o' \in in(s_o)$} {						\label{line:execute_incremental_in_loop}
						\If{$o' \neq o \wedge o' \notin D$}{
							$D_n \leftarrow D_n \cup \{o'\}$\;
						}
					}
				}
			}
		}
		$D \leftarrow D_n$\;
	}
}
\BlankLine
\caption{Incremental algorithm for \eGDN{} execution} \label{algo:execute_incremental}
\end{algorithm}

Algorithm \ref{algo:execute_incremental} first initializes the set of operation nodes that require execution $D$ with the set of all operation nodes in the input \eGDN{} for which there are changes in one of the node's input or output slots. Then, the algorithm iterates until a fixpoint is reached.

Therefore, a set of operation nodes that will require execution in the next iteration $D_n$ is initialized with the empty set. Afterwards, for each operation node $o$ that is due for execution in the current iteration, that node is removed from the set $D$. Then, $o$'s $update$ procedure is called to compute a set of changes to the contents of $o$'s output slots to make them consistent with the semantics of $o$.

For each output slot $s_o$ of $o$ that $update$ has computed changes for, these changes are subsequently applied and appropriately registered at each operation node $o'$ for which $s_o$ is an input slot. If any such $o'$ is not still due for execution in the current iteration, it is marked for execution in the next iteration by adding it to $D_n$. Operation nodes for which $s_o$ is an output slot are similarly marked for execution. Finally, after all operation nodes in $D$ have been considered, $D_n$ replaces $D$ and a new iteration starts if $D_n$ is not empty.

Analogously to Algorithm \ref{algo:execute_incremental_order}, Algorithm \ref{algo:execute_incremental} can handle the batch case of an initial \eGDN{} execution for existing models by encoding such existing models as sequences of element creations.

\subsubsection*{Termination}

In contrast to Algorithm \ref{algo:execute_incremental_order}, Algorithm \ref{algo:execute_incremental} is not guaranteed to terminate, since cyclical transitive dependencies of operation nodes may cause infinite cycles of changes to the contents of some slot node. Without restricting developers in what kinds of \eGDN{}s they are allowed to specify, this problem is inevitable.

In practice however, termination of networks of model operations like \eGDN{}s can be achieved despite the presence of cyclical structures. In some cases for instance, cycles at the network level do not necessarily correspond to actual cyclical dependencies of model operations if the involved model operations only affect distinct parts of slot contents, such as elements of certain, distinct types. In some cases, a restructuring of the \eGDN{} may remove cycles at the structural level while preserving semantics, for instance by converting in-place model transformations without an effective reflexive dependency into a model transformation with distinct input and output models.

Moreover, cycles of model operations may exhibit monotonic behavior, for instance by deleting certain elements in each iteration that are never recreated, thus guaranteeing convergence. Ultimately however, it remains the responsibility of the developers to create networks of model operations that do not lead to infinite loops in execution.

\subsubsection*{Correctness}

If Algorithm \ref{algo:execute_incremental} terminates, the resulting valuation function is guaranteed to be consistent with the semantics of all operation nodes in the input \eGDN{}.

\begin{theorem} \label{the:correctness_incremental}
For inputs $G = (O, S, E, s, t)$ and $val$, if Algorithm \ref{algo:execute_incremental} terminates, all employed $update$ procedures are correct and non-recursive, and if the valuation function before the application of the deltas cached in $G$ was consistent with the semantics of $G$'s operation nodes, the algorithm produces a final valuation function $val$ such that $\forall o \in O : o.valid(val)$.
\end{theorem}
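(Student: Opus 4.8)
The plan is to prove the theorem through a worklist invariant on the main loop in line~\ref{line:execute_incremental_main_loop}, in the same spirit as the arguments for the ordered variants but leaning on the prior-consistency hypothesis to account for the operation nodes that the fixpoint iteration never touches. Concretely, I would establish that at the beginning of every iteration of the while loop the invariant $\forall o \in O : \neg o.valid(val) \rightarrow o \in D$ holds, i.e. every operation node whose semantics function is currently violated lies in the worklist. Once this is in place the conclusion is immediate: the loop exits only when $D = \emptyset$, so the invariant forces $\neg o.valid(val)$ to be false for every $o$, yielding $\forall o \in O : o.valid(val)$. Since the theorem claims correctness only under the hypothesis that the algorithm terminates, no separate termination argument is needed, and the union-monotonicity of $dir_\Delta$ required for the ordered variants plays no role here, as the fixpoint algorithm does not consult $dir_\Delta$ at all.

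For the base case I would use the assumption that the valuation was consistent with all operations' semantics before the cached deltas were applied, together with the fact that a slot change is observed by \emph{every} adjacent operation node, so that a nonempty delta is cached at each of them. Hence if an operation node $o$ carries no cached delta on any adjacent slot—that is, $o \notin D$ at initialization—then none of its input or output slots has changed relative to the previously consistent state, and $o.valid(val)$ must still hold. This establishes the invariant before the first iteration.

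The inductive step is where the real work lies. When a node $o$ is taken from $D$, it is first removed from $D$ and its cached deltas are cleared; by (unconditional) correctness of its $update$ procedure, applying the computed output deltas makes $o$ consistent with the current contents of its slots, regardless of the processing history so far. For a node with reflexive slots (slots that are simultaneously input and output) the subtle point is that applying the output deltas also alters the corresponding inputs, so I would invoke non-recursiveness to conclude that a second evaluation produces no further deltas and hence that $o.valid(val)$ genuinely holds for the \emph{updated} values. I then have to show that any node violating its semantics at the end of the iteration is re-enqueued into $D_n$. Both relevant cases—a node processed earlier this round and re-invalidated later, and a node that was never in $D$ this round but has an adjacent slot overwritten—reduce to the same mechanism: the writing node has already been removed from $D$ before any subsequent node runs, so whenever it (or any other processed node) writes an adjacent slot, the guard $o' \notin D$ in the loops of lines~\ref{line:execute_incremental_out_loop} and~\ref{line:execute_incremental_in_loop} succeeds and the affected node is placed into $D_n$. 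Collecting these cases shows that every node violating its semantics at the end of the iteration belongs to $D_n$, which becomes $D$ for the next round, closing the induction.

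I expect the main obstacle to be the careful bookkeeping in the inductive step, specifically the reflexive-slot interaction between correctness and non-recursiveness and the order-sensitivity within a single iteration: a node can be validated and then invalidated again before the iteration ends, and I must argue that the precise moment at which a node is removed from $D$ relative to the writes of other nodes guarantees re-enqueuing in every case. A secondary point to state cleanly is that it is \emph{full} correctness of $update$—rather than the conditional variant treated in the corollary—that lets me ignore whether the cached deltas are themselves coherent and still conclude that processing a node always leaves it locally valid.
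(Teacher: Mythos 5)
Your proposal is correct and takes essentially the same route as the paper: the same worklist invariant $\forall o \in O : \neg o.valid(val) \rightarrow o \in D$ at the head of the loop in line~\ref{line:execute_incremental_main_loop}, proved by induction over iterations, with the same base case (prior consistency plus the fact that every change is cached at every adjacent node) and the same re-enqueueing argument for nodes whose adjacent slots are written after they leave $D$. The only divergence is a detail of the inductive step: where you invoke non-recursiveness to conclude that a node with reflexive (simultaneously input and output) slots is already valid immediately after applying its own deltas, the paper instead observes that such a node is simply re-added to $D_n$ via the guard in line~\ref{line:execute_incremental_out_loop}, so its possible invalidity is harmless; both arguments are sound.
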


\begin{proof}
We show that the invariant (1) $\forall o \in O : \neg o.valid(val) \rightarrow o \in D$ holds for the loop in line \ref{line:execute_incremental_main_loop} via induction over the number of loop iterations.

The base case for invariant (1) holds due to the initialization of $D$ and the assumption regarding the initial cached deltas and previous valuation function.

To show the induction step for invariant (1), we first show that under the induction assumption, the invariant (2) $\forall o \in O : \neg o.valid(val) \rightarrow o \in D \cup D_n$ holds for the loop in line \ref{line:execute_incremental_operations_loop}. This can also be done via induction.

The base case for invariant (2) holds due to the induction assumption of (1).

The induction step holds for invariant (2) since in each iteration of the inner loop, only one operation node $o$ is executed via its $update$ procedure and removed from $D$, updating $val$ and the cached deltas in the process. If the execution of $o$ does not change the contents of one of its own input slots, we know that afterwards, $o.valid(val)$ due to the assumption regarding correctness of $update$ procedures and because the cached deltas are always updated correctly. Otherwise, $o$ is added to $D_n$ in the loop in line \ref{line:execute_incremental_out_loop}. The loops in line \ref{line:execute_incremental_out_loop} and \ref{line:execute_incremental_in_loop} also add all operation nodes $o'$ to $D_n$ for which the result of $o'.valid(val)$ may have been impacted by the update to $val$. Thus, given the induction assumption, at the end of the loop in line \ref{line:execute_incremental_operations_loop}, we again have $\forall o \in O : \neg o.valid(val) \rightarrow o \in D \cup D_n$ and hence the induction step holds.

Since at the end of the loop in line \ref{line:execute_incremental_operations_loop}, $D = \emptyset$, we know that $\forall o \in O : \neg o.valid(val) \rightarrow o \in D_n$. Because at the end of the iteration of the loop in line \ref{line:execute_incremental_main_loop}, the set $D$ is replaced by $D_n$, the induction step for (1) holds.

Since the loop in line \ref{line:execute_incremental_main_loop} is only left when $D = \emptyset$ after the replacement with $D_n$, we know that, if the algorithm terminates, $\forall o \in O : o.valid(val)$.
\end{proof}

Similar to Algorithm \ref{algo:execute_incremental_order}, the algorithm also yields correct results if all employed $update$ procedures are at least conditionally correct, the \eGDN{}'s nodes do not share output slots that are not also input slots to all sharing nodes, and there are no deltas for an output slot of a node that is not simultaneously an input slot.

\begin{corollary}
Assuming that $\forall o_1, o_2 \in O : o_1 \neq o_2 \rightarrow \forall s \in out(o_1) \cap out(o_2) : s \in in(o_1) \wedge s \in in(o_2)$ and $\forall o \in O : \forall s \in out(o) : o.\Delta[s] = \emptyset$, for inputs $G = (O, S, E, s, t)$ and $val$, if Algorithm \ref{algo:execute_incremental} terminates, all employed $update$ procedures are conditionally correct and non-recursive, and if the valuation function before the deltas cached in $G$ was consistent with the semantics of $G$'s operation nodes, it produces a final valuation function $val$ such that $\forall o \in O : o.valid(val)$.
\end{corollary}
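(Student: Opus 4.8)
The plan is to reduce the corollary to Theorem~\ref{the:correctness_incremental} by arguing that, under the two structural assumptions, the weaker hypothesis of \emph{conditional} correctness behaves exactly like full correctness at every point where Algorithm~\ref{algo:execute_incremental} actually invokes an $update$ procedure. The two notions differ only in the antecedent attached to conditional correctness: it guarantees a correct output delta merely \emph{on the condition} that there is a ``before'' input valuation $v'_{i_1}, \ldots, v'_{i_k}$ whose images under the cached deltas are the current inputs, and for which the current slot contents are already consistent with $o$'s semantics $\gamma_S$ (the shared in-out slots being matched against the before-values $v'_{i_i}$ and the output-only slots against their current values $v_{o_i}$). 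Hence the whole argument turns on showing this antecedent holds whenever $o.update(val)$ is called.

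First I would treat the output-only slots $out(o) \setminus in(o)$. Assumption~1, $\forall o_1 \neq o_2 : \forall s \in out(o_1) \cap out(o_2) : s \in in(o_1) \wedge s \in in(o_2)$, forbids any such slot from being an output slot of a second operation node, so during the fixpoint iteration no other node's $update$ ever writes to it. Assumption~2, $\forall o \in O : \forall s \in out(o) : o.\Delta[s] = \emptyset$, excludes initially cached (e.g.\ user-authored) deltas on these slots. Together with the hypothesis that the valuation preceding the cached deltas was consistent with every operation's semantics, this shows each output-only slot of $o$ still carries exactly the value $o$ itself last wrote, which was consistent with the inputs $o$ saw at that time; so the current value $v_{o_i}$ is consistent with the recovered before-inputs.

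Next I would dispatch the shared in-out slots $in(o) \cap out(o)$, for which the antecedent compares against the before-values $v'_{i_i}$: choosing each $v'_{i_i}$ to be the pre-delta contents recovered by running the tracked deltas $o.\Delta$ backwards satisfies both $apply(v'_{i_i}, \Delta_i) = v_{i_i}$ and $v'_{i_i} = f'(s_{i_i})$, with the single witness $f'$ supplied by the previously consistent valuation. Assembling the two cases exhibits the $v'$ and $f'$ demanded by the antecedent, so conditional correctness yields precisely the delta guarantee that full correctness would, at every invocation. Consequently the $update$ procedures act as fully correct ones throughout the run, and Theorem~\ref{the:correctness_incremental} applies unchanged to give $\forall o \in O : o.valid(val)$ on termination; this is the same reduction already used for the corollary following Theorem~\ref{the:correctness_incremental_order}.

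The step I expect to be the crux is showing the antecedent is preserved as an \emph{invariant} across all iterations of the loop in line~\ref{line:execute_incremental_main_loop}, rather than merely holding at the first call: one must rule out that some intermediate $update$ leaves an output-only slot of a not-yet-re-executed node in a state inconsistent with that node's current inputs. Assumption~1 is exactly what closes this, as it prevents two distinct operations from both writing a slot that is output-only for either; the complementary bookkeeping --- that a change to a genuinely shared in-out slot re-enqueues all adjacent operations --- is already part of the correctness argument underlying Theorem~\ref{the:correctness_incremental}.
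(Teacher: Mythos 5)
Your proposal is correct and takes essentially the same route as the paper: the paper's own proof is a one-sentence version of your reduction, observing that the two structural assumptions ensure the antecedent in the definition of conditional correctness is never violated, so that Theorem~\ref{the:correctness_incremental} applies unchanged to conditionally correct $update$ procedures. Your additional elaboration (output-only slots protected by the no-sharing and no-cached-delta assumptions, shared in-out slots witnessed by pre-delta values, and the need to maintain this as an invariant across iterations of the loop in line~\ref{line:execute_incremental_main_loop}) only spells out what the paper asserts ``follows directly.''
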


\begin{proof}
From the additional assumptions regarding output slots of $G$'s operation nodes, it follows directly that the condition in the definition of conditional correctness is never violated. Thus, the statement from Theorem \ref{the:correctness_incremental} also applies for the case of conditionally correct $update$ functions.
\end{proof}

\subsection{Development with \eGDN{}s}

Since Algorithm \ref{algo:find_incremental_order} considers only the \eGDN{} structure and no concrete slot contents, it can be employed as a tool for statically analyzing \eGDN{}s. In particular, via the algorithm, configurations of slots with modified contents can be analyzed regarding termination of a corresponding \eGDN{} execution. For instance, the algorithm can be used to check whether termination is guaranteed if a specific individual model is modified.

If this is the case for all user-editable models, a conservative approach that always guarantees terminating \eGDN{} executions and correct results while avoiding the exponential effort of executing the analysis for every combination of user-editable models would be enforcing a \textbf{direct propagation policy}. Under this policy, after modifying a single model, the corresponding changes would immediately be propagated to restore consistency. Only after that, the modification of a different model would be permitted.

Furthermore, Algorithm \ref{algo:find_incremental_order} can be adapted to return the set of slots $closure_\Delta(S_i)$ that may be automatically modified by \eGDN{} operations if the \eGDN{} were to be executed via Algorithm \ref{algo:execute_incremental_order} with initially modified slots $S_i$. This enables collaborative development of a network of models managed via an \eGDN{} with guaranteed termination and conflict-free consistency restoration via a \textbf{propagation closure locking policy}. For a set of already modified slots $S_\Delta$, this policy would only allow modification of the contents of another slot $s$ if, for the set $S_\Delta \cup \{s\}$, Algorithm \ref{algo:find_incremental_order} produces an execution order. Furthermore, to guarantee that no user edits are overwritten, the policy would check whether $S_\Delta \cup \{s\} \cap closure_\Delta(S_\Delta \cup \{s\}) = \emptyset$. Note that the restrictions of this policy would also apply in the case where the same user wants to edit the contents of multiple slots.

Since Algorithm \ref{algo:execute_incremental} does not guarantee termination, careful consideration is required if an \eGDN{} cannot be executed via Algorithm \ref{algo:execute_incremental_order}. However, if developers are confident that their \eGDN{} is guaranteed to terminate despite cyclical dependencies at the structural level, Algorithm \ref{algo:execute_incremental} can be used as a fallback option for \eGDN{} execution.

The presented algorithms also enable the treatment of sub-\eGDN{}s as operation nodes of a parent \eGDN{}, as they essentially provide a realization of the required $update$ procedure.

\OUT{
Generally, most of the discussion from Section \ref{sec:execute_batch_discussion} also applies to the incremental case. However, a potential modification of Algorithm \ref{algo:find_incremental_order} that did not make sense in the context of the analogous algorithm for batch execution due to the monotonic growth of the set of potentially populated slots concerns the granularity at which cycles in \eGDN{} execution are detected. Specifically, the algorithm could be extended to not only track simple trigger dependencies between model operations for cycle detection, but could instead consider the entire state of the \eGDN{}, in particular for which slots an operation nodes has cached unhandled deltas. However, while this would potentially allow a more fine-grained search for an execution order and still guarantee termination, it would also result in a potentially exponential runtime complexity.

Another difference of note concerns the additional restriction of the \eGDN{} structure required for conditionally correct $update$ procedures. Since in most cases, the efficient incremental implementation of model operations is only conditionally correct, the corresponding restriction will usually apply. Effectively this forbids slots that are an output slot of an operation node without also being an input slot of the same node (1) to be the output slot of any other operation node and (2) to be modified by a user.

The value of Algorithm \ref{algo:execute_incremental_order} as a static analysis procedure is arguably higher in the incremental case, as it can be used to check whether for a set of slots with contents modified by potentially multiple different users, a terminating \eGDN{} execution can still be guaranteed. Similarly, with minor modification the algorithm could be employed to compute the set of slots whose contents are automatically modified $closure_{update}(G, S_i)$ by the execution of an \eGDN{} $G$ for a set of slots with manually modified contents $S_i$. This would again allow checking whether the execution of $G$ may potentially overwrite manual changes and issue a warning if $S_i \cap closure_{update}(G, S_i) \neq \emptyset$.

In either application scenario, a \emph{propagation closure locking policy} could be employed that, for a set of slots with already manually modified contents $S_i$, forbids manual modification of another slot $s$ if the set of slots with manually modified contents $S_i \cup \{s\}$ would violate the desired property.

\color{green}

\subsection{Validity/Consistency}

\todo[inline]{ HG: discussion concerning validity/consistency here (or already for the batch case?) for the DAG case (no cycles in the write arcs graph)}

1) a single model/partition can have only either one write arc or one or multiple sync arcs

argument: if not, it would not be possible for the write arc to reestablish validity/consistency if the model/partition has been changed via another arc

argument: if not, it would not be possible for the write arc to reestablish validity/consistency if the model/partition has been changed manually

corollary: a model/partition with one write arc cannot be changed manually

2) there is always at most one path of sync arcs between two models/partitions

argument: if not, a change for the start of the two path would trigger two possibly different updates for the end of the path

Theorem 1: If 1) and 2) holds a single manual change always results in a valid/consistent update for all models/partitions

To also handle multiple manual updates, we can compute the necessary updates for a change as follows:

The $UPDATES(c)$ is the set of models/partitions that would be updated if the single change $c$ would be propagated.

If we have no change yet, due to Theorem 1 we know that the any permitted single manual change will lead to a valid/consistent update.
Starting with a set of updates $C$ that lead to a valid/consistent update and $UPDATES(C) = \bigcup_{c \in C} UPDATES(c)$, we can than ask under which circumstances an additional manual change $c'$ and the resulting extended set $C \cup \{ c' \}$ leads to a valid/consistent update. 

Theorem 2:
Given a set of updates $C$ that lead to a valid/consistent update and 
an additional manual change $c'$, the set $C \cup \{ c' \}$ leads to a valid/consistent update iff:
$$
  UPDATES(C) \cap UPDATES(c') = \emptyset
.
$$

\section{General Case}

\subsection{Termination}

\todo[inline]{ HG: discussion concerning termination here (or already for the batch case?) for the general case (also possibly cycles in the write arcs graph)}

\subsection{Validity/Consistency}

\todo[inline]{ HG: discussion concerning validity/consistency here (or already for the batch case?) for the general case (no possibly cycles in the write arcs graph)}

observation: need for reestablishing validity/consistency might be fulfilled by cycles in the write arc graph.But this requires that a fix-point is reached such that executions the model operations that realize the write arcs do not lead to any change any more ...

condition 1) and 2) are still necessary, as the arguments are still apply

in addition, the execution must terminate

Theorem 1': If 1) and 2) holds a single manual change results in a valid/consistent update for all models/partitions if the execution terminates

Theorem 2':
Given a set of updates $C$ that lead to a valid/consistent update if it terminates and
an additional manual change $c'$, the set $C \cup \{ c' \}$ leads to a valid/consistent update iff:
$$
  UPDATES(C) \cap UPDATES(c') = \emptyset
$$
and it terminates.

\section{Application}

\subsection{DAG Case}

Theorem 1: immediate propagation excludes validity/consistency problems for a single manual change $\Rightarrow$ \emph{direct propagation policy}

Theorem 2: locking all models/partitions affected by the propagation of the current change set would only allow additional manual changes that still excludes validity/consistency problems $\Rightarrow$ \emph{locking propagation closure policy}

\subsection{General Case}

Theorem 1': immediate propagation excludes validity/consistency problems if termination succeeds $\Rightarrow$ \emph{direct propagation policy}

Theorem 2': locking all models/partitions affected by the propagation of the current change set would only allow additional manual changes that still excludes validity/consistency problems if termination succeeds
$\Rightarrow$ \emph{locking propagation closure policy}

\color{black}
}


\section{Implementation}\label{sec:egdns_implementation}

We have prototypically implemented a number of concrete example operation node types for the construction of \eGDN{}s for usage in the context of the Eclipse Modeling Framework (EMF) \cite{emf}. In addition to listing the implemented operations' names, Table \ref{tab:egdn_nodes} also provides brief descriptions of their behavior. Table \ref{tab:egdn_nodes_properties} characterizes our implementations in terms of the properties defined in this report.

\textbf{Non-recursiveness: }The $update$ procedure of TGG Snychronisation operations is non-recursive if the slots containing source, target, and correspondence model are distinct. The non-recursiveness of composite nodes depends on the exact composition of the sub-\eGDN{}. All other nodes' $update$ procedures are only guaranteed to be non-recursive if their input and output slots are distinct. The checkmark symbol \checkmark indicates non-recursive $update$ procedures under this assumption.

\textbf{Potential Population/Update Directions: } The potential update directions of the TGG Synchronization ($\leftrightarrow$) can be characterized as follows (under the assumption of distinct slots for source, target, and correspondence model): If the set of considered input slots is empty, no modifications will be made to the contents of any output slot. If the set of considered input slots contains only the source model, the operation will only modify the target model and correspondence model and vice-versa. In all other cases, all models may be modified. The potential update directions of composite nodes are determined by the exact structure of the sub-\eGDN{}. All other nodes may modify the contents of all of their output slots for any set of considered input slots. The potential update direction function $dir_\Delta$ of all example nodes is union monotonic.

\textbf{Correctness: }The $update$ procedures of all operation implementations are only conditionally correct. Effectively, this means that operations may not share output slots and no user edits are allowed to output slots of operation nodes, unless the shared or edited output slot is also simultaneously an input slot of the concerned operation nodes.

\textbf{Incrementality: }The checkmark symbol \checkmark indicates a fully incremental $update$ procedure under the assumption of ideal data structures. Also operations which are listed as not fully incremental support incremental execution to some extent. The degree of incrementality depends on the operation and its concrete inputs. Naturally, our implementation of the Expression node is only fully incremental if the evaluation of the considered expression has a runtime complexity in $O(1)$. In the case of the Pattern Matching and TGG Synchronization node, a fully incremental execution can be achieved for certain input models and patterns respectively TGGs. The degree of incrementality of the execution of an \eGDN{} or sub-\eGDN{} depends on which slots are designated the \eGDN{}'s interface slots, as well as the contained operation nodes and their composition. While the Group Expression node also has a partially incremental update procedure, due to the handling of collections via the employed OCL-interpreter, a fully incremental execution is usually not possible.

As interfaces between these operations, that is, slot nodes, our implementation employs regular EMF models for model slots and hash-based indices for assignment slots. While the choice of hash-based indices over array-based indices means that the theoretically fully incremental operation implementations may not be fully incremental in conjunction with our slot implementations, hash-based data structures are usually preferable in practice due to their lower memory footprint and exhibit acceptable performance in most scenarios.

\begin{table}
\centering
\begin{tabularx}{\textwidth}{| l | X |}
\hline
\textbf{Name} & \textbf{Description}\\
\hline
\multicolumn{2}{| l |}{\textbf{RETE Nodes}} \\
\hline
Node Input & extracts individual nodes of a given type from a model \\
\hline
Edge Input & extracts individual edges of a given type from a model \\
\hline
Join & performs a natural join of assignments stored in two input assignment slots \\
\hline
Anti-Join & performs an anti-join of a left input assignment slot against a right input assignment slot \\
\hline
\multicolumn{2}{| l |}{\textbf{GDN Nodes}} \\
\hline
Pattern Matching & finds matches for a given pattern into a model; supports additional constraints formulated in OCL \cite{ocl}; supports constraints regarding the existence/absence of matches for other patterns via dependencies to related assignment slots \\
\hline
\multicolumn{2}{| l |}{\textbf{Property Computation Nodes}} \\
\hline
Expression & computes the value of an OCL \cite{ocl} expression for individual assignments \\
\hline
Group Expression & computes the value of an OCL \cite{ocl} expression for collections of assignments grouped by certain variables \\
\hline
Group Count & counts the number of assignments in collections of assignments grouped by certain variables \\
\hline
Group Sum & computes the sum of numerical values of a specific variable in collections of assignments grouped by certain variables \\
\hline
\multicolumn{2}{| l |}{\textbf{Transformation Nodes}} \\
\hline
TGG Sync. ($\rightarrow$) & performs unidirectional model synchronization of changes from a source to a target and associated correspondence model via a triple graph grammar \cite{Sch94_2_ref} \\
\hline
TGG Sync. ($\leftrightarrow$) & performs bidirectional model synchronization of changes between a source, target, and associated correspondence model via a triple graph grammar \cite{Sch94_2_ref} \\
\hline
\multicolumn{2}{| l |}{\textbf{Composite Nodes}} \\
\hline
\eGDN{} & executes a sub-\eGDN{} to update the contents of exposed slots via Algorithm \ref{algo:execute_incremental_order} or Algorithm \ref{algo:execute_incremental} \\
\hline
\end{tabularx}
\caption{Example \eGDN{} node types}\label{tab:egdn_nodes}
\end{table}

\begin{table}
\centering
\begin{tabular}{| l | c | c | c | c |}
\hline
\textbf{Name} & \textbf{Non-recursive} & \textbf{Directions} & \textbf{Correct} & \textbf{Incremental}\\
\hline
\multicolumn{5}{| l |}{\textbf{RETE Nodes}} \\
\hline
Node Input & \checkmark & all & cond. & \checkmark \\
\hline
Edge Input & \checkmark & all & cond. & \checkmark \\
\hline
Join & \checkmark & all & cond. & \checkmark \\
\hline
Anti-Join & \checkmark & all & cond. & \checkmark \\
\hline
\multicolumn{5}{| l |}{\textbf{GDN Nodes}} \\
\hline
Pattern Matching & \checkmark & all & cond. & (\checkmark) \\
\hline
\multicolumn{5}{| l |}{\textbf{Property Computation Nodes}} \\
\hline
Expression & \checkmark & all & cond. & (\checkmark) \\
\hline
Group Expression & \checkmark & all & cond. & $\sim$ \\
\hline
Group Count & \checkmark & all & cond. & \checkmark \\
\hline
Group Sum & \checkmark & all & cond. & \checkmark \\
\hline
\multicolumn{5}{| l |}{\textbf{Transformation Nodes}} \\
\hline
TGG Sync. ($\rightarrow$) & \checkmark & all & cond. & (\checkmark) \\
\hline
TGG Sync. ($\leftrightarrow$) & \checkmark & * & cond. & (\checkmark) \\
\hline
\multicolumn{5}{| l |}{\textbf{Composite Nodes}} \\
\hline
\eGDN{} & ? & ? & cond. & (\checkmark) \\
\hline
\end{tabular}
\caption{Properties of example \eGDN{} node types}\label{tab:egdn_nodes_properties}
\end{table}

Figure \ref{fig:sample_egdn_simple} shows a more complex version of the example \eGDN{} from Figure \ref{fig:sample_egdn_simple} that can be realized using the introduced example \eGDN{} nodes from Table \ref{tab:egdn_nodes}. The transformation from class diagram to abstract syntax graph is now concretely realized via a unidirectional TGG Synchronization. The query operation that was previously represented by a single query node is decomposed into a complex network of subqueries. This sub-\eGDN{} consists of two Pattern Matching nodes labeled ``$x \rightarrow y$'' that look for primitive patterns consisting of a single edge, one Group Count and one Group Sum node visualized as nodes labeled ``COUNT (X)'' respectively ``SUM (X)'', and a Join node labeled ``$\bowtie$''. Alternatively, the Pattern Matching nodes could also be realized as Edge Inputs.

\begin{figure}
\includegraphics[width=\textwidth]{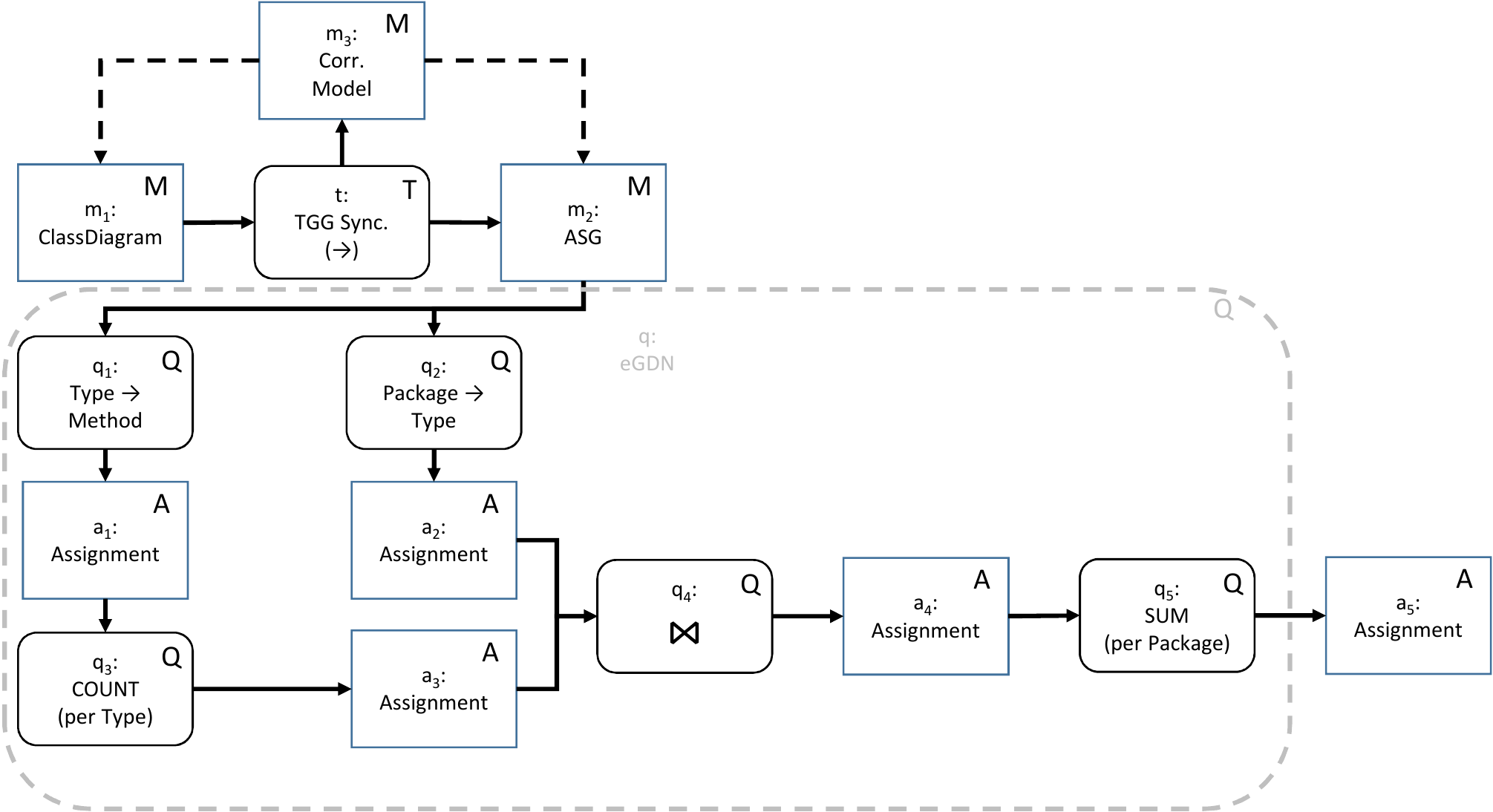}
\caption{Complex example \eGDN{}} \label{fig:sample_egdn_complex}
\end{figure}


\section{Evaluation}\label{sec:evaluation}

In this chapter, we report on an initial empirical evaluation based on our prototypical implementation. Moreover, we describe how \eGDN{}s can be employed in a typical application scenario, evaluating the developed approach with respect to the requirements from Chapter \ref{sec:requirements}.

\subsection{Evaluation of Performance}

For an initial empirical evaluation of the proposed approach, we perform an experiment inspired by an application scenario from the software development domain, where an evolving class diagram serves as the basis for generating object-oriented code, which is subsequently analyzed to compute code metrics.

Therefore, we have implemented a simple model transformation from Ecore models \cite{emf} to Java abstract syntax graphs \cite{bruneliere2010modisco} via a triple graph grammar. For each class in the class diagram, the transformation creates an interface in the Java abstract syntax graph in a first package, along with an implementation class in a second package. Also, for each attribute of a class in the class diagram, the transformation creates a corresponding field and associated getter and setter methods in the corresponding interface and class in the abstract syntax graph.

In addition, we have realized a model query composed of several subqueries, which counts the number of methods in all types of a Java package. The transformation and query are integrated into an \eGDN{}, which yields the structure displayed in Figure \ref{fig:sample_egdn_complex}.

Using our prototypical implementation, which is available under \citeOwn{implementation}, we assign a real-world Ecore model \cite{bruneliere2010modisco} to the class diagram model slot and perform an initial population of the remaining slots via Algorithm \ref{algo:execute_incremental_order}. To evaluate the scalability of the \eGDN{}, we then apply a number of synthetic updates to the model in the class diagram slot, each of which adds an attribute to each class in the model, and measure the time required for the \eGDN{} to process each such update via Algorithm \ref{algo:execute_incremental_order} (``INCREMENTAL''). We compare this to a baseline, where instead, we perform a full recomputation of both the model transformation's and the query's results via non-incremental implementations of the corresponding operations (``BATCH'').\footnote{All experiments were performed on a Linux SMP Debian 4.19.67-2 machine with Intel Xeon E5-2630 CPU (2.3\,GHz clock rate) and 386\,GB system memory running OpenJDK version 11.0.6. Reported execution time measurements correspond to the mean execution time of 10 runs of the respective experiment.}

Figure \ref{fig:execution_times} displays the execution times for the first 30 updates. After an initial phase comprising the first 5 updates, where execution time decreases from update to update, the execution time for processing an update to the class diagram via the strategy INCREMENTAL does not change much. In particular, there does not seem to be any trend of increasing execution time related to the growth of the class diagram as additional updates are being performed. In contrast, the execution time of BATCH increases from update to update as the class diagram grows. While it starts out similar to the execution time of INCREMENTAL (larger by factor 1.6), by update 30 the execution time of BATCH has increased to factor 80 compared to the execution time of INCREMENTAL.

\begin{figure}
\centering
\includegraphics[width=0.75\textwidth]{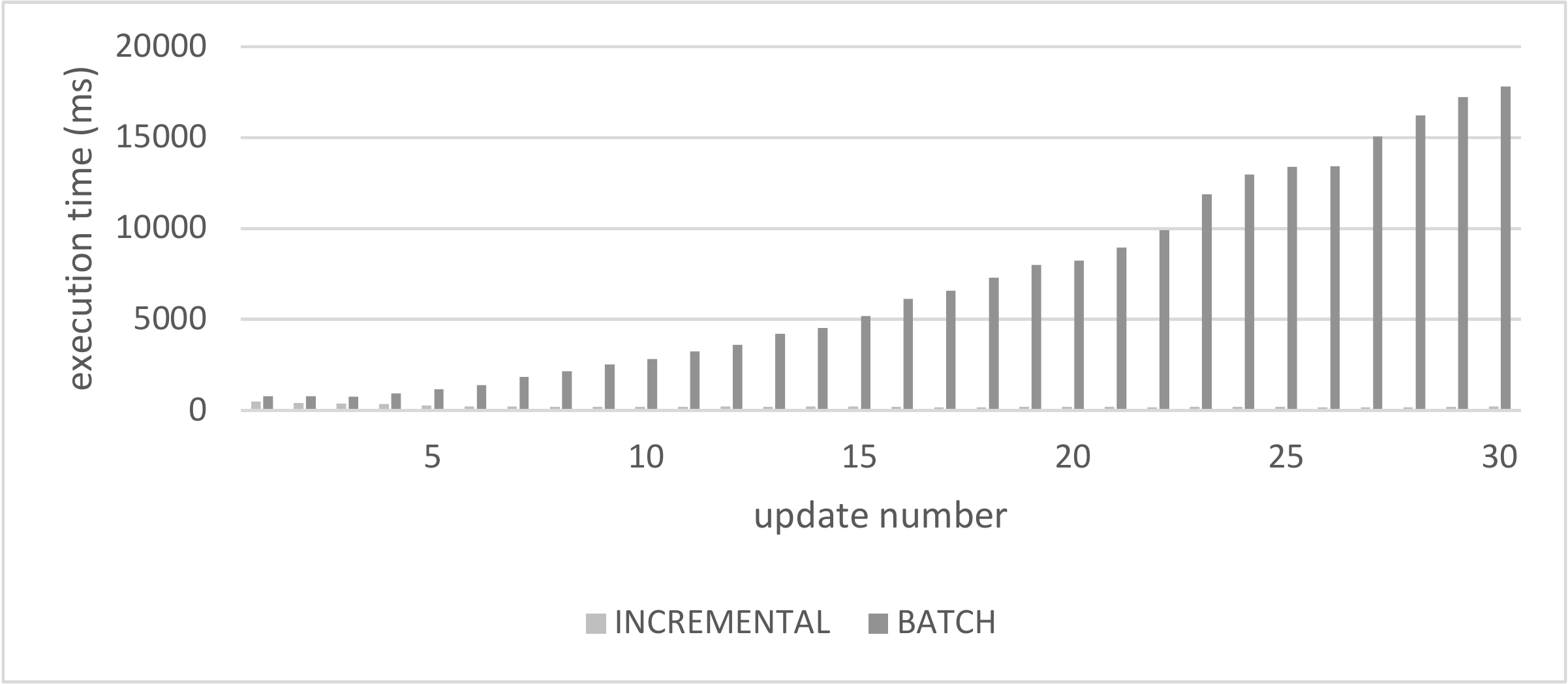}
\caption{Execution time measurements for class diagram updates}\label{fig:execution_times}
\end{figure}

The measurements thus indicate that incremental \eGDN{} execution via INCREMENTAL efficiently handles updates to the class diagram, in the sense that execution time only seems to depend on the actual changes rather than the size of the model, indeed affording incrementality. Therefore, \eGDN{}s seem to constitute a suitable formalism for a scalable, modular and incremental realization of networks of model operations for this scenario. The decreasing execution times per update during the initial phase of the experiment can likely be attributed to warming-up effects of the Java virtual machine.

The internal validity of our results is mostly threatened by unexpected behavior of the Java virtual machine, most notably garbage collection. To mitigate such effects, the reported execution time measurements were obtained as the arithmetic mean of multiple runs of the experiment, with the standard deviation of the overall execution time always below 5\% of the overall execution time.

The synthetic updates used in the experiment pose a threat to external validity. However, the experiment is inspired by a real-world application scenario and uses a real-world model as its basis and demonstrates the applicability of the \eGDN{} approach in this scenario. The synthetic updates only serve the purpose of allowing a systematic evaluation of our technique's scalability. We hence do not make any quantitative claims regarding our approach in practical application scenarios, but merely consider our experimental results as an indicator for the presented approach's potential. We furthermore do not make claims regarding the generalizability of the approach to other application domains, which would require further evaluation and is left for future work.

\subsection{Evaluation of Applicability} \label{sec:evaluation_applicability}

In order to investigate the applicability of the developed technique, we consider the following extended example scenario that requires global model management: A class diagram, adhering to a metamodel similar to the one displayed in Figure \ref{fig:sample_graph}, is used to model the structure of a software system under development by means of classes contained in packages. Classes may contain methods, which may in turn reference classes as the method's return type. OCL expressions in a separate model are used to describe the behavior of some of the class diagram's methods. Therefore, the OCL model has its own representation of types corresponding to the classes in the class diagram. This correspondence is captured by means of a linking model, which simply contains dedicated link vertices. A link vertex can either have edges to a class from the class diagram and the corresponding type from the OCL model or edges to a method in the class diagram and the corresponding expression, that is, implementation, in the OCL model.

We consider the following use cases for this setup:

\begin{itemize}
\item \textbf{Consistency Checking:} The developers want to run automatic and incremental consistency checks that verify that the return type of a method in the class diagram matches the corresponding type of the method's OCL-implementation. The developed consistency check should also work for similar setups that use a different expression language than OCL for the method implementations. An implementation for this use case thus requires a solution satisfying the requirements \requirement{1.1}, \requirement{1.2}, \requirement{2.2}, and \requirement{2.3}.
\item \textbf{Code Generation:} The developers want to automatically and incrementally generate Java code in the form of an ASG from the class diagram. In addition, Java implementations for the class diagram's methods should be generated from the methods' OCL implementations. In the end, the resulting Java code fragments for the two models should be integrated and analyzed for some code metrics. An implementation for this use case thus requires a solution satisfying the requirements \requirement{2.1}, \requirement{2.2}, and \requirement{2.3}.
\item \textbf{Megamodel Reuse:} After developing the automatic consistency checking and code generation, the developers want to reuse the same two operations in another project with a similar set of models. An implementation for this use case thus requires a solution satisfying the requirements \requirement{3.1.2}, \requirement{3.2}, \requirement{3.3}, and \requirement{3.4}.
\end{itemize}

In order to allow global model management for all three use cases, a solution also has to enable the modeling of a network of different kinds of model operations over a set of potentially integrated models, that is, a solution has to satisfy requirement \requirement{3.1.1}.

Figures \ref{fig:sample_egdn_consistency}, \ref{fig:sample_egdn_codegen_and_analysis}, and \ref{fig:sample_egdn_megamodel_modules} visualize example \eGDN{} implementations for the \textbf{Consistency Checking}, \textbf{Code Generation}, and \textbf{Megamodel Reuse} use cases, respectively.

As displayed in Figure \ref{fig:sample_egdn_consistency}, the \textbf{Consistency Checking} use case is realized via four Pattern Matching query nodes that extract certain simple patterns from the base models and make them accessible in a generalized format via assignment slots. Then, a complex query operation, which is composed of three Join query operations and an Anti-Join query operation (labelled $\rhd$), realizes the actual consistency check by finding all the combinations of a method from the class diagram, its implementation from the OCL model, and the associated return class respectively expression type, where the return class and expression type do not correspond. Thus, the \eGDN{}-based approach in this case fulfills the requirements \requirement{1.1} and \requirement{2.2}, as it implements a consistency check over a set of models integrated via integration links. Furthermore, the resulting implementation is reusable for different modeling languages that offer similar functionality via the generic interface provided by the assignment slots $a_1$, $a_2$, $a_3$, and $a_7$, satisfying requirement \requirement{1.2}. The example \eGDN{} also demonstrates how more complex model operations can be composed from simpler operations, satisfying requirement \requirement{2.1}, and provides an incremental execution scheme for these operations, satisfying requirement \requirement{2.3}. In particular, via Algorithm \ref{algo:find_incremental_order}, it can be verified that Algorithm \ref{algo:execute_incremental_order} provides a means of executing the \eGDN{} that guarantees both correct results and termination for changes to any combination of the three base models.

The \eGDN{} shown in Figure \ref{fig:sample_egdn_codegen_and_analysis} realizes the \textbf{Code Generation} use case via a combination of two TGG Synchronizations that translate the class diagram and OCL model into Java ASGs. The two Java models are integrated via a dedicated linking model, which is produced by a unidirectional model transformation from the original linking model and the correspondence models created by the TGG Synchronizations. Finally, query operations can be executed over the ASGs to compute code metrics. Using Algorithm \ref{algo:find_incremental_order} to analyze the \eGDN{}, it can be determined that terminating execution via Algorithm \ref{algo:execute_incremental_order} can be guaranteed for changes to any combination of the three base models. This example shows how the \eGDN{} provides a unified, modular notion of model operations along with an incremental execution scheme and demonstrates the composition of model operations, satisfying requirements \requirement{2.1}, \requirement{2.2}, and \requirement{2.3}.

Together, the \eGDN{}s in Figure \ref{fig:sample_egdn_consistency} and \ref{fig:sample_egdn_codegen_and_analysis} also illustrate how \eGDN{}s can be used as a megamodeling language, supporting different kinds of model operations, including model properties (like the metrics computed in the Code Generation use case), model consistency (like the consistency condition in the Consistency Checking use case), and model transformation and synchronization (like the transformation and synchronizations in the Code Generation use case). It also shows how integration views (like the cross-model consistency query results in slot $a_8$ in Figure \ref{fig:sample_egdn_consistency}) and traceability links (like the correspondence models produced by the TGG Synchronizations) can be represented in the language. While not present in the example \eGDN{}s, the class diagram and OCL metamodel are models themselves and could simply be made explicit by including them in dedicated model slots. Well-formedness conditions for metamodels or regular models can be realized and treated as regular query operations. \eGDN{}s thus satisfy the requirement \requirement{3.1.1}.

Finally, the \eGDN{} realization of the \textbf{Megamodel Reuse} use case in Figure \ref{fig:sample_egdn_megamodel_modules} considers the \eGDN{}s from Figure \ref{fig:sample_egdn_consistency} and \ref{fig:sample_egdn_codegen_and_analysis} as operation nodes in an overarching \eGDN{}. This exemplifies how \eGDN{}s offer modularity and incrementality at the megamodel level by considering sub-\eGDN{}s as regular operations that can be executed via the general execution scheme, which also permits the accumulation of several changes before execution. The example thus illustrates the satisfaction of requirements \requirement{3.1.2}, \requirement{3.2}, and \requirement{3.4}. The \eGDN{} also demonstrates how slots act as interfaces for these megamodel operations, satisfying requirement \requirement{3.3}.

Thus, \eGDN{}s can be employed to realize the functionality required by the three example use cases, satisfying the requirements regarding model operations, modeling languages integration, and megamodels introduced in Section \ref{sec:requirements} in this scenario.

Table \ref{tab:requirements_coverage} summarizes the coverage of the requirements by the example use cases and the \eGDN{} approach, with ``$\circ$'' denoting that the realization of a use case relates to a requirement and ``\checkmark'' indicating that a requirement is satisfied by \eGDN{}s in this scenario.

\begin{figure}
\centering
\includegraphics[width=\textwidth]{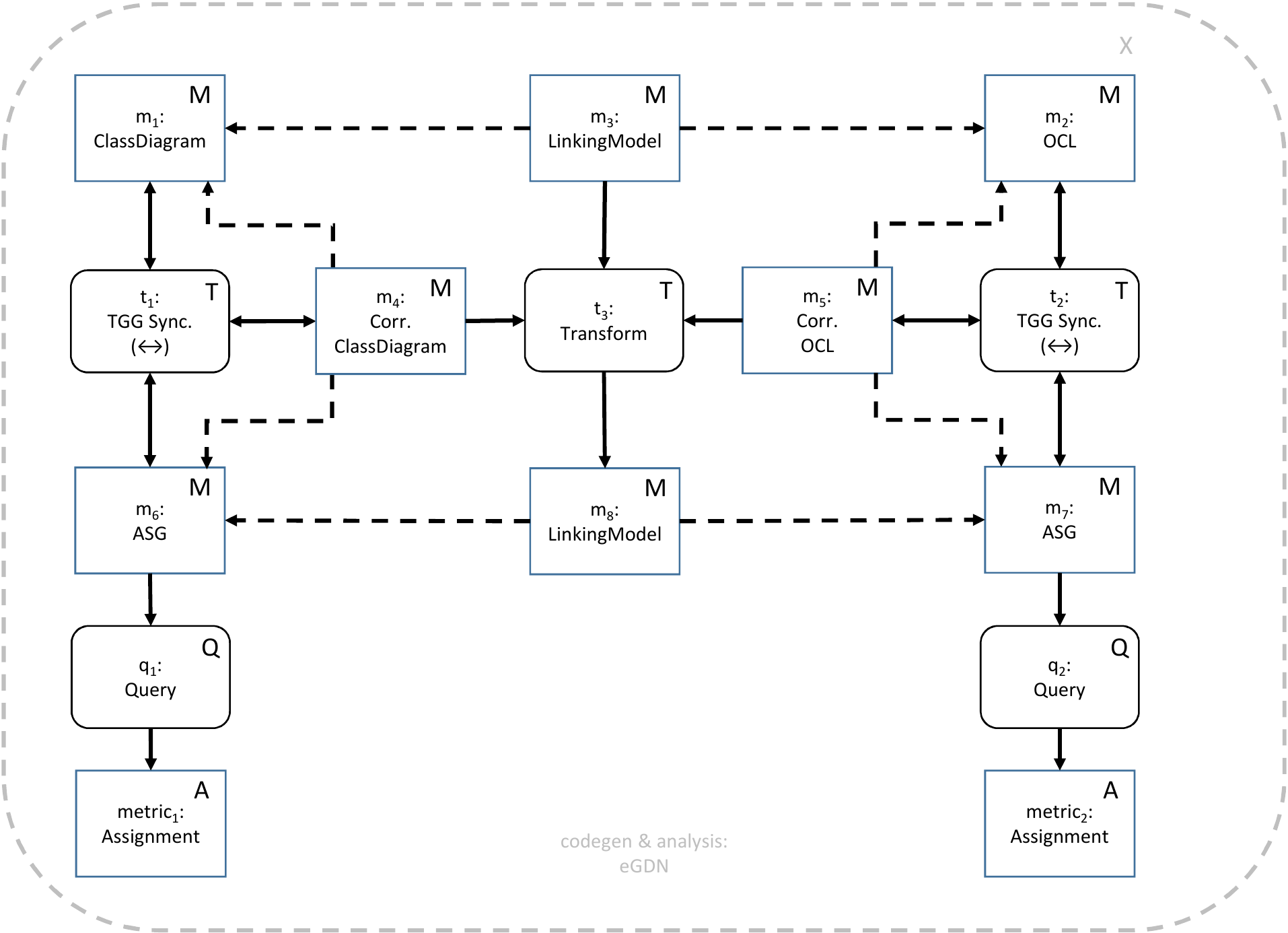}
\caption{Sample \eGDN{} realizing the Code Generation use case}\label{fig:sample_egdn_codegen_and_analysis}
\end{figure}

\begin{figure}
\centering
\includegraphics[width=\textwidth]{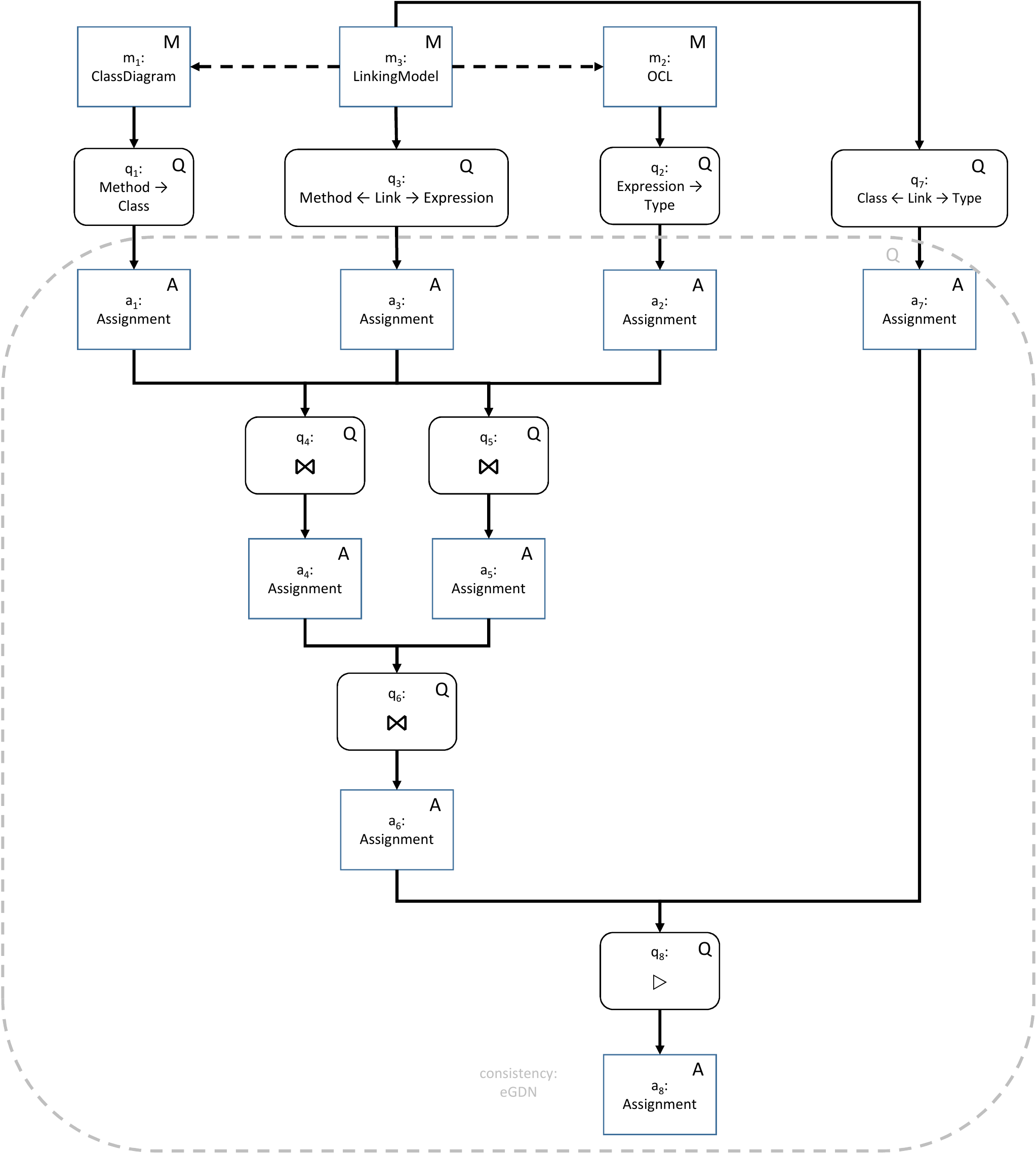}
\caption{Sample \eGDN{} realizing the Consistency Checking use case}\label{fig:sample_egdn_consistency}
\end{figure}

\begin{figure}
\centering
\includegraphics[width=\textwidth]{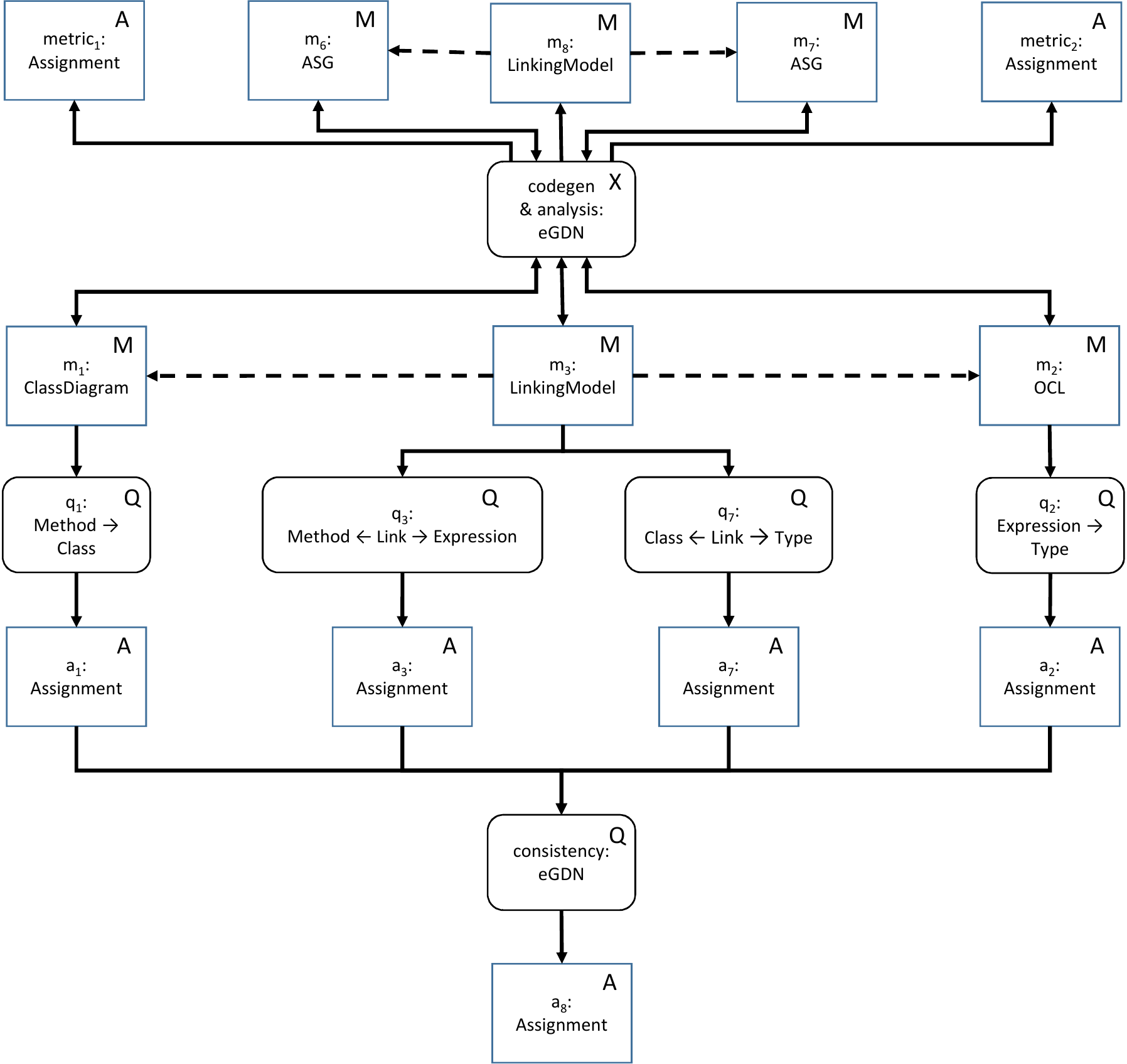}
\caption{Sample \eGDN{} realizing the Megamodel Reuse use case}\label{fig:sample_egdn_megamodel_modules}
\end{figure}

\begin{table}
\centering
\begin{tabular}{ l | c  c  c  c }
 & \rot{Consistency Checking} & \rot{Code Generation} & \rot{Megamodel Reuse} & \rot{\eGDN{}s} \\
\hline
\requirement{1.1}: modeling languages integration & $\circ$ & & & \checkmark \\
\requirement{1.2}: interfaces for embedding of modeling languages & $\circ$ & & & \checkmark \\
\requirement{2.1}: composition of model operations & & $\circ$ & & \checkmark \\
\requirement{2.2}: model operations over integrated models & $\circ$ & $\circ$ & & \checkmark \\
\requirement{2.3}: execution scheme for model operations & $\circ$ & $\circ$ & & \checkmark \\
\requirement{3.1.1}: megamodeling language & $\circ$ & $\circ$ & $\circ$ & \checkmark \\
\requirement{3.1.2}: megamodel operation module concept & & & $\circ$ & \checkmark \\
\requirement{3.2}: robust megamodel execution scheme & & & $\circ$ & \checkmark \\
\requirement{3.3}: megamodel interfaces & & & $\circ$ & \checkmark \\
\requirement{3.4}: asynchronous megamodel execution scheme & & & $\circ$ & \checkmark \\
\end{tabular}
\caption{Coverage of requirements from Chapter \ref{sec:requirements}} \label{tab:requirements_coverage}
\end{table}

\newpage


\section{Conclusion}\label{sec:conclusion}

In this report, we have developed a further generalization of the GDN mechanism called \eGDN{}s, which enables the modular and incremental construction and execution of complex networks of model operations, including model properties, model consistency, model transformation and model synchronization. In addition to a formal definition of \eGDN{}s, we have provided incremental algorithms for their execution. Moreover, we have presented a number of example \eGDN{} nodes that we have prototypically implemented in order to perform an initial empirical evaluation of the approach regarding scalability. Our experiments, which are based on an application scenario from the software development domain, indicate that the introduced technique can be employed to realize efficient Global Model Management. Moreover, we have conceptually evaluated our approach against identified requirements of global model management solutions.

In future work, we plan to perform a more extensive evaluation with respect to both expressiveness and performance of \eGDN{}s in real application scenarios. This may also involve the implementation of additional types of \eGDN{} nodes and may ultimately result in the implementation of true tool support for the specification and execution of \eGDN{}s. Furthermore, we will investigate how the presented concepts can be extended to the case of evolving modeling landscapes that consist of multiple distinct versions. We will also explore how such an extension may help alleviate problems such as potential infinite loops or overwriting of user edits in \eGDN{} execution via the derivation of additional versions.


\subsection*{Acknowledgements}

This work was developed mainly in the course of the project modular and incremental Global Model Management (project number 336677879), which is funded by the Deutsche Forschungsgemeinschaft.

\printbibliography
\end{document}